\documentclass[reqno]{amsart}
\usepackage{amsmath,amsthm,amssymb,mathrsfs,stmaryrd,eucal,mathbbol} 
\usepackage[all,cmtip]{xy}
\usepackage{hyperref}
\usepackage{tikz}

\numberwithin{equation}{section}

\addtolength{\hoffset}{-0.5cm}
\addtolength{\voffset}{-0.5cm}
\addtolength{\textwidth}{1cm}
\addtolength{\textheight}{1cm}

\usepackage{pgfplots}
\usepgfplotslibrary{dateplot}

\newcommand\void[1]{}

\newcommand{\C}{\mathbb{C}}

\newcommand{\Z}{\mathbb{Z}}

\newcommand{\CA}{\mathcal{A}}
\newcommand{\CB}{\mathcal{B}}
\newcommand{\CC}{\mathcal{C}}
\newcommand{\CD}{\mathcal{D}}
\newcommand{\CE}{\mathcal{E}}

\newcommand{\CL}{\mathcal{L}}
\newcommand{\CM}{\mathcal{M}}
\newcommand{\CN}{\mathcal{N}}

\newcommand{\CP}{\mathcal{P}}

\newcommand{\CR}{\mathcal{R}}

\newcommand{\CU}{\mathcal{U}}
\newcommand{\CV}{\mathcal{V}}
\newcommand{\CW}{\mathcal{W}}
\newcommand{\CX}{\mathcal{X}}
\newcommand{\CY}{\mathcal{Y}}

\newcommand{\FZ}{\mathfrak{Z}}

\newcommand{\Cat}{\mathrm{Cat}}
\newcommand{\Hilb}{\mathrm{Hilb}}

\newcommand{\Vect}{\mathrm{Vec}}
\newcommand{\op}{\mathrm{op}}
\newcommand{\rev}{\mathrm{rev}}

\newcommand{\TO}{\mathcal{QL}}
\newcommand{\TOsk}{\mathcal{QL}_{\mathrm{sk}}}
\newcommand{\CXsk}{\mathcal{X}_{\mathrm{sk}}}
\newcommand{\CXlqs}{\mathcal{X}_{\mathrm{lqs}}}

\newcommand{\KarCat}{\mathrm{KarCat}}
\newcommand{\Kar}{\mathrm{Kar}}

\newcommand{\Bord}{\mathbf{Bord}}

 \DeclareMathOperator{\Hom}{Hom}
 \DeclareMathOperator{\End}{End}

 \DeclareMathOperator{\Id}{Id}

 \DeclareMathOperator{\chara}{char}

 \DeclareMathOperator{\ev}{ev}
 
 \DeclareMathOperator{\one}{\mathbf1}

 \DeclareMathOperator{\Fun}{Fun}

 \DeclareMathOperator{\Mod}{Mod}
 \DeclareMathOperator{\LMod}{LMod}
 \DeclareMathOperator{\RMod}{RMod}
 \DeclareMathOperator{\BMod}{BMod}
 
 \DeclareMathOperator{\Mor}{Mor}

 \DeclareMathOperator{\Rep}{Rep}

\newtheorem{thm}{Theorem}[section]
\newtheorem{lem}[thm]{Lemma}
\newtheorem{prop}[thm]{Proposition}
\newtheorem{cor}[thm]{Corollary}

\newtheorem{conj}[thm]{Conjecture}

\theoremstyle{definition}
\newtheorem{defn}[thm]{Definition}

\newtheorem{exam}[thm]{Example}
\newtheorem{rem}[thm]{Remark}

\newtheorem{hyp}[thm]{Hypothesis}
\theoremstyle{remark}

\newcommand\arXiv[1]{\href{http://arxiv.org/abs/#1}{arXiv:#1}}

\newcommand\condense{\mathrel{\,\hspace{.75ex}\joinrel\rhook\joinrel\hspace{-.75ex}\joinrel\rightarrow}}

\begin{document}

\title{Categories of quantum liquids I}
\maketitle

\begin{center}
{\large
Liang Kong$^{a,b,c}$,\,
Hao Zheng$^{a,c,d,e,f}$\,
~\footnote{Emails:
{\tt  kongl@sustech.edu.cn, haozheng@mail.tsinghua.edu.cn}}}
\\[1em]
{\small $^a$ Shenzhen Institute for Quantum Science and Engineering, \\
Southern University of Science and Technology, Shenzhen 518055, China
\\[0.5em]
$^b$ International Quantum Academy,  Shenzhen 518048, China
\\[0.5em]
$^c$ Guangdong Provincial Key Laboratory of Quantum Science and Engineering, Southern University of Science and Technology, Shenzhen 518055, China
\\[0.5em]
$^d$ Institute for Applied Mathematics, Tsinghua University, Beijing 100084, China
\\[0.5em]
$^e$ Beijing Institute of Mathematical Sciences and Applications, Beijing 101408, China
\\[0.5em]
$^f$ Department of Mathematics, Peking University, Beijing 100871, China
}
\end{center}

\vspace{0.1cm}
\begin{abstract}
We develop a mathematical theory of separable higher categories based on Gaiotto and Johnson-Freyd's work on condensation completion. Based on this theory, we prove some fundamental results on $E_m$-multi-fusion higher categories and their higher centers. We also outline a theory of unitary higher categories based on a $*$-version of condensation completion. After these mathematical preparations, based on the idea of topological Wick rotation, we develop a unified mathematical theory of all quantum liquids, which include topological orders, SPT/SET orders, symmetry-breaking orders and CFT-like gapless phases. We explain that a quantum liquid consists of two parts, the topological skeleton and the local quantum symmetry, and show that all $n$D quantum liquids form a $*$-condensation complete higher category whose equivalence type can be computed explicitly from a simple coslice 1-category.
\end{abstract}

\tableofcontents

\section{Introduction} \label{sec:intro}

\subsection{Motivations}
In recent years, the study of topological orders has become one of the most active fields of research in condensed matter physics and mathematical physics (see recent reviews \cite{Wen17,Wen19} and references therein). Topological orders, a notion which was first introduced by Wen \cite{Wen90}, are certain gapped quantum phases often characterized by their internal structures and properties, such as the robust ground state degeneracy and topological defects. A modern point of view from mathematics, however, suggests that a rather complete understanding of a topological order can be achieved only by studying the category of all topological orders. Throughout this work, $n$D represents the spacetime dimension. 

\smallskip
The study of the categories of topological orders in all dimensions was initiated in \cite{KWZ15}. But a couple of fundamental ingredients were missing there. 
\begin{enumerate}
\item The first missing ingredient is a proper definition of a multi-fusion $n$-category for $n>1$. In 2018, Douglas and Reutter found the proper definition for $n=2$ cases in \cite{DR18}. In 2020, Johnson-Freyd further generalized the definition to $n>2$ cases in \cite{JF20},  a work which was based on an earlier and important work by Gaiotto and Johnson-Freyd on the so-called `condensation completion' or `Karoubi completion' in higher categories \cite{GJF19}. In this work, we introduce the notion of a unitary (multi-fusion) $n$-category based on a $*$-version of the condensation completion. 

\item The second missing ingredient is a theory of gapless defects. Since many topological orders have topologically protected gapless boundaries, the category of topological orders without gapless defects is incomplete (see \cite{KLWZZ20b} for a recent discussion). A unified mathematical theory of gapped and gapless boundaries of 3D topological orders was developed recently in \cite{KZ18b,KZ20,KZ21}, and is ready to be generalized to higher dimensions \cite{KZ20,KZ21}. It was confirmed in a concrete lattice model and was shown to be useful in the study of purely edge phase transition \cite{CJKYZ20}. This theory suggests that there is a unified mathematical theory of a large family of gapped and gapless quantum phases far beyond topological orders. We name this family of quantum phases {\it quantum liquid phases} or just {\it quantum liquids} for simplicity. 
\end{enumerate}
In this work and its sequels \cite{KZ21,KZ22}, we incorporate above two missing ingredients to develop a mathematical theory of all quantum liquids.

\smallskip
The name `quantum liquids' is motivated by the existing physical notion of a `gapped quantum liquid' \cite{ZW15,SM16} because the later notion should coincide with that of a gapped `quantum liquid' in our sense. By a physical notion, we mean it can be defined microscopically via lattice models. `Gapped quantum liquids' include topological orders, symmetry protected/enriched topological (SPT/SET) orders \cite{GW09,CGW10,CLW11,CGLW13} and symmetry-breaking orders\footnote{Symmetry-breaking orders include the usual symmetry-breaking orders in the old Landau's paradigm and those obtained from SPT/SET orders by partially breaking the symmetries (see Example \ref{exam:2D-classification}).}. Examples of gapped non-liquid phases were also known (see, for example, \cite{C05,H11}). For gapless quantum liquids, unfortunately, no microscopic definition is available\footnote{A possibly related  notion of a {\em quantum order} was introduced by Wen in \cite{Wen02}.}. However, the notion of a quantum phase is a macroscopic one. In principle, the mathematical characterization of a quantum liquid can be obtained by summarizing all its macroscopic observables, and can be viewed as the macroscopic definition of a quantum liquid. Indeed, we have developed such a macroscopic and unified mathematical theory of all gapped and gapless boundaries of 3D topological orders in \cite{KZ18b,KZ20,KZ21}. We have also proposed that it can be generalized to higher dimensions thus gives a much bigger macroscopic theory of (newly named) quantum liquids. In this work, we show that quantum liquids indeed include all `gapped quantum liquids' in the sense of \cite{ZW15,SM16} and certain liquid-like gapless phases (see Remark \ref{rem:liquid-like}). By developing this bigger theory, we automatically obtain the macroscopic definition of a quantum liquid.

One of the guiding principles of our theory is the so-called topological Wick rotation, a notion which was introduced in \cite{KZ20} and is reviewed in Subsection \ref{sec:twr}. Another guiding principle is the boundary-bulk relation (i.e. the bulk is the center of a boundary) \cite{KWZ15,KWZ17}. Therefore, we require a quantum liquid to satisfy the conditions for the proof of the boundary-bulk relation in \cite{KWZ17} to work: (1) a potentially anomalous quantum liquid has a unique 1-dimension-higher anomaly-free bulk; (2) fusions among quantum liquids and defects are well-defined; (3) dimensional reductions via fusions are independent of the orders of the fusions as the $\otimes$-excision property of factorization homology (see \cite{AF20} for a review and references therein). Hence, the boundary-bulk relation holds for all quantum liquids by definition.

We explain in Subsection \ref{sec:twr} that a quantum liquid $\CX$ contains two types of data $\CXlqs$ and $\CXsk$, i.e. $\CX=(\CXlqs,\CXsk)$, where 
\begin{enumerate}
\item $\CXlqs$ contains the information of local observables such as onsite symmetries, the OPE of local fields or the nets of local operators, and is called the {\em local quantum symmetry} of $\CX$;

\item $\CXsk$ contains the categorical information of all topological defects 
and is called the {\em topological skeleton} of $\CX$. 

\end{enumerate}
In this work, we focus our study on $\CXsk$ and the higher category $\TOsk^n$ of the topological skeletons of $n$D quantum liquids. In \cite{KZ21b}, we further extend our study on $\CXsk$ and prove the functoriality of centers and develop the theory of minimal modular extensions for higher categories. 
In \cite{KZ22}, we study $\CXlqs$ and provide a rather complete mathematical theory of quantum liquids.


\begin{rem} \label{rem:liquid-like}
The `liquid-like' property means that the phase is `soft' enough so that its effective field theory does not rigidly depend on the local geometry of spacetime. By `not rigidly depend', we mean the dependence of the metric of the spacetime is either trivial or in a finite and controllable way\footnote{It is very challenging to formulate this requirement on the level of lattice models.} as in 2D rational conformal field theories (CFT). In particular, bending the phase (or defects in it) makes no difference in the long wave length limit. This already implies `fully dualizability'. Topological orders, SPT/SET orders and symmetry-breaking orders are examples of quantum liquids because they are topological and independent of the metric. 
2D rational CFT's, as fully dualizable QFT's, are examples of gapless quantum liquids. Note that a 2D rational CFT only `softly' depends on the spacetime metric or, more precisely, it depends covariantly on the conformal structures of Riemann surfaces. This fact makes a 2D rational CFT looks rather `topological'. 
\end{rem}

\subsection{Results and layout}
In this work, based on the works \cite{GJF19,JF20}, we further develop the mathematical theory of multi-fusion $n$-categories but from a slightly different point of view \cite{KWZ15}. We first develop a mathematical theory of separable $n$-categories. Among many other results, we show that the category of all separable $n$-categories can be identified with $(n+1)\Vect$. Based on this theory, we define and study $E_m$-multi-fusion $n$-categories. We recover some fundamental results on multi-fusion $n$-categories in \cite{JF20} and prove some new results on $E_m$-multi-fusion $n$-categories. In particular, we give a structure theorem of multi-fusion $n$-categories in Proposition \ref{prop:mfc-morita}, and provide simple criterions for an $E_m$-monoidal $n$-category to be $E_m$-fusion in Proposition \ref{prop:indecom-fus}, Corollary \ref{cor:indecom-emfus} and Theorem \ref{thm:sep-mf}, and prove that the looping of an $E_m$-multi-fusion $n$-category is an $E_{m+1}$-fusion ($n-$1)-category in Corollary \ref{cor:em-omega}. We also prove some useful results on $E_n$-centers. For example, Theorem \ref{thm:mfc-center} characterizes the $E_m$-center in terms of the $E_0$-center, and Proposition \ref{prop:center-square=1} gives a few results on $E_1$-centers and $E_0$-centers, which have important applications in physics. We also compute the delooping of certain (co)slice categories in Proposition \ref{prop:slice-sigma} and Proposition \ref{prop:slice-sigma-add}. These results play an important role in our study of $n$D quantum liquids in Subsection \ref{sec:tosk}. For physical applications, we propose a $*$-version of condensation completion and give some results on unitary $n$-categories that are parallel to non-unitary ones in Section \ref{sec:unitary}. 

\smallskip
After above mathematical preparation, we study quantum liquids in Section \ref{sec:TO}. We first recall the mathematical theory of the gapped/gapless boundaries of 3D topological orders and the idea of topological Wick rotation developed in \cite{KZ18b,KZ20,KZ21}. The proposal that topological Wick rotation also works in higher dimensions \cite{KZ20,KZ21} leads to a much bigger theory of yet-unknown quantum phases, which are named `quantum liquids' (see Hypothesis\ \ref{hyp:quantum-liquid}). We explain that a quantum liquid can be described by two types of macroscopic observables: local quantum symmetries and topological skeletons. By slightly generalizing the results in \cite{KLWZZ20a}, we obtain a unified theory of SPT/SET orders and symmetry-breaking orders (see Theorem\,\ref{thm:sb-order}). Then, applying topological Wick rotation to this unified theory, we obtain the physical characterization of SPT/SET orders and symmetry-breaking orders in terms of enriched higher categories. This also means that quantum liquids include all SPT/SET orders and symmetry-breaking orders. In 2D, our theory reproduces the classification of all 2D gapped quantum phases with finite onsite symmetries obtained earlier via a microscopic approach \cite{CGW10b,SPGC11} (see Example \ref{exam:2D-classification}). It is further confirmed by a direct study of the Ising chain and the Kitaev chain in \cite{KWZ21} and more general 2D models in \cite{XZ22}. Next, we partially define the category $\TO^n$ of $n$D quantum liquids and propose that $\TO^n$ can be obtained from $\TO^0$ by iterative deloopings (see Hypothesis \ref{hyp:to-cond}). Then we introduce the category $\TOsk^n$ of the topological skeletons of $n$D quantum liquids and compute it via iterative deloopings of $\TOsk^0$. This computation gives the main physical result of this work as summarized in Theorem \ref{cor:TOts-n}. We unravel this heavily loaded result in low dimensions in Example \ref{exam:QL1}, \ref{exam:QL2} and \ref{exam:QL3}, where we show that $\TOsk^2$ and $\TOsk^3$ reproduce the topological skeletons of all 2D rational CFT's and those of gapped/gapless boundaries of 3D topological orders obtained earlier in \cite{KZ20,KZ21,KYZ21} and those of gapless boundaries of 3D SPT/SET orders. In the end, we argue that $\TO^n\simeq \TOsk^n$. 

\medskip
The layout of this work is as follows. 
In Section \ref{sec:n-cat}, we briefly review some basic notions in higher categories and some formal facts about the condensation theory of $n$-categories from \cite{GJF19,JF20}, and compute the deloopings of certain coslice $n$-categories. In Section \ref{sec:separable}, we develop the theories of separable $n$-categories and $E_m$-multi-fusion $n$-categories. 
In Section \ref{sec:unitary}, we introduce the notion of $*$-condensation and that of a unitary (multi-fusion) $n$-category and compute the $*$-deloopings of certain coslice $n$-categories. In Section \ref{sec:TO}, we study the higher categories $\TO^n$ and $\TOsk^n$. 

\medskip
\noindent{\bf Acknowledgments}:
We would like to thank Xiao-Gang Wen for many inspiring discussions and long term collaboration. His persistence in pursuing the most fundamental questions inspired this work. We also thank Theo Johnson-Freyd, David Penneys, David Reutter, Hao Xu and Holiverse Yang for comments. We are supported by Guangdong Provincial Key Laboratory (Grant No.2019B121203002). LK is also supported by Guangdong Basic and Applied Basic Research Foundation under Grant No. 2020B1515120100 and by NSFC under Grant No. 11971219. HZ is also supported by NSFC under Grant No. 11871078 and by Startup Grant of Tsinghua University and BIMSA.

\section{The language of higher categories} \label{sec:n-cat}
In this section, we review some basic notions in higher categories and the notion of condensation completion introduced by Gaiotto and Johnson-Freyd \cite{GJF19}. We follow \cite{GJF19,JF20} to use $n$-category to mean a weak $n$-category without specifying a concrete model. See \cite{nlab} for a list of proposed definitions and references.

\subsection{$n$-Categories}

We use $\Cat_n$ to denote the $(n+1)$-category of $n$-categories and use $\Fun(\CC,\CD)$ to denote $\Hom_{\Cat_n}(\CC,\CD)$, the $n$-category of functors and (higher) natural transformations. For an $n$-category $\CC$, we use $\CC^{\op k}$ to denote the $n$-category obtained by reversing all the $k$-morphisms. Unless stated other wise, $\CC^\op$ means $\CC^{\op1}$.

\smallskip

A {\em monoidal} or {\em $E_1$-monoidal $n$-category} is a pair $(\CC,B\CC)$ where $B\CC$ is an $(n+1)$-category with a single object $\bullet$ and $\CC=\Hom_{B\CC}(\bullet,\bullet)$.
The identity 1-morphism $\Id_\bullet$ is referred to as the {\em tensor unit} of $\CC$ and denoted by $\one_\CC$. 
By induction on $m$, an {\em $E_m$-monoidal $n$-category} is pair $(\CC,B\CC)$ where $B\CC$ is an $E_{m-1}$-monoidal $(n+1)$-category with a single object $\bullet$ and $\CC=\Hom_{B\CC}(\bullet,\bullet)$. Note that an $E_m$-monoidal $n$-category consists of a finite series $(\CC,B\CC,B^2\CC,\dots,B^m\CC)$. By abusing terminology, we also refer to $\CC$ as an $E_m$-monoidal $n$-category. An $E_2$-monoidal $n$-category is also referred to as a {\em braided monoidal $n$-category}.

For an $E_m$-monoidal $n$-category $\CC$, we use $\CC^{\op k}$ where $k>-m$ to denote the $E_m$-monoidal $n$-category obtained by reversing all the $k$-morphisms, i.e. $B^m(\CC^{\op k}) = (B^m\CC)^{\op(k+m)}$. In particular, $\CC^{\op0}$ is denoted by $\CC^\rev$ and $\CC^{\op(-1)}$ is denoted by $\bar\CC$.

We say that an $n$-category $\CC$ {\em has duals}, if every $k$-morphism has both a left dual and a right dual for $1\le k<n$. We say that an $E_m$-monoidal $n$-category $\CC$ {\em has duals}, if the $(n+1)$-category $B\CC$ has duals or, equivalently, the $(n+m)$-category $B^m\CC$ has duals.

\smallskip

\void{
The $(n+1)$-category $E_0\Cat_n$ of {\em $E_0$-monoidal $n$-categories} is defined by the fiber product
$$\xymatrix{
  E_0\Cat_n \ar[r] \ar[d] & \Fun(\{0\to1\},\Cat_n) \ar[d]^{\ev_0} \\
  \{\bullet\} \ar[r]^\iota & \Cat_n \\
}
$$
where $\ev_0$ is the evaluation at $0$ and $\iota$ is the inclusion of a trivial category $\bullet$.
By definition, an $E_0$-monoidal $n$-category consists of a pair $(\CC,\one_\CC)$ where $\CC$ is an $n$-category (the value at 1) and $\one_\CC\in\CC$ is a distinguished object (picked out by the arrow $\bullet\to\CC$); an $E_0$-monoidal functor $(\CC,\one_\CC)\to(\CD,\one_\CD)$ between $E_0$-monoidal $n$-categories is a functor $F:\CC\to\CD$ such that $F(\one_\CC)=\one_\CD$; an $E_0$-monoidal (higher) natural transformation is a (higher) natural transformation that is trivial on the distinguished object.
We use $\Fun^{E_0}((\CC,\one_\CC),(\CD,\one_\CD))$ to denote $\Hom_{E_0\Cat_n}((\CC,\one_\CC),(\CD,\one_\CD))$ which is a subcategory of $\Fun(\CC,\CD)$.
}

An {\em $E_0$-monoidal $n$-category} is a pair $(\CC,\one_\CC)$ where $\CC$ is an $n$-category and $\one_\CC\in\CC$ is a distinguished object. An {\em $E_0$-monoidal functor} $(\CC,\one_\CC)\to(\CD,\one_\CD)$ between $E_0$-monoidal $n$-categories is a functor $F:\CC\to\CD$ such that $F(\one_\CC)=\one_\CD$.\footnote{Mathematicians might prefer to add an isomorphism $F(\one_\CC)\simeq\one_\CD$ as a defining data. This definition does not change the homotopy type of the space of $E_0$-monoidal functors. Since this paper is written for a physics journal, we prefer to keeping it simple.} An {\em $E_0$-monoidal (higher) natural transformation} is a (higher) natural transformation that is trivial on the distinguished object. We use $E_0\Cat_n$ to denote the $(n+1)$-category formed by the $E_0$-monoidal $n$-categories, $E_0$-monoidal functors and $E_0$-monoidal (higher) natural transformations and use $\Fun^{E_0}((\CC,\one_\CC),(\CD,\one_\CD))$ to denote $\Hom_{E_0\Cat_n}((\CC,\one_\CC),(\CD,\one_\CD))$ which is a subcategory\footnote{It is not a full subcategory.} of $\Fun(\CC,\CD)$.

Note that for an $E_m$-monoidal $n$-category $\CC$ where $m\ge1$, the iterated delooping $B^m\CC$, together with the distinguished object $\bullet$, defines an object of $E_0\Cat_{n+m}$. 
We use $E_m\Cat_n$ to denote the full subcategory of $E_0\Cat_{n+m}$ consisting of all the iterated deloopings $B^m\CC$ of $E_m$-monoidal $n$-categories. It is actually an $(n+1)$-category.
We use $\Fun^{E_m}(\CC,\CD)$ to denote $\Fun^{E_0}(B^m\CC,B^m\CD)$, the $n$-category formed by {\em $E_m$-monoidal functors} and {\em invertible $E_m$-monoidal (higher) natural transformations}. 

By definition, the delooping functor $B: E_m\Cat_n\to E_{m-1}\Cat_{n+1}$, $\CC\mapsto B\CC$ has a right adjoint $\CD \mapsto \Omega\CD := \Hom_\CD(\one_\CD,\one_\CD)$. That is, $\Fun^{E_{m-1}}(B\CC,\CD) \simeq \Fun^{E_m}(\CC,\Omega\CD)$.

We have an evident forgetful functor $E_{m+1}\Cat_n \to E_m\Cat_n$ for each $m\ge1$ and a forgetful functor $E_1\Cat_n \to E_0\Cat_n$, $\CC \mapsto (\CC,\one_\CC)$.
The $(n+1)$-category $E_\infty\Cat_n$ of {\em symmetric monoidal} or {\em $E_\infty$-monoidal $n$-categories} is defined to be the inverse limit of the system 
$$\cdots \to E_m\Cat_n \to E_{m-1}\Cat_n \to \cdots \to E_0\Cat_n.$$
By definition, a symmetric monoidal $n$-category consists of an infinite series $(\CC,B\CC,B^2\CC,\cdots)$. 

\smallskip

The $(n+1)$-categories $\Cat_n$ and $E_m\Cat_n$, $0\le m\le\infty$ are symmetric monoidal under Cartesian product. Moreover, $\Fun(\CA\times\CB,\CC) \simeq \Fun(\CA,\Fun(\CB,\CC))$ for $\CA,\CB,\CC\in\Cat_n$.


\smallskip

For a monoidal $n$-category $\CC$, the $(n+1)$-category $\LMod_\CC(\Cat_n)$ of {\em left $\CC$-modules} is defined to be $\Fun(B\CC,\Cat_n)$ and the $(n+1)$-category $\RMod_\CC(\Cat_n)$ of {\em right $\CC$-modules} is defined to be $\Fun(B\CC^\rev,\Cat_n)$. We use $\Fun_\CC(\CM,\CN)$ to denote $\Hom_{\LMod_\CC(\Cat_n)}(\CM,\CN)$.

For monoidal $n$-categories $\CC$ and $\CD$, the $(n+1)$-category $\BMod_{\CC|\CD}(\Cat_n)$ of {\em $\CC$-$\CD$-bimodules} is defined to be $\Fun(B\CC,\Fun(B\CD^\rev,\Cat_n))$, which is equivalent to $\LMod_{\CC\times\CD^\rev}(\Cat_n)$. We use $\Fun_{\CC|\CD}(\CM,\CN)$ to denote $\Hom_{\BMod_{\CC|\CD}(\Cat_n)}(\CM,\CN)$.

\subsection{Additive $n$-categories}

We formulate a definition of an additive $n$-category emphasizing on that the additivity is a property of an $n$-category rather than additional data.

We say that an $n$-category $\CC$ is {\em quasi-additive} if $\CC$ has a zero object and finite products as well as finite coproducts such that the canonical 1-morphism $X\coprod Y \to X\times Y$ is invertible for all $X,Y\in\CC$. The coproduct $X\coprod Y$ is also denoted by $X\oplus Y$, referred to as the {\em direct sum} of $X$ and $Y$. We say that an object $X\in\CC$ is {\em indecomposable} if it is neither zero nor a direct sum of two nonzero ones.

\begin{rem}
The canonical 1-morphism $X\coprod Y \to X\times Y$ is determined by $X\coprod Y \xrightarrow{\Id_X\coprod0} X\coprod0 \simeq X$ and $X\coprod Y \xrightarrow{0\coprod\Id_Y} 0\coprod Y \simeq Y$.
\end{rem}

If $\CC$ is a quasi-additive $n$-category, then $\Hom_\CC(X,Y)$ carries a binary operation defined by
$$f+g: X \to X\times X \xrightarrow{f\times g} Y\times Y \simeq Y\coprod Y \to Y$$
for 1-morphisms $f,g:X\to Y$.
It is a good exercise to show that compositions of 1-morphisms in $\CC$ distribute over this binary operation.


For $n=1$, the binary operation $(f,g)\mapsto f+g$ endows $\Hom_\CC(X,Y)$ with the structure of an additive monoid.
We say that a 1-category $\CC$ is {\em additive} if it is quasi-additive and the additive monoid $\Hom_\CC(X,Y)$ is an abelian group for any objects $X,Y\in\CC$.
By induction on $n$, we say that an $n$-category $\CC$ is {\em additive} if it is quasi-additive and $\Hom_\CC(X,Y)$ is additive for any objects $X,Y\in\CC$ and the canonical 2-morphisms $f\coprod g\to f+g \to f\times g$ are invertible for any 1-morphisms $f,g:X\to Y$.

\begin{rem}
The canonical 2-morphism $f+g \to f\times g$ is determined by $f+g \xrightarrow{\Id_f+0} f+0 \simeq f$ and $f+g \xrightarrow{0+\Id_g} 0+g \simeq g$, and similarly for $f\coprod g\to f+g$.
Note that the composition $f\coprod g\to f+g \to f\times g$ agrees with the canonical one. Therefore, in an additive $n$-category, the binary operation $+$ realizes $\oplus$.
\end{rem}

A functor $F:\CC\to\CD$ between two additive $n$-categories is {\em additive} if $F$ preserves finite products or, equivalently, finite coproducts.
We use $\Cat^+_n$ to denote the subcategory of $\Cat_n$ formed by additive $n$-categories and additive functors, and use $\Fun^+(\CC,\CD)$ to denote $\Hom_{\Cat^+_n}(\CC,\CD)$.
To make the definition consistent, we define the 1-category $\Cat^+_0$ of {\em additive 0-categories} to be the 1-category of abelian groups.

The $(n+1)$-category $\Cat^+_n$ is additive, and the direct sum $\CC\oplus\CD$ is the Cartesian product $\CC\times\CD$.
We leave this fact as an exercise to the reader.

\smallskip

An {\em additive monoidal $n$-category} is a pair $(\CC,B\CC)$ where $B\CC$ is an additive $(n+1)$-category such that all objects are finite direct sums of a single indecomposable object $\bullet$ and $\CC=\Hom_{B\CC}(\bullet,\bullet)$. {\em Additive $E_m$-monoidal $n$-categories} are similarly defined. We use $E_m\Cat^+_n$ to denote the $(n+1)$-category of additive $E_m$-monoidal $n$-categories and use $\Fun^{+E_m}(\CC,\CD)$ to denote $\Hom_{E_m\Cat^+_n}(\CC,\CD)$

For an additive monoidal $n$-category $\CC$, the additive $(n+1)$-category $\LMod_\CC(\Cat^+_n)$ of {\em additive left $\CC$-modules} is defined to be $\Fun^+(B\CC,\Cat^+_n)$ and we use $\Fun^+_\CC(\CM,\CN)$ to denote $\Hom_{\LMod_\CC(\Cat^+_n)}(\CM,\CN)$. Categories of additive right modules and bimodules are defined similarly.

\subsection{Linear $n$-categories}

Let $R$ be a commutative ring and view $R$ as an additive symmetric monoidal 0-category. 
What we expect for a linear higher category is that $B^n R$ is an $R$-linear $n$-category freely generated by a single object. This motivates the following definition.

The $(n+1)$-category $\Cat^R_n$ of {\em $R$-linear $n$-categories} is defined to be $\LMod_{B^n R}(\Cat^+_n) = \Fun^+(B^{n+1}R,\Cat^+_n)$. We use $\Fun_R(\CC,\CD)$ to denote $\Hom_{\Cat^R_n}(\CC,\CD)$, the $R$-linear $n$-category formed by {\em $R$-linear functors} and {\em $R$-linear (higher) natural transformations}.

By definition, evaluation at the distinguished object $\bullet\in B^n R$ induces an equivalence $\Fun_R(B^n R,\CC) \simeq \CC$ for any $R$-linear $n$-category $\CC$. That is, $B^n R$ is an $R$-linear $n$-category freely generated by a single object, as expected.

\begin{exam}
(1) $\Cat^R_0 = \Fun^+(B R,\Cat^+_0)$ is the 1-category of $R$-modules.

(2) $B^{n+1}\Z$ is an additive $(n+1)$-category freely generated by a single object (we leave it as an exercise to show this fact by using the adjunction $\Omega^n \simeq \Fun^{+E_n}(B^n\Z,-)$).
Hence $\Cat^\Z_n = \Fun^+(B^{n+1}\Z,\Cat^+_n) \simeq \Cat^+_n$.
\end{exam}

An {\em $R$-linear monoidal $n$-category} is a pair $(\CC,B\CC)$ where $B\CC$ is an $R$-linear $(n+1)$-category such that all objects are finite direct sums of a single indecomposable object $\bullet$ and $\CC=\Hom_{B\CC}(\bullet,\bullet)$. An {\em $R$-linear $E_m$-monoidal $n$-category} is similarly defined. We use $E_m\Cat^R_n$ to denote the $(n+1)$-category of $R$-linear $E_m$-monoidal $n$-categories.

For an $R$-linear monoidal $n$-category $\CC$, the $R$-linear $(n+1)$-category $\LMod_\CC(\Cat^R_n)$ of {\em $R$-linear left $\CC$-modules} is defined to be $\Fun_R(B\CC,\Cat^R_n)$ and we use $\Fun^R_\CC(\CM,\CN)$ to denote $\Hom_{\LMod_\CC(\Cat^R_n)}(\CM,\CN)$. Categories of $R$-linear right modules and bimodules are defined similarly.

\subsection{Condensations}

The condensation completion of higher categories plays a crucial role in this work. For 1-categories, it is just the usual Karoubi completion or idempotent completion. It was generalized to 2-categories by Carqueville and Runkel \cite{CR16} and by Douglas and Reutter \cite{DR18}, and was later generalized to $n$-categories by Gaiotto and Johnson-Freyd \cite{GJF19} (see also Remark \ref{rem:cc-physics}).

The results from \cite{GJF19} are used extensively in this work. 
We adopt the assumptions on the (co)limits theory of $n$-categories needed in \cite{GJF19}. We assume that readers are familiar with that paper. We only briefly recall some formal aspects of Gaiotto and Johnson-Freyd's construction in this subsection. 

\smallskip

Let $\KarCat_n$ denote the full subcategory of $\Cat_n$ formed by condensation-complete $n$-categories. The inclusion $\KarCat_n \hookrightarrow \Cat_n$ admits a left adjoint $\CC\mapsto \Kar(\CC)$, where $\Kar(\CC)$ is obtained by taking Karoubi envelope iteratively. 

By abstract nonsense, the construction $\CC\mapsto\Kar(\CC)$ maps additive $n$-categories to additive ones and maps $R$-linear $n$-categories to $R$-linear ones hence supplies left adjoint functors to the inclusions $\KarCat^+_n \hookrightarrow \Cat^+_n$ and $\KarCat^R_n \hookrightarrow \Cat^R_n$.

The looping construction $\Omega: E_{m-1}\KarCat_{n+1} \to E_m\KarCat_n$ then has a left adjoint given by 
$$\CC \mapsto \Sigma\CC := \Kar(B\CC)$$ 
and similarly for $E_m\KarCat^+_n$ and $E_m\KarCat^R_n$.

Note that evaluation at the distinguished object $\bullet\in \Sigma^n R$ induces an equivalence $\Fun_R(\Sigma^n R,\CC) \simeq \CC$ for any $\CC\in \KarCat^R_n$. That is, $\Sigma^n R$ is a condensation-complete $R$-linear $n$-category freely generated by a single object.

\begin{rem}
By virtue of our convention, $B\CC$ is an additive 1-category for any additive monoidal 0-category $\CC$ thus $\Sigma\CC$ is automatically an additive 1-category. For example, $\Sigma\C$ is the 1-category of finite-dimensional vector spaces over $\C$.
\end{rem}

A remarkable property of $\KarCat^R_n$ is that for any $\CC,\CD\in\KarCat^R_n$ there exist $\CC\boxtimes\CD\in\KarCat^R_n$ and an $R$-linear equivalence
$$\Fun_R(\CC\boxtimes\CD,-) \simeq \Fun_R(\CC,\Fun_R(\CD,-)).$$
Therefore, $\KarCat^R_n$ carries a symmetric monoidal structure with the tensor product $\boxtimes$ and the tensor unit $\Sigma^n R$.
Then the $(n+1)$-categories $E_m\KarCat^R_n$ are also symmetric monoidal under the same tensor product.

An explicit construction of the tensor product $\CC\boxtimes\CD$ is given as follows \cite{JF20}.
For $n=0$, $\CC\boxtimes\CD$ is simply the tensor product of $R$-modules $\CC\otimes_R\CD$.
Then, by induction on $n$, let $\CC\otimes\CD$ be the $n$-category whose objects are pairs $(X,Y)\in\CC\times\CD$ and $\Hom_{\CC\otimes\CD}((X,Y),(X',Y')) = \Hom_\CC(X,X')\boxtimes\Hom_\CD(Y,Y')$ and define $\CC\boxtimes\CD$ to be $\Kar(\CC\otimes\CD)$.

\begin{rem}
Form the construction we see that $\Omega(\CC\boxtimes\CD) = \Omega\CC\boxtimes\Omega\CD$ for $\CC,\CD\in E_0\KarCat^R_n$. Consequently, $\Sigma(\CC\boxtimes\CD) = \Sigma\CC\boxtimes\Sigma\CD$ for $\CC,\CD\in E_1\KarCat^R_n$.
\end{rem}

\begin{rem}
When $n=1$, the tensor product $\CC\boxtimes\CD$ is distinct from Deligne's tensor product whose definition involves a colimit-preserving condition unless either of $\CC$ or $\CD$ is semisimple.
\end{rem}

\begin{rem} \label{rem:sigma-rigid}
The first part of the proof of \cite[Theorem 4.1.1]{GJF19} implies the following result. If $\CC$ is a condensation-complete symmetric monoidal $n$-category then $\Sigma^m\CC$ is $m$-rigid, i.e. every object has a dual and every $k$-morphism has both a left dual and a right dual for $1\le k<m$. If, in addition, $\CC$ is $j$-rigid then $\Sigma^m\CC$ is $(j+m)$-rigid.
\end{rem}

\begin{rem} \label{rem:cc-physics}
The mathematical theory of condensation completion grows out of the study of defects in physics. We give a brief historical remark using unified terminologies\footnote{Condensation completion was called `orbifold completion' in \cite{CR16}, `idempotent completion' in \cite{DR18} and `Karoubi completion' or `condensation completion' in \cite{GJF19}. Condensation descendants was used in \cite{KW14} and was called `condensates' in \cite{GJF19}. The category containing of all defects, including condensation descendants, is called a maximal $\mathrm{BF}^{pre}$-category in \cite{KW14}.}. In 2012, in a mathematical study of TQFT with defects, Carqueville and Runkel briefly introduced condensation completion for 2-categories in \cite{CR16} as a process of completing a given 2-category of defects by their condensation descendants (or condensates). In a physical study of $n$D topological orders \cite{KW14}, defects and condensation descendants were studied from both microscopic and macroscopic perspectives, and condensation completion was discussed in some low dimensional examples, and the possibility of an algorithmic way of defining it was conjectured \cite[Conjecture 11]{KW14}. 
In 2018, Douglas and Reutter \cite{DR18} thoroughly developed the mathematical theory of condensation completion for 2-categories, found the correct definition of a multi-fusion 2-category and used it to give a state-sum construction of 4D TQFT's \cite{DR18}. In 2019, Gaiotto and Johnson-Freyd developed the mathematical theory of condensation completion for higher categories in \cite{GJF19}. The notion of a multi-fusion $n$-category was first introduced by Johnson-Freyd in \cite{JF20}, where some fundamental results on multi-fusion $n$-categories were proved. Moreover, the necessity of condensation completion in the categorical description of quantum phases is demanded by the boundary-bulk relation as shown in potentially anomalous 2D CFT's \cite{KZ18b,KZ20,KZ21}, 4D Dijkgraaf-Witten theories \cite{KTZ20} and $n$D SPT/SET orders \cite{KLWZZ20a}.  
\end{rem}

\subsection{(Co)slice $n$-categories}

In their formulation of (op)lax twisted extended TQFT's \cite{JFS17}, Johnson-Freyd and Scheimbauer constructed rigorously an (op)lax variant of arrow categories of which (co)slice categories are certain subcategories.

Let $\CC$ be an $n$-category containing an object $A$. The {\em slice $n$-category} $\CC/A$ is defined informally as follows. An object is a pair $(X,x)$ where $X\in\CC$ and $x:X\to A$ is a 1-morphism in $\CC$. A 1-morphism $(X,x)\to(Y,y)$ is a pair $(f,\phi)$ where $f:X\to Y$ is a 1-morphism in $\CC$ and $\phi:x\to y\circ f$ is a 2-morphism in $\CC$. Higher morphisms are defined similarly.

The {\em coslice $n$-category} $A/\CC$ is defined to be $(\CC^{\op1\cdots\op n}/A)^{\op1\cdots\op n}$. More precisely, an object of $A/\CC$ is a pair $(X,x)$ where $X\in\CC$ and $x:A\to X$ is a 1-morphism in $\CC$. A 1-morphism $(X,x)\to(Y,y)$ is a pair $(f,\phi)$ where $f:X\to Y$ is a 1-morphism in $\CC$ and $\phi:f\circ x\to y$ is a 2-morphism in $\CC$. Higher morphisms are defined similarly.

\smallskip

By abstract nonsense, if $\CC$ is an $E_m$-monoidal $n$-category then $\one_\CC/\CC$ is also an $E_m$-monoidal $n$-category.

\smallskip

Moreover, if $\CC$ is condensation-complete then $A/\CC$ is also condensation-complete. 
In fact, we have a pullback diagram in $\Cat_n$
$$\xymatrix{
  A/\CC \ar[r] \ar[d] & \Fun^\mathrm{oplax}(\{0\to1\},\CC) \ar[d]^{\ev_0} \\
  \bullet \ar[r]^A & \CC \\
}
$$
where $A$ is identified with a functor from a trivial category $\bullet$ to $\CC$ and $\Fun^\mathrm{oplax}(\{0\to1\},\CC)$ is the oplax variant of the arrow category of $\CC$ constructed in \cite{JFS17} under the notation $\CC^\to$.
Since a condensation is an absolute (co)limit, $\Fun^\mathrm{oplax}(\{0\to1\},\CC) \in \KarCat_n$ if $\CC\in\KarCat_n$. Since the full subcategory $\KarCat_n\subset\Cat_n$ is closed under limit as the inclusion has a left adjoint, $A/\CC\in\KarCat_n$ if $\CC\in\KarCat_n$.

\void{
We illustrate this fact by the following example. Let $(e,\epsilon)$ be a condensation monad over $(X,x)\in A/\CC$ and let us find a condensation $(f,\phi): (X,x)\condense(Y,y)$. 
First, the condensation monad $e$ over $X$ determines a condensation $X\condense Y$ so that $e$ factorizes as $X\xrightarrow{f}Y\xrightarrow{f'}X$. Moreover, the condensation $f\circ f'\condense\Id_Y$ incorporates a pair of 2-morphisms $f\circ f' \xrightarrow{\beta} \Id_Y \xrightarrow{\beta'} f\circ f'$. 
Then the composition 
$$f\circ x \xrightarrow{\beta'\circ\Id_{f\circ x}} f\circ f'\circ f\circ x \xrightarrow{
\Id_f\circ\epsilon} f\circ x$$
carries the structure of a condensation monad thus determines a condensation $\phi: f\circ x \condense y$.
Eventually, we obtain a 1-morphism $(f,\phi): (X,x)\to (Y,y)$ in $A/\CC$ which extends to a condensation by assembling more higher morphisms constructed in a similar way.
}

\begin{lem} \label{lem:slice-cond}
Let $\CC$ be a condensation-complete monoidal $n$-category such that every object of $\CC$ is a condensate of $\one_\CC$. Then every object $(X,x)\in\bullet/\Sigma\CC$ is a condensate of $(\bullet,\one_\CC)$.
\end{lem}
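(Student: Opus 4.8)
The plan is to factor the desired condensation through an intermediate object whose $\Sigma\CC$-component is again $\bullet$, thereby separating the ``horizontal'' datum (that $X$ is a condensate of $\bullet$) from the ``vertical'' datum (the $1$-morphism $x$, which is controlled by $\CC$). Here $\bullet$ denotes the distinguished object of $\Sigma\CC$ coming from the unique object of $B\CC$, and $\one_\CC=\Id_\bullet$. Two facts are available. First, since $\Sigma\CC=\Kar(B\CC)$ and $B\CC$ has the single object $\bullet$, every object $X\in\Sigma\CC$ is a condensate of $\bullet$; fix a condensation $\bullet\condense X$ with $1$-morphisms $f\colon\bullet\to X$ and $g\colon X\to\bullet$ such that $g\circ f$ is a condensation monad on $\bullet$ and $f\circ g\condense\Id_X$, the latter with counit $\beta\colon f\circ g\to\Id_X$. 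Second, the composite $g\circ x\colon\bullet\to\bullet$ is an object of $\Omega\Sigma\CC\simeq\CC$, so by hypothesis it is a condensate of $\one_\CC$; fix a condensation $\one_\CC\condense g\circ x$ in $\CC$, given by $2$-morphisms $s\colon\one_\CC\to g\circ x$ and $t\colon g\circ x\to\one_\CC$ with $t\circ s$ a condensation monad on $\one_\CC$ and $s\circ t\condense\Id_{g\circ x}$. I will produce condensations $(\bullet,g\circ x)\condense(X,x)$ and $(\bullet,\one_\CC)\condense(\bullet,g\circ x)$ and compose them, using that a condensate of a condensate is a condensate.

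For the first condensation, define $F=(f,\phi)\colon(\bullet,g\circ x)\to(X,x)$ and $G=(g,\Id_{g\circ x})\colon(X,x)\to(\bullet,g\circ x)$, where $\phi\colon f\circ g\circ x\to x$ is the counit $\beta$ whiskered by $x$. Then the composite $G\circ F$ has $\Sigma\CC$-component $g\circ f$, which is already a condensation monad on $\bullet$, and its coslice structure is assembled from that of $g\circ f$ together with the whiskered counit; dually $F\circ G$ has $\Sigma\CC$-component $f\circ g\condense\Id_X$, and the induced $2$-morphism on the $x$-component is exactly $\beta$ whiskered by $x$. Thus $G\circ F$ is a condensation monad on $(\bullet,g\circ x)$ splitting through $(X,x)$, which exhibits $(X,x)$ as a condensate of $(\bullet,g\circ x)$.

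For the second condensation, set $\Pi=(\Id_\bullet,s)\colon(\bullet,\one_\CC)\to(\bullet,g\circ x)$ and $J=(\Id_\bullet,t)\colon(\bullet,g\circ x)\to(\bullet,\one_\CC)$. Here both $\Sigma\CC$-components are $\Id_\bullet$, so the composites collapse to the $\CC$-level data: $J\circ\Pi=(\Id_\bullet,\,t\circ s)$ recovers the condensation monad $t\circ s$ on $\one_\CC$, while $\Pi\circ J=(\Id_\bullet,\,s\circ t)$ reproduces $s\circ t\condense\Id_{g\circ x}$. Hence $(\bullet,g\circ x)$ is a condensate of $(\bullet,\one_\CC)$, and composing with the condensation of the previous paragraph yields the claim.

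The routine but genuine work, which I have only sketched, is to check that the data above satisfy the full tower of condensation-monad coherences in the coslice $n$-category. This verification should follow the same template already used to prove that $\bullet/\Sigma\CC$ is condensation-complete: a condensation monad on the $\Sigma\CC$-component is transported into the coslice by composing with the whiskered (co)unit $2$-morphisms, and a splitting downstairs lifts to a splitting upstairs. I expect this higher-coherence bookkeeping, rather than any conceptual difficulty, to be the main obstacle.
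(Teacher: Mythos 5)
Your proof is correct and takes essentially the same route as the paper's: both extract a condensation $\bullet\condense X$ with retraction data $(f,g,\beta)$ from the construction of $\Sigma\CC$, apply the hypothesis to an endomorphism of $\bullet$ to condense the $1$-morphism component, transport everything into the coslice by whiskering with the counit $\beta$, and compose condensations, leaving the same higher-coherence bookkeeping implicit. The only (cosmetic) difference is that you factor through the single intermediate object $(\bullet,g\circ x)$ and invoke the hypothesis once, whereas the paper uses the three-step chain $(\bullet,\one_\CC)\condense(\bullet,f'\circ f)\condense(X,f)\condense(X,x)$ and invokes it twice --- a slight streamlining on your part.
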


\begin{proof}
By the construction of $\Sigma\CC$ there is a condensation $f:\bullet\condense X$. That is, there exist a pair of 1-morphisms $\bullet \xrightarrow{f} X \xrightarrow{f'} \bullet$ as well as a condensation $\beta: f\circ f' \condense \Id_X$. Then we have condensations 
$$(\bullet,\one_\CC) \condense (\bullet,f'\circ f) \condense (X,f) \condense (X,x).$$
The first one is induced by an arbitrary $\one_\CC\condense f'\circ f$. The second one is $(f,\beta\circ\Id_f)$. The third one is induced by the composite condensation $f \xrightarrow{\Id_f\circ\gamma} f\circ f'\circ x \xrightarrow{\beta\circ\Id_x} x$ for an arbitrary $\gamma:\one_\CC\condense x\circ f'$. 
\end{proof}

\begin{prop} \label{prop:slice-sigma}
Let $\CC$ be a condensation-complete monoidal $n$-category such that every object of $\CC$ is a condensate of $\one_\CC$. We have a canonical equivalence $\Sigma(\CC/\one_\CC) \simeq \bullet/\Sigma\CC$.
\end{prop}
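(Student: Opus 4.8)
The plan is to realise both sides as pointed condensation-complete $(n+1)$-categories and to compare them through the universal property of $\Sigma=\Kar\circ B$. First I would observe that the slice $\CC/\one_\CC$ carries an evident monoidal structure, with unit $(\one_\CC,\id_{\one_\CC})$ and tensor product $(X,x)\otimes(Y,y)=(X\otimes Y,\,x\otimes y)$, so that $\Sigma(\CC/\one_\CC)\in E_0\KarCat_{n+1}$ is defined; its distinguished object is the single object $\bullet$ of $B(\CC/\one_\CC)$. On the other side, $\bullet/\Sigma\CC$ is pointed by $(\bullet,\Id_\bullet)$ and is condensation-complete by the pullback description of coslices recorded just before Lemma~\ref{lem:slice-cond}. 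Thus the asserted equivalence should be an equivalence in $E_0\KarCat_{n+1}$.

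The key computation is that $\Omega(\bullet/\Sigma\CC)\simeq\CC/\one_\CC$ canonically: a $1$-endomorphism of $(\bullet,\Id_\bullet)$ in the coslice is exactly a pair $(f,\phi)$ with $f\in\End_{\Sigma\CC}(\bullet)=\CC$ and $\phi\colon f\to\one_\CC$, i.e.\ an object of the slice $\CC/\one_\CC$, and the higher morphisms match verbatim; this is the structural reason the slice (rather than the coslice) occurs on the left. I would then feed the resulting equivalence $e\colon\CC/\one_\CC\xrightarrow{\sim}\Omega(\bullet/\Sigma\CC)$ into the adjunction $\Sigma\dashv\Omega$ to obtain a pointed comparison functor $\hat e\colon\Sigma(\CC/\one_\CC)\to\bullet/\Sigma\CC$. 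Concretely, $e$ is monoidal (composition of endomorphisms of $(\bullet,\Id_\bullet)$ corresponds to the tensor product of slice objects), so it determines a pointed functor $B(\CC/\one_\CC)\to\bullet/\Sigma\CC$, which extends uniquely to $\hat e$ by the universal property of $\Kar$ and the condensation-completeness of the target.

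It remains to prove that $\hat e$ is an equivalence, and here I would exploit that both categories are generated under condensation by a single object: $\bullet$ generates $\Sigma(\CC/\one_\CC)=\Kar(B(\CC/\one_\CC))$ tautologically, while $(\bullet,\Id_\bullet)$ generates $\bullet/\Sigma\CC$ precisely by Lemma~\ref{lem:slice-cond}. By construction $\hat e$ carries generator to generator and induces $e$ on their endomorphism $n$-categories (using $\Omega\Sigma\CD\simeq\CD$ for condensation-complete $\CD$ to identify $\Omega\hat e$ with $e$). Since every functor preserves condensations, essential surjectivity follows: any $Z\in\bullet/\Sigma\CC$ splits a condensation monad on $(\bullet,\Id_\bullet)$; this monad lifts along the equivalence $\Omega\hat e$ to a condensation monad on $\bullet$ in $\Sigma(\CC/\one_\CC)$, whose splitting $W$ in the condensation-complete source satisfies $\hat e(W)\simeq Z$. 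Full faithfulness then follows from the general principle that a condensation-preserving functor between condensation-complete $n$-categories which is an equivalence on the endomorphisms of a generating object is fully faithful, because the Hom-$n$-categories between condensates are absolute (co)limits determined by the Homs of the generators.

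The main obstacle is the input $\Omega\Sigma\CD\simeq\CD$ for $\CD\in\KarCat_n$ — equivalently, that $\Omega$ commutes with Karoubi completion, so that $\Omega\Sigma\CD=\Omega\Kar(B\CD)\simeq\Kar(\Omega B\CD)=\Kar\CD=\CD$ — together with a careful formulation of the ``generation under condensation implies equivalence'' principle for $n$-categories, namely the lifting of higher condensation data along $\hat e$ and the computation of the Hom-$n$-categories of condensates. Both are facts of the Gaiotto--Johnson-Freyd formalism; the bookkeeping of the higher condensation data is where the real care is needed.
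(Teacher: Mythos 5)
Your proposal is correct and takes essentially the same route as the paper: the paper's proof likewise identifies $\CC/\one_\CC \simeq \Omega(\bullet/B\CC) = \Omega(\bullet/\Sigma\CC)$ and then invokes Lemma~\ref{lem:slice-cond} to conclude that the canonical functor $\Sigma\Omega(\bullet/\Sigma\CC)\to\bullet/\Sigma\CC$ is an equivalence. Your additional bookkeeping (the monoidal structure on the slice, lifting condensation monads for essential surjectivity, absoluteness of condensations for full faithfulness) simply unpacks what the paper compresses into that one sentence via the $\Sigma\dashv\Omega$ adjunction.
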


\begin{proof}
According to Lemma \ref{lem:slice-cond}, every object of $\bullet/\Sigma\CC$ is a condensate of $(\bullet,\one_\CC)$. Hence the canonical functor $\Sigma\Omega(\bullet/\Sigma\CC) \to \bullet/\Sigma\CC$ is an equivalence.
Moreover, $\CC/\one_\CC \simeq \Omega(\bullet/B\CC) = \Omega(\bullet/\Sigma\CC)$ canonically.
\end{proof}

The additive version of Proposition \ref{prop:slice-sigma} is proved similarly.

\begin{prop} \label{prop:slice-sigma-add}
Let $\CC$ be a condensation-complete additive monoidal $n$-category such that every object of $\CC$ is a condensate of a direct sum of $\one_\CC$. We have a canonical equivalence $\Sigma(\CC/\one_\CC) \simeq \bullet/\Sigma\CC$.
\end{prop}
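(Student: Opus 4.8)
The plan is to establish an additive analogue of Lemma \ref{lem:slice-cond} and then repeat the two-line deduction of Proposition \ref{prop:slice-sigma} verbatim. Concretely, I would first prove: \emph{under the hypotheses of the proposition, every object $(X,x)\in\bullet/\Sigma\CC$ is a condensate of a direct sum of $(\bullet,\one_\CC)$}, where by a $k$-fold direct sum of the generator I mean the object $(\bullet^{\oplus k},\delta)$ with $\delta\colon\bullet\to\bullet^{\oplus k}$ the diagonal. This is the object whose endomorphism $n$-category is the matrix category $M_k(\CC/\one_\CC)$, i.e. the genuine $k$-fold direct sum of $(\bullet,\one_\CC)$ transported across the additive structure.

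For the lemma I would argue as in Lemma \ref{lem:slice-cond}, but starting one level of direct sums higher. Since $\CC$ is additive, the objects of $\Sigma\CC=\Kar(B\CC)$ are condensates of the direct sums $\bullet^{\oplus k}$, so there is a condensation $f\colon\bullet^{\oplus k}\condense X$, i.e. a pair $\bullet^{\oplus k}\xrightarrow{f}X\xrightarrow{f'}\bullet^{\oplus k}$ together with $\beta\colon f\circ f'\condense\Id_X$. Exactly as in the non-additive case, $(f,\beta\circ\Id_x)$ is a condensation $(\bullet^{\oplus k},f'\circ x)\condense(X,x)$, so it suffices to exhibit $(\bullet^{\oplus k},f'\circ x)$ as a condensate of a direct sum of $(\bullet,\one_\CC)$. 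Writing the coslice datum $f'\circ x\colon\bullet\to\bullet^{\oplus k}$ as a column $(g_1,\dots,g_k)$ with $g_i\in\CC$, the additive hypothesis gives, for each $i$, a condensation $\one_\CC^{\oplus m_i}\condense g_i$. Assembling these into a block-diagonal $1$-morphism $\bullet^{\oplus N}\to\bullet^{\oplus k}$ (with $N=\sum_i m_i$) whose coslice $2$-morphism is the tuple of the chosen condensations produces a condensation $(\bullet^{\oplus N},\delta)\condense(\bullet^{\oplus k},f'\circ x)$. Composing the two condensations gives the claim.

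The deduction then runs as before. The additive lemma says precisely that every object of $\bullet/\Sigma\CC$ is a condensate of a direct sum of the distinguished object $(\bullet,\one_\CC)$; since $\bullet/\Sigma\CC$ is condensation-complete by the general coslice fact recalled above, the recognition principle for $\Sigma\Omega$ in the additive setting shows that the canonical functor $\Sigma\Omega(\bullet/\Sigma\CC)\to\bullet/\Sigma\CC$ is an equivalence. Finally $\CC/\one_\CC\simeq\Omega(\bullet/B\CC)=\Omega(\bullet/\Sigma\CC)$ canonically, using that $\CC$ is condensation-complete so that $\Omega\Sigma\CC\simeq\CC$. Combining, $\Sigma(\CC/\one_\CC)\simeq\Sigma\Omega(\bullet/\Sigma\CC)\simeq\bullet/\Sigma\CC$.

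The step I expect to be the main obstacle is the careful bookkeeping of direct sums, which is exactly where this proof departs from the non-additive one. Two points need care. First, the phrase ``direct sum of $(\bullet,\one_\CC)$'' must be read through the ambient additive structure as $(\bullet^{\oplus k},\delta)$, \emph{not} as an iterated coproduct in the coslice: in $\bullet/\Sigma\CC$ the distinguished object $(\bullet,\one_\CC)$ is initial, so coslice coproducts of it are degenerate, and one must check (e.g. via the computation $\End_{\bullet/\Sigma\CC}(\bullet^{\oplus k},\delta)\simeq M_k(\CC/\one_\CC)$) that $(\bullet^{\oplus k},\delta)$ really is the image of the $k$-fold direct sum of the generator under $\Sigma(\CC/\one_\CC)\simeq\bullet/\Sigma\CC$. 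Second, one must ensure that the recognition principle $\Sigma\Omega\CD\simeq\CD$ is invoked in the form appropriate to the additive $\Sigma$, so that ``condensate of a direct sum of the generator'' — rather than ``condensate of the generator'' — is the correct generation hypothesis; this is the place where additivity of $\CC$, and the fact that condensation completion realizes finite direct sums, is genuinely used.
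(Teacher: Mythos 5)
Your proposal is correct and is essentially the paper's own argument: the paper disposes of this proposition with the single sentence ``The additive version of Proposition \ref{prop:slice-sigma} is proved similarly,'' and what you have written out --- an additive analogue of Lemma \ref{lem:slice-cond} with generators $(\bullet^{\oplus k},\delta)$ in place of $(\bullet,\one_\CC)$, the compressed condensation $(\bullet^{\oplus k},f'\circ x)\condense(X,x)$ via $(f,\beta\circ\Id_x)$, and the same two-line $\Sigma\Omega$ deduction --- is exactly the intended content of that ``similarly.'' Your two flagged subtleties are also the genuine ones: since $\CC/\one_\CC$ is not additive (its hom spaces are affine, and the coslice $\bullet/\Sigma\CC$ has degenerate coproducts, cf.\ Remark \ref{rem:slice-indecomp}), the recognition principle must indeed be invoked in the direct-sum form you state, with $(\bullet^{\oplus k},\delta)$ (equivalently $\End \simeq M_k(\CC/\one_\CC)$) playing the role of the $k$-fold sum of the generator; this is unavoidable, as for $n=1$ the object $(\bullet^{\oplus k},\delta)$ is genuinely not a condensate of $(\bullet,\one_\CC)$ alone.
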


\section{Separable higher categories} \label{sec:separable}
In this section, we further develop the theory of separable and multi-fusion $n$-categories based on the works \cite{DR18,GJF19,JF20} but from a slightly different perspective. Some notions are defined differently but proved to be compatible with \cite{JF20}. 

We work on the base field $\C$ but the results also apply to other fields with some minor exceptions. See Remark \ref{rem:char-k}.

\subsection{Separable $n$-categories}

Let $\Vect$ denote the symmetric monoidal 1-category of finite-dimensional vector spaces.
Let $n\Vect$ denote the symmetric monoidal $n$-category $\Sigma^{n-1}\Vect = \Sigma^n\C$ \cite{GJF19}.

Let $\Cat^\C_n$ denote the $(n+1)$-category of (additive) $\C$-linear $n$-categories and let $\KarCat^\C_n$ denote the full subcategory of condensation-complete $\C$-linear $n$-categories. 
By slightly abusing notation, we use $\Fun(\CC,\CD)$ to denote $\Hom_{\Cat^\C_n}(\CC,\CD)$.

\smallskip

The following theorem is proved in the same way as \cite[Corollary 4.2.3]{GJF19}.

\begin{thm} \label{thm:sigma-rmod}
Let $\CA$ be a condensation-complete $\C$-linear monoidal $n$-category. The functor $\Hom_{\Sigma\CA}(\bullet,-): \Sigma\CA \to \RMod_\CA(\KarCat^\C_n)$ is fully faithful where $\bullet\in \Sigma\CA$ is the distinguished object.
Moreover, the following conditions are equivalent for an object $\CM\in\RMod_\CA(\KarCat^\C_n)$:
\begin{enumerate}
\item $\CM$ belongs to the essential image.
\item The functor $\Fun_{\CA^\rev}(\CM,-): \RMod_\CA(\KarCat^\C_n) \to \KarCat^\C_n$ preserves colimits.
\item The evaluation functor $\Fun_{\CA^\rev}(\CM,\CA)\boxtimes\CM \to \CA$ exhibits the left $\CA$-module $\Fun_{\CA^\rev}(\CM,\CA)$ dual to $\CM$.
\item $\CM$ has a left dual in $\LMod_\CA(\KarCat^\C_n)$.
\end{enumerate}
\end{thm}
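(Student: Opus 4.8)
The statement is an Eilenberg--Watts/Morita type theorem asserting that the functor $\Phi:=\Hom_{\Sigma\CA}(\bullet,-)$ identifies $\Sigma\CA$ with the full subcategory of \emph{dualizable} right $\CA$-modules. My plan is to handle the generator $\bullet$ first, then propagate full faithfulness along condensations (these are absolute colimits, preserved and reflected by all the functors in sight), and finally to reduce conditions (2)--(4) to formal facts about duals in the closed module category $\RMod_\CA(\KarCat^\C_n)$, isolating the essential-image computation as the one genuinely nontrivial point, exactly as in \cite[Corollary 4.2.3]{GJF19}.

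\emph{Full faithfulness.} First I would note that $\Phi(\bullet)=\Omega(\Sigma\CA)\simeq\CA$ is the regular right $\CA$-module and that $\CA$ corepresents the forgetful functor, i.e. $\Fun_{\CA^\rev}(\CA,\CN)\simeq\CN$ naturally in $\CN$. Hence for $X=\bullet$ the comparison map $\Hom_{\Sigma\CA}(\bullet,Y)\to\Fun_{\CA^\rev}(\Phi\bullet,\Phi Y)$ is the equivalence $\Phi Y\simeq\Fun_{\CA^\rev}(\CA,\Phi Y)$. By the construction $\Sigma\CA=\Kar(B\CA)$, every object $X$ is a condensate of $\bullet$; since $\Hom_{\Sigma\CA}(-,Y)$ and $\Fun_{\CA^\rev}(\Phi(-),\Phi Y)$ both send condensations in the first variable to condensations, the comparison map---being an equivalence at $\bullet$---is an equivalence at every $X$. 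Thus $\Phi$ is fully faithful.

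\emph{The formal equivalences.} Conditions (2)--(4) I would treat as standard dualizability reformulations, available because $\RMod_\CA(\KarCat^\C_n)$ is closed and admits the relative tensor product $\boxtimes_\CA$ (the requisite colimits exist, being absolute condensations). The implication (3)$\Rightarrow$(4) is immediate. For (4)$\Rightarrow$(3), when a left dual of $\CM$ exists it is necessarily computed by the internal hom $\Fun_{\CA^\rev}(\CM,\CA)$ into the regular bimodule, and the abstract evaluation coincides with the displayed one. For (4)$\Leftrightarrow$(2), a left dual yields $\Fun_{\CA^\rev}(\CM,-)\simeq\Fun_{\CA^\rev}(\CM,\CA)\boxtimes_\CA(-)$, which preserves colimits; conversely, a colimit-preserving $\Fun_{\CA^\rev}(\CM,-)$ is given by tensoring with its value at $\CA$, and the resulting candidate dual is seen to satisfy the triangle identities.

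\emph{The main obstacle.} The real content is the essential-image identification (1)$\Leftrightarrow$(4). The inclusion (1)$\Rightarrow$(4) is easy: $\bullet$ is dualizable in $\Sigma\CA$ by the rigidity recorded in Remark \ref{rem:sigma-rigid} (the first part of the proof of \cite[Theorem 4.1.1]{GJF19}), so $\Phi\bullet\simeq\CA$ is dualizable, and since any $\CM=\Phi X$ is a condensate of $\CA$ and duals are preserved under condensation, $\CM$ is dualizable. The hard direction is (4)$\Rightarrow$(1): I would extract from the unit and counit of the duality a condensation monad over $\bullet$ in $B\CA$---that is, an algebra object in $\CA$ carrying the structure of a condensation monad---and show that its splitting $X\in\Kar(B\CA)=\Sigma\CA$ satisfies $\Phi X\simeq\CM$. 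Verifying that this data indeed assembles into a condensation monad and that $\Phi$ carries its splitting to $\CM$ is the crux of the argument, and it is precisely where condensation-completeness of $\KarCat^\C_n$ (which guarantees the splitting exists) is indispensable; this step runs exactly parallel to \cite[Corollary 4.2.3]{GJF19}.
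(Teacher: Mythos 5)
Your proposal is correct and takes essentially the same route as the paper, whose entire proof of Theorem \ref{thm:sigma-rmod} is the single remark that it is proved in the same way as \cite[Corollary 4.2.3]{GJF19} --- precisely the argument you reconstruct (full faithfulness propagated from $\bullet$ along condensations as absolute colimits, formal dualizability reformulations for (2)--(4), and splitting the condensation monad extracted from the duality data for (4)$\Rightarrow$(1)). One cosmetic point: for (1)$\Rightarrow$(4) you should invoke the elementary self-duality of the regular module $\CA$ rather than Remark \ref{rem:sigma-rigid}, which concerns the symmetric monoidal case ($\Sigma\CA$ is not even monoidal here), but this does not affect the substance of the argument.
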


Applying the theorem to $n\Vect$ we obtain:

\begin{cor}
The functor $\Hom_{(n+1)\Vect}(\bullet,-): (n+1)\Vect \to \Cat^\C_n$ is fully faithful. 
\end{cor}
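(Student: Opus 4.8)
The plan is to deduce the corollary directly from Theorem~\ref{thm:sigma-rmod} by specializing $\CA$ to $n\Vect$. First I would observe that $n\Vect = \Sigma^{n-1}\Vect = \Sigma^n\C$ is a condensation-complete $\C$-linear (indeed symmetric) monoidal $n$-category, so it is a legitimate input to the theorem, and that $\Sigma(n\Vect) = \Sigma^n\Vect = (n+1)\Vect$. Thus the theorem applies with $\CA = n\Vect$, and it yields that
\[
\Hom_{(n+1)\Vect}(\bullet,-): (n+1)\Vect \to \RMod_{n\Vect}(\KarCat^\C_n)
\]
is fully faithful, where $\bullet$ is the distinguished object of $(n+1)\Vect = \Sigma(n\Vect)$.

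The next step is to identify the target with $\Cat^\C_n$. Since $n\Vect = \Sigma^n\C$ is the tensor unit of the symmetric monoidal $(n+1)$-category $(\KarCat^\C_n,\boxtimes)$, right modules over it carry no extra structure, so there is a canonical equivalence $\RMod_{n\Vect}(\KarCat^\C_n) \simeq \KarCat^\C_n$ under which the module $\Hom_{(n+1)\Vect}(\bullet,X)$ is sent to its underlying condensation-complete $\C$-linear $n$-category. Hence the functor above becomes the composite $(n+1)\Vect \to \KarCat^\C_n \hookrightarrow \Cat^\C_n$, where the second arrow is the inclusion of the full subcategory of condensation-complete objects. Because this inclusion is fully faithful, composing it with a fully faithful functor preserves full faithfulness, which is exactly the assertion of the corollary.

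The only things that require checking, and where I expect whatever difficulty there is to lie, are the two identifications just invoked. I would need to verify that the hom-$n$-category $\Hom_{(n+1)\Vect}(\bullet,X)$ genuinely lands in $\KarCat^\C_n$, i.e. that the hom-categories of a condensation-complete $\C$-linear $(n+1)$-category are themselves condensation-complete $\C$-linear $n$-categories; this is where the condensation-completeness of $(n+1)\Vect$ is used, and it is what makes the composite above land in the full subcategory rather than merely in $\Cat^\C_n$. I would also need the standard fact that modules over the tensor unit of a symmetric monoidal higher category recover the ambient category, applied here with unit $\Sigma^n\C$, together with the observation that this equivalence is compatible with passage to underlying $n$-categories, so that the composite really is the functor $\Hom_{(n+1)\Vect}(\bullet,-)$ named in the statement. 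Neither point is substantive once Theorem~\ref{thm:sigma-rmod} is available; the corollary is essentially that theorem read through the unit object of $\KarCat^\C_n$.
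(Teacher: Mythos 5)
Your proposal is correct and is essentially the paper's own argument: the paper deduces the corollary by applying Theorem~\ref{thm:sigma-rmod} with $\CA = n\Vect$, using $\Sigma(n\Vect)=(n+1)\Vect$ and the identification of $\RMod_{n\Vect}(\KarCat^\C_n)$ with $\KarCat^\C_n \subset \Cat^\C_n$ via the tensor unit, exactly as you spell out. Your version merely makes explicit the routine verifications (modules over the unit, full faithfulness of the inclusion of condensation-complete categories) that the paper leaves implicit.
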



\begin{defn}
A {\em separable $n$-category} is a $\C$-linear $n$-category that lies in the essential image of the above functor.
\end{defn}

In what follows, we identify $(n+1)\Vect$ with the full subcategory of $\Cat^\C_n$ formed by the separable $n$-categories.

\begin{rem}
Since $(n+1)\Vect$ has duals by \cite[Theorem 4.1.1]{GJF19}, all separable $n$-categories have duals.
Since $(n+1)\Vect$ is essentially small, all separable $n$-categories are essentially small.
\end{rem}

\begin{rem}
According to Theorem \ref{thm:sigma-rmod}, the evaluation functor $\Fun(\CC,n\Vect)\boxtimes\CC \to n\Vect$ exhibits $\Fun(\CC,n\Vect)$ dual to $\CC$ for any separable $n$-category $\CC$.
\end{rem}

\begin{cor} \label{cor:sep-dual}
Let $\CC$ be a condensation-complete $\C$-linear $n$-category. The following conditions are equivalent:
\begin{enumerate}
\item $\CC$ is a separable $n$-category.
\item $\CC$ is fully dualizable in $\KarCat^\C_n$.
\item $\CC$ is 1-dualizable in $\KarCat^\C_n$.
\end{enumerate}
\end{cor}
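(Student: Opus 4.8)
The plan is to establish the cycle $(1)\Rightarrow(2)\Rightarrow(3)\Rightarrow(1)$, in which only $(1)\Rightarrow(2)$ carries real content. The implication $(2)\Rightarrow(3)$ is immediate from the convention that \emph{$1$-dualizable} means \emph{admits a dual object}. For the remaining implications I would exploit that $n\Vect$ is the tensor unit of the symmetric monoidal $(n+1)$-category $\KarCat^\C_n$. Since $n\Vect$ is symmetric monoidal, modules over it collapse to the ambient category, giving canonical equivalences $\LMod_{n\Vect}(\KarCat^\C_n)\simeq\RMod_{n\Vect}(\KarCat^\C_n)\simeq\KarCat^\C_n$ under which the tautological module structure corresponds to a given object. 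Moreover $\Sigma(n\Vect)=\Sigma^{n+1}\C=(n+1)\Vect$.

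I would then apply Theorem~\ref{thm:sigma-rmod} with $\CA=n\Vect$. Its condition (1), membership in the essential image of $\Hom_{(n+1)\Vect}(\bullet,-)$, is by definition separability of $\CC$; its condition (4), existence of a left dual in $\LMod_{n\Vect}(\KarCat^\C_n)$, becomes, under the identification above and the symmetry of the monoidal structure (so that left dual $=$ right dual), exactly the statement that $\CC$ admits a dual in $\KarCat^\C_n$, i.e. is $1$-dualizable. Thus the theorem yields $(1)\Leftrightarrow(3)$ directly, and in particular $(3)\Rightarrow(1)$.

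The crux is $(1)\Rightarrow(2)$. Here I would use that the separable $n$-categories constitute a full symmetric monoidal subcategory of $\KarCat^\C_n$ which is equivalent, as a symmetric monoidal $(n+1)$-category, to $(n+1)\Vect=\Sigma^{n+1}\C$. By Remark~\ref{rem:sigma-rigid} (equivalently \cite[Theorem 4.1.1]{GJF19}), $\Sigma^{n+1}\C$ is $(n+1)$-rigid: every object has a dual and every $k$-morphism has both a left and a right adjoint for $1\le k\le n$. A standard argument upgrades this to full dualizability: in a symmetric monoidal $(n+1)$-category possessing all duals and all adjoints, the evaluation and coevaluation of any object, together with the entire iterated tower of their adjoints, are produced directly from the rigidity, so every object is fully dualizable. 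Hence a separable $\CC$, viewed in $(n+1)\Vect$, is fully dualizable there; and since the fully faithful inclusion $(n+1)\Vect\hookrightarrow\KarCat^\C_n$ is symmetric monoidal and full dualizability is preserved by any symmetric monoidal functor, $\CC$ is fully dualizable in $\KarCat^\C_n$, closing the cycle.

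The step I expect to be the main obstacle is making $(1)\Rightarrow(2)$ precise, and two points require care. First, one must verify that the identification of the separable $n$-categories with $(n+1)\Vect$ is genuinely monoidal, so that the duals and adjoints computed inside $(n+1)\Vect$ coincide with those in $\KarCat^\C_n$; this rests on the compatibility $\Sigma(\CC\boxtimes\CD)\simeq\Sigma\CC\boxtimes\Sigma\CD$ already recorded for the delooping. Second, one needs the general categorical principle that rigidity (all duals together with all adjoints) promotes mere $1$-dualizability to full dualizability in a symmetric monoidal $(n+1)$-category. Both are soft in spirit, but the second is where the bookkeeping of the duality tower must be carried out or cited carefully.
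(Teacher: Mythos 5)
Your proposal is correct and follows essentially the same route as the paper: the paper likewise disposes of $(2)\Rightarrow(3)$ as trivial, deduces $(3)\Rightarrow(1)$ by applying Theorem~\ref{thm:sigma-rmod} to $\CA=n\Vect$, and obtains $(1)\Rightarrow(2)$ from the fact that $(n+1)\Vect$ has duals (Remark~\ref{rem:sigma-rigid}, i.e.\ \cite[Theorem 4.1.1]{GJF19}). You have merely spelled out the module-collapse identification and the rigidity-implies-full-dualizability step that the paper leaves implicit.
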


\begin{proof}
$(1)\Rightarrow(2)$ is because $(n+1)\Vect$ has duals.
$(2)\Rightarrow(3)$ is trivial. 
$(3)\Rightarrow(1)$ Apply Theorem \ref{thm:sigma-rmod} to $n\Vect$.
\end{proof}

\begin{prop}
Let $\CC$ be a separable $n$-category. Then $\Hom_\CC(A,B)$ is a separable $(n-1)$-category for any two objects $A,B\in\CC$.
\end{prop}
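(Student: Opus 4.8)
The plan is to reinterpret $\Hom_\CC(A,B)$ inside $(n+1)\Vect$ and then push it down one dimension to $n\Vect$ via an adjunction, so that separability becomes immediate from the definition rather than from a direct dualizability computation.

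First I would invoke the identification of $(n+1)\Vect$ with the full subcategory of separable $n$-categories in $\Cat^\C_n$: under it the separable $n$-category $\CC$ is itself an object of $(n+1)\Vect$, and the canonical equivalence $\CC \simeq \Hom_{(n+1)\Vect}(\bullet,\CC)$ with $\bullet$ the distinguished object presents the two objects $A,B\in\CC$ as $1$-morphisms $A,B\colon\bullet\to\CC$ in $(n+1)\Vect$. Correspondingly, $\Hom_\CC(A,B)$ is identified with the $(n-1)$-category of $2$-morphisms $A\Rightarrow B$ in $(n+1)\Vect$.

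Next, since $(n+1)\Vect$ has duals, the $1$-morphism $A$ admits a right adjoint $A^R\colon\CC\to\bullet$. The adjunction $A\dashv A^R$ induces an equivalence of the relevant hom-$(n-1)$-categories via its unit and counit; specializing the source $1$-morphism to $\Id_\bullet$ and the target to $B$ yields
$$\Hom_\CC(A,B)\;\simeq\;\Hom_{n\Vect}(\one, A^R\circ B),$$
where $A^R\circ B$ is a $1$-endomorphism of $\bullet$, i.e.\ an object of $\Hom_{(n+1)\Vect}(\bullet,\bullet)=n\Vect$. Since $\one=\bullet$ is the distinguished object of $n\Vect$, the right-hand side lies in the essential image of $\Hom_{n\Vect}(\bullet,-)\colon n\Vect\to\Cat^\C_{n-1}$, which is precisely the definition of a separable $(n-1)$-category. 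This closes the argument, and notably it requires no induction on $n$: the reduction lands directly in the defining image.

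The main obstacle I anticipate is making the adjunction step honest at the level of full $(n-1)$-categories rather than mere morphism spaces, namely checking that the unit and counit of $A\dashv A^R$ assemble into a genuine equivalence of hom-$(n-1)$-categories that respects the $\C$-linear, condensation-complete structure, together with confirming that $A^R\circ B$ is literally an object of $n\Vect$ so that the definition of separability applies verbatim. As an independent cross-check I would re-derive the conclusion through Corollary \ref{cor:sep-dual}, exhibiting a dual of $\Hom_\CC(A,B)$ in $\KarCat^\C_{n-1}$ built from $\Hom_\CC(B,A)$; agreement of the two routes would confirm that the hom-category inherits full dualizability one dimension down.
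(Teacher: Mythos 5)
Your proposal is correct and matches the paper's proof essentially verbatim: the paper also writes $\CC=\Hom_{(n+1)\Vect}(\bullet,X)$ and identifies $\Hom_\CC(A,B)$ with $\Hom_{n\Vect}(\bullet,A^R\circ B)$ via the right adjoint $A^R$ supplied by the fact that $(n+1)\Vect$ has duals, concluding separability directly from the definition with no induction. The adjunction-coherence worry you flag is exactly the point the paper treats as ``canonical,'' so your argument is the paper's argument.
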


\begin{proof}
We may assume $\CC=\Hom_{(n+1)\Vect}(\bullet,X)$ where $X\in(n+1)\Vect$. Then $\Hom_\CC(A,B)$ is canonically identified with $\Hom_{n\Vect}(\bullet,A^R\circ B)$ hence is a separable $(n-1)$-category.
\end{proof}

\begin{exam}
(1) $n\Vect = \Hom_{(n+1)\Vect}(\bullet,\bullet)$ is a separable $n$-category.

(2) A separable 0-category is a finite-dimensional vector space.

(3) Giving a condensation algebra in $\Vect$ is equivalent to giving a separable algebra. Giving a bimodule over condensation algebras is equivalent to giving a finite-dimensional bimodule over separable algebras.
Therefore, $2\Vect$ is equivalent to the symmetric monoidal 2-category of separable algebras, finite-dimensional bimodules and bimodule maps. In particular, a separable 1-category is precisely a finite semisimple 1-category.
\end{exam}

\begin{defn}
We say that an object $A$ of a separable $n$-category $\CC$ is {\em simple} if it is indecomposable, i.e. it is neither zero nor a direct sum of two nonzero objects. 
\end{defn}


\begin{prop}
Let $A$ be an object of a separable $n$-category $\CC$.
(1) $A$ is simple if and only if $\Id_A$ is a simple object of the separable $(n-1)$-category $\Hom_\CC(A,A)$.
(2) $A$ is a finite direct sum of simple objects.
\end{prop}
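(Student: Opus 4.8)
The plan is to treat part (1) via the contrapositive and to reduce both parts to splitting condensation monads built from the tensor unit of the monoidal separable $(n-1)$-category $\CE:=\Hom_\CC(A,A)$, whose composition is the tensor product and whose unit is $\Id_A$; both $\CC$ and $\CE$ are separable (the latter by the previous proposition), hence condensation-complete. Here ``$A$ simple'' and ``$\Id_A$ simple in $\CE$'' each unwind to the nonexistence of a splitting into two nonzero summands (the degenerate cases $A\simeq0$ and $\Id_A\simeq0$ matching trivially), so part (1) becomes the assertion that $A$ decomposes nontrivially in $\CC$ iff $\Id_A$ decomposes nontrivially in $\CE$. The forward direction is routine: from $A\simeq A_1\oplus A_2$ with $A_i\neq0$ one forms $p_i:=\iota_i\circ\pi_i\in\CE$, and the biproduct identity $\Id_A=p_1+p_2$ together with the earlier observation that $+$ realizes $\oplus$ yields $\Id_A\simeq p_1\oplus p_2$ with $p_i\neq0$ (since $\pi_i\circ\iota_i=\Id_{A_i}\neq0$).

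The backward direction is the crux. Given $\Id_A\simeq p_1\oplus p_2$ with $p_i\neq0$, the aim is to realize a nonzero proper summand of $A$. For this I would show that an orthogonal central idempotent summand $p$ of the tensor unit, regarded as a $1$-endomorphism of $A$, carries a canonical condensation-monad structure in the sense of \cite{GJF19}, its multiplication and comultiplication coming from the idempotency $p\circ p\simeq p$, which is forced by the orthogonality $p\circ q\simeq0$ with the complementary summand $q$ (indeed $p=p\circ\Id_A\simeq(p\circ p)\oplus(p\circ q)$). Condensation-completeness of $\CC$ then splits this monad, producing a condensation $A\condense A_p$ with $A_p\neq0$ a genuine direct summand of $A$; since $p\neq\Id_A$ its complement is nonzero as well, so $A$ decomposes nontrivially. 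I expect the principal obstacle to lie precisely here: first, verifying that a summand of $\Id_A$ is an honest condensation monad (which for a summand that is not itself indecomposable requires first refining $\Id_A$ into indecomposable central idempotents, using orthogonality $\one_i\circ\one_j\simeq\delta_{ij}\one_i$ of the component units); and second, checking that the higher-categorical splitting of \cite{GJF19} returns an actual biproduct summand of $A$ rather than merely a condensate.

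For part (2) I would induct on $n$. The base case $n=1$ is immediate, a separable $1$-category being finite semisimple. For the inductive step, apply the inductive hypothesis to the separable $(n-1)$-category $\CE=\Hom_\CC(A,A)$ to write $\Id_A\simeq\bigoplus_{i=1}^k\one_i$ as a finite direct sum of indecomposable objects of $\CE$; by the analysis above these $\one_i$ are orthogonal condensation monads, so splitting them simultaneously yields a finite biproduct $A\simeq\bigoplus_{i=1}^k A_i$ in $\CC$ with every $A_i\neq0$. It remains to see that each $A_i$ is simple: one has $\Id_{A_i}=\one_i$, and any nontrivial splitting of $\Id_{A_i}$ inside $\Hom_\CC(A_i,A_i)$ would transport to one of $\one_i$ inside $\CE$, contradicting indecomposability; hence $\Id_{A_i}$ is indecomposable and part (1) shows $A_i$ is simple. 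The number $k$ is finite by the inductive hypothesis applied to $\CE$, which delivers the desired finite decomposition of $A$ into simple objects.
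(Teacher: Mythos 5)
Your proposal is correct and follows essentially the same route as the paper's proof: part (1) forward via the biproduct idempotents $e_i=\iota_i\circ\pi_i$ with $\Id_A=e_1\oplus e_2$, part (1) backward by splitting the condensation monad determined by a summand of $\Id_A$ (the paper's ``uniqueness of condensation'' is exactly what resolves your worry that the splitting yields a genuine biproduct decomposition $A=A_1\oplus A_2$ rather than a mere condensate), and part (2) by induction on $n$ through the decomposition of $\Id_A$ in the separable $(n-1)$-category $\Hom_\CC(A,A)$. Your treatment is more detailed than the paper's terse argument but contains no new ideas beyond it.
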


\begin{proof}
(1) If $A=A_1\oplus A_2$ then $\Id_A = e_1\oplus e_2$ where $e_i$ is the composition $A \to A_i\to A$. Conversely, if $\Id_A=e_1\oplus e_2$ then $e_i$ is an idempotent thus determines a condensation $A\condense A_i$ so that $A=A_1\oplus A_2$ by the uniqueness of condensation.

(2) The claim is clearly true for $n=1$. For $n>1$, $\Id_A$ is a finite direct sum of simple objects by the inductive hypothesis, so is $A$.
\end{proof}

\begin{prop} \label{prop:simp-condense}
Let $f:A\to B$ be a nonzero 1-morphism in a separable $n$-category $\CC$ where $B$ is simple. Then $f$ extends to a condensation $A\condense B$.
\end{prop}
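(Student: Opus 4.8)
The plan is to argue by induction on the categorical dimension $n$, exploiting the fact that a separable $n$-category has duals in order to peel off one level and transport the problem into the $\Hom$-category $\Hom_\CC(B,B)$, which is again separable but of dimension $n-1$. For the base case $n=1$, a separable $1$-category is finite semisimple and $B$ is a simple object. Writing $A\simeq\bigoplus_i S_i$ as a direct sum of simple objects and $f=(f_i)$ with $f_i:S_i\to B$, the hypothesis $f\neq0$ forces some $f_i\neq0$; since $B$ is simple, Schur's lemma makes this $f_i$ an isomorphism, so that $g:=\iota_i\circ f_i^{-1}:B\to A$ (with $\iota_i$ the inclusion) satisfies $f\circ g=\Id_B$. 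Thus $f$ exhibits $B$ as a retract of $A$, i.e.\ extends to a condensation, which settles $n=1$.

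For $n\ge2$, since $\CC$ is separable it has duals, so $f$ admits a right adjoint $f^R:B\to A$ with unit $\eta:\Id_A\to f^R\circ f$ and counit $\epsilon:f\circ f^R\to\Id_B$ satisfying the triangle identities. First I would verify that $\epsilon\neq0$: from the triangle identity $(\epsilon * \Id_f)\circ(\Id_f * \eta)=\Id_f$ and the $\C$-bilinearity of horizontal and vertical composition, the vanishing $\epsilon=0$ would give $\epsilon * \Id_f=0$ and hence $\Id_f=0$, i.e.\ $f\simeq0$, contradicting the hypothesis. Therefore $\epsilon$ is a \emph{nonzero} $1$-morphism $f\circ f^R\to\Id_B$ inside $\Hom_\CC(B,B)$, which is a separable $(n-1)$-category by the earlier proposition on $\Hom$-categories; moreover $\Id_B$ is a simple object of $\Hom_\CC(B,B)$, precisely because $B$ is simple in $\CC$, again by the preceding proposition.

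At this point the inductive hypothesis applies to $\epsilon$ inside the separable $(n-1)$-category $\Hom_\CC(B,B)$ and produces a condensation $f\circ f^R\condense\Id_B$ extending $\epsilon$. It then remains to assemble the adjunction data $(f,f^R,\eta,\epsilon)$ together with this lower-dimensional condensation into a genuine condensation $A\condense B$ whose leading $1$-morphism is the given $f$. This is the step I expect to require the most care, and it is where the precise inductive definition of a condensation from \cite{GJF19} must be used: a condensation $A\condense B$ is exactly the data of $1$-morphisms $f:A\to B$ and $f':B\to A$ together with a condensation $f\circ f'\condense\Id_B$ one categorical level down, the endomorphism $f'\circ f$ carrying the structure of the associated condensation monad on $A$. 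Taking $f':=f^R$, the construction above supplies all the pieces, so the induction closes. The genuine obstacle is therefore not the construction but the coherence check: one must confirm that the higher morphisms of the lower-dimensional condensation $f\circ f^R\condense\Id_B$ glue compatibly with the zigzag (triangle-identity) data of the adjunction to meet the full list of coherence conditions for a condensation. Here I would lean on the fact that $\CC$, being separable, is condensation-complete, so that all the absolute (co)limits needed to organize this coherence are available.
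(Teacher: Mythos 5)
Your proof is correct and follows essentially the same route as the paper's: induct on $n$, split the surjection for $n=1$, and for $n\ge2$ pass to the nonzero counit $f\circ f^R\to\Id_B$ in the separable $(n-1)$-category $\Hom_\CC(B,B)$, where $\Id_B$ is simple (your explicit check that $\epsilon\neq0$ via the triangle identity makes precise what the paper leaves implicit). The coherence worry in your final paragraph is vacuous: by the recursive definition of a condensation in \cite{GJF19}, which you yourself quote, the data consisting of $f$, $f':=f^R$ and a condensation $f\circ f^R\condense\Id_B$ \emph{is} a condensation $A\condense B$, with no further compatibility with the unit $\eta$ or the zigzag identities required --- which is exactly why the paper's proof can end at that point.
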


\begin{proof}
For $n=1$, $f$ is a split surjection hence extends to a condensation. For $n>1$, the counit map $v:f\circ f^R\to\Id_B$ is a nonzero 1-morphism in $\Hom_\CC(B,B)$ where $\Id_B$ is simple. By the inductive hypothesis, $v$ extends to a condensation, as desired.
\end{proof}

\begin{cor} \label{cor:simp-nonzero}
Let $\CC$ be a separable $n$-category. 
(1) If $A\xrightarrow{f}B\xrightarrow{g}C$ are nonzero 1-morphisms between simple objects in $\CC$ then $g\circ f$ is nonzero.
(2) If $\CC$ is indecomposable then $\Hom_\CC(A,B)$ is nonzero for any simple objects $A,B\in\CC$.\footnote{
In other words, a separable $n$-category is indecomposable if and only if it is connected in the sense of Douglas and Reutter \cite{DR18} or, in physical language, connected by domain walls \cite{KW14}.}
\end{cor}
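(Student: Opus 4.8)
For part (1), the plan is to argue by contradiction, using Proposition \ref{prop:simp-condense} to turn the nonzero map $f$ into a condensation. Suppose $g\circ f=0$. Since $f:A\to B$ is nonzero and $B$ is simple, Proposition \ref{prop:simp-condense} extends $f$ to a condensation $A\condense B$, so there is a 1-morphism $f':B\to A$ together with a condensation $f\circ f'\condense\Id_B$ living in the hom category $\Hom_\CC(B,B)$. Whiskering this condensation with $g$ on the left---post-composition with $g$ is a functor $\Hom_\CC(B,B)\to\Hom_\CC(B,C)$, and functors preserve condensations because a condensation is an absolute colimit---exhibits $g=g\circ\Id_B$ as a condensate of $g\circ f\circ f'=(g\circ f)\circ f'$. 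But $g\circ f=0$ forces $(g\circ f)\circ f'=0$, so $g$ becomes a condensate of a zero 1-morphism and hence $g=0$, contradicting the hypothesis. Thus $g\circ f\neq0$.

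For part (2), I would introduce on the simple objects the relation $A\approx B$ defined by $\Hom_\CC(A,B)\neq0$ and show it is an equivalence relation. Reflexivity is clear since $\Id_A\neq0$ for a simple (hence nonzero) object $A$. Transitivity is exactly part (1): a nonzero object $f$ of $\Hom_\CC(A,B)$ and a nonzero object $g$ of $\Hom_\CC(B,C)$ compose to a nonzero $g\circ f$, so $\Hom_\CC(A,C)\neq0$. For symmetry, given a nonzero $f:A\to B$ with $B$ simple, extend $f$ to a condensation $A\condense B$ as above; its second leg $f':B\to A$ must be nonzero, for otherwise $\Id_B$ would be a condensate of $f\circ f'=0$ and $B$ would be a zero object, contradicting simplicity. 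Hence $\Hom_\CC(B,A)\neq0$.

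I would then use indecomposability to force a single $\approx$-class. Recall that every object of $\CC$ is a finite direct sum of simple objects. If there were at least two classes, I would partition the simple objects into two nonempty $\approx$-saturated families and let $\CC_1,\CC_2\subset\CC$ be the full subcategories of finite direct sums of simples drawn from each family. Symmetry of $\approx$ guarantees that $\Hom_\CC$ between a simple of one family and a simple of the other vanishes in both directions; decomposing each object along the two families and expanding hom categories over direct summands then kills all cross terms and yields a canonical equivalence $\CC\simeq\CC_1\oplus\CC_2$ with both factors nonzero. Since a direct summand is a retract and retracts of 1-dualizable objects are 1-dualizable, each $\CC_i$ is again separable by Corollary \ref{cor:sep-dual}, so this is a genuine decomposition in $(n+1)\Vect$, contradicting the indecomposability of $\CC$. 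Therefore there is a single class and $\Hom_\CC(A,B)\neq0$ for all simple $A,B$.

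The routine ingredients are the functoriality of condensations under whiskering and the fact that a condensate of a zero morphism is zero. The main obstacle is the decomposition step in part (2): one must verify carefully that the subcategories $\CC_i$ assembled from the two families really constitute a direct-sum decomposition in $(n+1)\Vect$---that they are separable and that the off-diagonal hom categories genuinely vanish once every object is expanded into its simple summands---so that indecomposability can be invoked.
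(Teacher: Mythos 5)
Your proof is correct and follows essentially the same route as the paper: part (1) rests on Proposition \ref{prop:simp-condense} together with the fact that functors preserve condensations (the paper simply composes the two condensations $A\condense B\condense C$ and notes the composite leg is nonzero, while you whisker one condensation against $g$ and argue by contradiction), and part (2) manufactures a direct-sum decomposition from hom-vanishing to contradict indecomposability, exactly as the paper does with its subcategories $\CA$ and $\CB$. Your explicit symmetry step---that a nonzero $f\colon A\to B$ with $B$ simple forces $\Hom_\CC(B,A)\neq0$ via the nonzero second leg of the condensation---is left implicit in the paper's terse ``$\CC=\CA\oplus\CB$ by (1)'', so your write-up usefully fills in a detail the paper glosses over.
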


\begin{proof}
(1) $g\circ f$ extends to a condensation $A\condense C$ hence is nonzero.
(2) Let $\CA$ be the full subcategory of $\CC$ consisting of those objects $C$ satisfying $\Hom_\CC(A,C)=0$ and let $\CB$ be the full subcategory consisting of those objects $D$ satisfying $\Hom_\CC(D,C)=0$ for all $C\in\CA$. Then $\CC=\CA\oplus\CB$ by (1) hence $\CA=0$.
\end{proof}

\begin{cor} \label{cor:sigma-hom}
Let $\CC$ be an indecomposable separable $n$-category. 
Then $\CC=\Sigma\Hom_\CC(A,A)$ for any nonzero object $A\in\CC$.
\end{cor}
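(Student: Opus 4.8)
The plan is to realize $\CC$ as the condensation completion of the delooping of the monoidal $(n-1)$-category $\Hom_\CC(A,A)$, pointed at $A$. Concretely, the $\C$-linear functor $B\Hom_\CC(A,A)\to\CC$ sending the unique object $\bullet$ to $A$ and restricting to the identity on $\Hom_\CC(A,A)=\Hom_{B\Hom_\CC(A,A)}(\bullet,\bullet)$ extends, by the universal property of $\Kar$ and the condensation-completeness of the separable category $\CC$, to a $\C$-linear functor
$$F\colon \Sigma\Hom_\CC(A,A)=\Kar(B\Hom_\CC(A,A))\longrightarrow\CC .$$
This $F$ is precisely the counit of the adjunction $\Sigma\dashv\Omega$ evaluated at $(\CC,A)$, so I would prove the claim by showing $F$ is an equivalence, checking fully faithfulness and essential surjectivity separately, exactly as in the argument underlying Proposition \ref{prop:slice-sigma} and its additive analogue Proposition \ref{prop:slice-sigma-add}.

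Fully faithfulness I expect to be formal. Every object of $\Sigma\Hom_\CC(A,A)$ is by construction a condensate of a direct sum of $\bullet$, the functor $F$ is an isomorphism on $\Hom(\bullet,\bullet)$, and both source and target are condensation-complete $\C$-linear $n$-categories; since $F$ preserves direct sums and condensations (and condensations are absolute (co)limits), the Hom-categories between condensates of direct sums of $\bullet$ are computed the same way on both sides. This is the same reasoning that yields the fully faithfulness half of Theorem \ref{thm:sigma-rmod}, so I would simply invoke it.

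The content is essential surjectivity, i.e. that every object of $\CC$ is a condensate of a direct sum of $A$. Since $A\neq0$, it has a simple direct summand $S$ (every object of a separable $n$-category is a finite direct sum of simples), and $S$ is a condensate of $A$. Given any $X\in\CC$, write $X=\bigoplus_i X_i$ with each $X_i$ simple. Because $\CC$ is indecomposable, Corollary \ref{cor:simp-nonzero}(2) supplies a nonzero $1$-morphism $f_i\colon S\to X_i$, and since $X_i$ is simple, Proposition \ref{prop:simp-condense} promotes $f_i$ to a condensation $S\condense X_i$; composing with the condensation $A\condense S$ exhibits each $X_i$ as a condensate of $A$. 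Taking the direct sum of these condensations displays $X=\bigoplus_i X_i$ as a condensate of $A^{\oplus|I|}$, a direct sum of copies of $A$. Hence $F$ is essentially surjective, and therefore an equivalence.

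The main obstacle is the essential-surjectivity step, and within it the bookkeeping that forces a direct sum of copies of $A$ rather than a single copy: an arbitrary $X$ need not be a condensate of $A$ alone (already visible for $n=1$, $\CC=\Vect$, $A=\C^k$, where only objects of dimension $\le k$ are retracts of $A$). The delicate point is to chain the two available tools in the right order — Corollary \ref{cor:simp-nonzero}(2) to produce a nonzero arrow out of one fixed simple summand of $A$ into each simple constituent of $X$, and Proposition \ref{prop:simp-condense} to upgrade it to a condensation — and then to assemble the pieces additively, which is exactly why the additive framework of Proposition \ref{prop:slice-sigma-add} (condensates of direct sums of the base object) is the correct setting rather than the non-additive Proposition \ref{prop:slice-sigma}.
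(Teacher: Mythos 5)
Your proposal is correct and follows essentially the same route as the paper: the paper's proof likewise uses Corollary \ref{cor:simp-nonzero}(2) to produce a nonzero $1$-morphism out of $A$ into each simple object $B$ and Proposition \ref{prop:simp-condense} to upgrade it to a condensation $A\condense B$, leaving the direct-sum bookkeeping and the formal counit/fully-faithfulness part implicit. Your only deviation is cosmetic: since Proposition \ref{prop:simp-condense} does not require the source to be simple, the detour through a simple summand $S$ of $A$ is unnecessary (one composes a nonzero map from a simple summand with the projection to get a nonzero $f\colon A\to B$ directly), but everything you spell out is exactly what the paper's terse proof relies on.
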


\begin{proof}
Since $\CC$ is indecomposable, there exists a nonzero 1-morphism $f:A\to B$ by Corollary \ref{cor:simp-nonzero}(2) hence a condensation $A\condense B$ by Proposition \ref{prop:simp-condense} for any simple object $B\in\CC$.
\end{proof}

\begin{lem}
If $\CC$ is a separable $n$-category, so is $\CC^{\op k}$.
\end{lem}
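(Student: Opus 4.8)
The plan is to deduce everything from the duality characterization of separability in Corollary \ref{cor:sep-dual}: a condensation-complete $\C$-linear $n$-category is separable if and only if it is $1$-dualizable (has a dual) in the symmetric monoidal $(n+1)$-category $(\KarCat^\C_n,\boxtimes)$. Since any strong symmetric monoidal functor preserves objects admitting duals, it suffices to exhibit $(-)^{\op k}$ as an involutive symmetric monoidal self-equivalence of $\KarCat^\C_n$ and to observe that $\CC^{\op k}$ is again $\C$-linear and condensation-complete (for each $1\le k\le n$). That $\CC^{\op k}$ remains $\C$-linear is immediate, as reversing $k$-morphisms is compatible with the additive structure and with the $B^n\C$-action defining $\C$-linearity.

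First I would establish the core fact that reversing $k$-morphisms preserves condensation-completeness, i.e. that $(-)^{\op k}$ restricts to an involutive self-equivalence of the full subcategory $\KarCat^\C_n\subset\Cat^\C_n$. This reduces to checking that the defining data of a condensation is stable under $(-)^{\op k}$. The point is that a condensation monad is a sufficiently self-dual structure: reversing $1$-morphisms leaves its multiplication $2$-morphism and the section witnessing its splitting untouched and merely interchanges the section and retraction of a splitting $X\condense Y$, while reversing $j$-morphisms for $j\ge2$ simply interchanges the multiplication and comultiplication, carrying condensation monads to condensation monads and their splittings to splittings. The base case is the manifest fact that idempotent-splitting in a $1$-category is self-opposite-dual, and the general case propagates through the recursive definition of condensation together with the inductive hypothesis applied to the hom-$(n-1)$-categories, using $\Hom_{\CC^{\op k}}(X,Y)\simeq\Hom_\CC(X,Y)^{\op(k-1)}$ for $k\ge2$ and the permutation $\Hom_{\CC^{\op1}}(X,Y)=\Hom_\CC(Y,X)$ for $k=1$.

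Granting this, the remaining structure comes for free. Since $(-)^{\op k}$ is an involutive equivalence of $\Cat^\C_n$ preserving the subcategory $\KarCat^\C_n$, conjugating a left adjoint of the inclusion $\KarCat^\C_n\hookrightarrow\Cat^\C_n$ by it produces another left adjoint, so by uniqueness of adjoints $(-)^{\op k}$ commutes with $\Kar$. Combined with the evident identity $(\CC\otimes\CD)^{\op k}\simeq\CC^{\op k}\otimes\CD^{\op k}$, read off from the hom-formula $\Hom_{\CC\otimes\CD}((X,Y),(X',Y'))=\Hom_\CC(X,X')\boxtimes\Hom_\CD(Y,Y')$ by induction on $n$, this yields $(\CC\boxtimes\CD)^{\op k}\simeq\CC^{\op k}\boxtimes\CD^{\op k}$, so $(-)^{\op k}$ is symmetric monoidal for $\boxtimes$.

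Finally I would conclude: for separable $\CC$ the dual object $\Fun(\CC,n\Vect)$ exists in $\KarCat^\C_n$, so $\CC^{\op k}$ acquires the dual $\Fun(\CC,n\Vect)^{\op k}$ under the symmetric monoidal equivalence $(-)^{\op k}$; hence $\CC^{\op k}$ is $1$-dualizable and therefore separable by Corollary \ref{cor:sep-dual}. The main obstacle is the core fact of the second paragraph, namely verifying that condensation-completeness is insensitive to reversing morphisms at any single level; once this self-duality of the condensation calculus is in hand, the rest is formal and follows from symmetric monoidal functors preserving duals.
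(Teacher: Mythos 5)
Your proposal is correct, but it takes a genuinely different route from the paper's. The paper argues top-down in one line: since $(n-k)\Vect$ is symmetric monoidal, there is a canonical equivalence $(n-k)\Vect\simeq(n-k)\Vect^\rev$, and delooping-and-completing it $k+1$ times (using $B(\CC^{\op j})=(B\CC)^{\op(j+1)}$) yields an equivalence $(n+1)\Vect\simeq(n+1)\Vect^{\op(k+1)}$ sending $\CC\mapsto\CC^{\op k}$; since separable $n$-categories are by definition the hom-categories $\Hom_{(n+1)\Vect}(\bullet,-)$, the lemma is immediate. You instead work bottom-up through the dualizability criterion of Corollary \ref{cor:sep-dual}, verifying that $(-)^{\op k}$ preserves condensation-completeness, commutes with $\Kar$ by uniqueness of adjoints, and is monoidal for $\boxtimes$, then transporting the dual $\Fun(\CC,n\Vect)$. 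This works, with two caveats worth recording. First, $(-)^{\op k}$ is not literally a \emph{self}-equivalence of $\KarCat^\C_n$: reversing $k$-morphisms of the objects reverses the $(k+1)$-morphisms of the ambient $(n+1)$-category, so what you have is an equivalence $\KarCat^\C_n\simeq(\KarCat^\C_n)^{\op(k+1)}$ --- exactly the form the paper records at the level of $(n+1)\Vect$. This is harmless for your argument, since $1$-dualizability of an object only involves $1$- and $2$-morphisms with invertible witnesses and invertibility is symmetric, but it should be said. Second, your ``immediate'' claim that $\CC^{\op k}$ remains $\C$-linear secretly contains the paper's entire idea: it needs $(B^n\C)^{\op k}\simeq B^n\C$, which is precisely the symmetry of the base that the paper's proof isolates as the crux. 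What your route buys is explicitness about where the op-stability of the condensation calculus enters (interchanging multiplication/comultiplication, or section/retraction, of a condensation --- a fact from \cite{GJF19} that the paper's one-liner also uses implicitly, to commute $\Kar$ past the morphism-reversals in the iterated delooping); what the paper's route buys is brevity, bypassing duals, $\boxtimes$, and $\Kar$ entirely.
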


\begin{proof}
Since $(n-k)\Vect$ is symmetric monoidal, we have a canonical equivalence $(n-k)\Vect \simeq (n-k)\Vect^\rev$ inducing an equivalence $(n+1)\Vect \simeq (n+1)\Vect^{\op(k+1)}$, $\CC\mapsto\CC^{\op k}$.
\end{proof}

\begin{prop} \label{prop:sep-yoneda}
The Yoneda embedding $j: \CC^\op \hookrightarrow \Fun(\CC,n\Vect)$ is an equivalence for any separable $n$-category $\CC$. Therefore, all $\C$-linear functors $\CC\to n\Vect$ are representable and the pairing $\Hom_\CC(-,-): \CC^\op\boxtimes\CC\to n\Vect$ exhibits $\CC^\op$ dual to $\CC$.
\end{prop}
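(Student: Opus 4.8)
The plan is to prove that the single functor $j$ is an equivalence; both ``therefore'' clauses then follow formally. Indeed, the tensor--hom adjunction $\Fun(\CC^\op\boxtimes\CC,n\Vect)\simeq\Fun(\CC^\op,\Fun(\CC,n\Vect))$ identifies the pairing $\Hom_\CC(-,-)$ with the composite $\ev\circ(j\boxtimes\id_\CC)$, where $\ev\colon\Fun(\CC,n\Vect)\boxtimes\CC\to n\Vect$ is the evaluation functor which, by the remark following Theorem \ref{thm:sigma-rmod}, exhibits $\Fun(\CC,n\Vect)$ as a dual of $\CC$. Hence, once $j$ is an equivalence, transporting the coevaluation of that duality along $j^{-1}$ turns $\Hom_\CC(-,-)$ into a duality pairing, so $\CC^\op$ is dual to $\CC$; and essential surjectivity of $j$ is precisely the assertion that every $\C$-linear functor $\CC\to n\Vect$ is representable. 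So everything reduces to showing $j$ is fully faithful and essentially surjective.

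For full faithfulness I would invoke the higher Yoneda lemma: natural transformations out of a representable functor $\Hom_\CC(A,-)$ into a functor $G$ are computed by $G(A)$, so
$$\Hom_{\Fun(\CC,n\Vect)}(j(A),j(B)) \simeq \Hom_\CC(B,A) = \Hom_{\CC^\op}(A,B),$$
naturally in $A$ and $B$. This is the formal part of the argument. Before attacking surjectivity I would record that $\CD:=\Fun(\CC,n\Vect)$ is again a separable $n$-category: by Corollary \ref{cor:sep-dual} the separable $\CC$ is fully dualizable in $\KarCat^\C_n$, hence so is its dual $\CD$, and Corollary \ref{cor:sep-dual} then returns separability of $\CD$. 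Consequently both $\CC^\op$ and $\CD$ are separable, every object of each is a finite direct sum of simple objects, and the $\C$-linear fully faithful functor $j$ commutes with $\oplus$ and sends simple objects to simple objects. It therefore suffices to prove that $j$ hits every simple object of $\CD$.

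To achieve this I would first reduce to the case that $\CC$ is indecomposable, since functors out of $\CC$, the formation of $\CC^\op$, and $j$ all decompose over a direct-sum decomposition of $\CC$. Then Corollary \ref{cor:sigma-hom} gives $\CC=\Sigma\CA$ with $\CA=\Hom_\CC(A,A)$ a separable monoidal $(n-1)$-category for a fixed simple object $A$. Using the universal property of $\Sigma$ together with the condensation-completeness of $n\Vect$, evaluation at $A$ identifies $\CD=\Fun(\Sigma\CA,n\Vect)$ with the $n$-category of left $\CA$-modules in $n\Vect$; under this identification a representable functor $j(B)=\Hom_\CC(B,-)$ corresponds to the left $\CA$-module $\Hom_\CC(B,A)$, whereas Theorem \ref{thm:sigma-rmod} identifies $B\in\Sigma\CA$ with the right $\CA$-module $\Hom_\CC(A,B)$ admitting a left dual.

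The main obstacle is the remaining matching of simple objects across this identification. I expect to prove, by induction on $n$ and by applying Theorem \ref{thm:sigma-rmod} to the separable $(n-1)$-category $\CA$, that module duality $M\mapsto M^\vee$ induces a bijection between the simple right $\CA$-modules (equivalently the simple objects of $\CC$) and the simple left $\CA$-modules (equivalently the simple objects of $\CD$), in a way compatible with the passage $B\mapsto j(B)$, i.e.\ $\Hom_\CC(B,A)\simeq\Hom_\CC(A,B)^\vee$. Granting this, $j$ is surjective on simple objects; combined with full faithfulness, additivity, and the fact that every object of $\CD$ is a finite direct sum of simples, this forces $j$ to be an equivalence, which finishes the proof. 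The inductive bookkeeping of dual modules over the separable monoidal $(n-1)$-category $\CA$ is where I anticipate the real care is needed.
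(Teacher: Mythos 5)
Your proof matches the paper's up to the point of essential surjectivity, but that is exactly where your argument stops being a proof: the matching of simple objects via module duality is announced (``I expect to prove, by induction on $n$\dots'', ``Granting this\dots'') rather than carried out, and it is the entire nontrivial content of the statement. The reduction to the indecomposable case, fully faithfulness via the higher Yoneda lemma, and the separability of $\CD=\Fun(\CC,n\Vect)$ (via Corollary \ref{cor:sep-dual} and duality) are all fine and agree with the paper's setup; but the claimed bijection between simple right $\CA$-modules and simple left $\CA$-modules compatible with $B\mapsto\Hom_\CC(B,A)$ is not established, and the inductive ``bookkeeping'' you defer is genuinely delicate, so the proposal as written has a gap at its central step.

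Moreover, the detour through $\CA$-modules is unnecessary: the paper closes the argument in one line using Corollary \ref{cor:sigma-hom}, and you already have all the ingredients. Once you know $\CD$ is an \emph{indecomposable} separable $n$-category (which follows from $\CC^{\vee\vee}\simeq\CC$, or from Lemma \ref{lem:box-simple} applied to $\CC^\vee\simeq\Fun(\CC,n\Vect)$), the proof of Corollary \ref{cor:sigma-hom} --- via Corollary \ref{cor:simp-nonzero}(2) and Proposition \ref{prop:simp-condense} --- shows that \emph{every} object of $\CD$ is a condensate of the single nonzero object $j(A)$. Since condensations are absolute (co)limits, they are preserved by $j$ and detected by the fully faithful $j$, and $\CC^\op$ is condensation-complete; hence any condensation monad on $j(A)$ in $\CD$ pulls back to one on $A$ in $\CC^\op$, its condensate $A\condense B$ exists there, and $j(B)$ realizes the condensate in $\CD$. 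So the essential image of $j$ is closed under condensation and therefore is all of $\CD$, with no induction on $n$ and no analysis of dual modules. I would recommend replacing your final two paragraphs by this closure argument; alternatively, if you insist on your route, the module-duality bijection must actually be proved, not merely granted.
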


\begin{proof}
We may assume that $\CC$ is indecomposable. Then both $\CC^\op$ and $\Fun(\CC,n\Vect)=\CC^\vee$ are indecomposable separable $n$-categories. Hence $j$ is an equivalence by Corollary \ref{cor:sigma-hom}.
\end{proof}

\begin{rem} 
Corollary \ref{cor:sep-dual} is a special case of a more general result. See \cite[Corollary 4.2.3 and Corollary 4.2.4]{GJF19}. Some results of this subsection were alluded in \cite{JF20}.
\end{rem}

\subsection{Multi-fusion $n$-categories}

The notion of a multi-fusion 1-category was studied long ago (see, for example, \cite{DM82}), but the name was coined in \cite{ENO05}. That of a multi-fusion $n$-category was first introduced for $n=2$ by Douglas and Reutter \cite{DR18} and later for all $n$ by Johnson-Freyd \cite{JF20}. We give an alternative definition and prove its compatibility with that in \cite{JF20} (see Remark \ref{rem:MFC-compare-with-JF}). 

\begin{defn}
A {\em multi-fusion $n$-category} is a condensation-complete $\C$-linear monoidal $n$-category $\CA$ such that $\Sigma\CA$ is a separable $(n+1)$-category. We say that $\CA$ is {\em indecomposable} if $\Sigma\CA$ is indecomposable. A multi-fusion $n$-category with a simple tensor unit is also referred to as a {\em fusion $n$-category}. We adopt the convention that $\C$ is the only fusion 0-category.
\end{defn}

\begin{prop} \label{prop:sep-mfc}
For any object $A$ of a separable $n$-category $\CC$, $\Hom_\CC(A,A)$ is a multi-fusion $(n-1)$-category. In particular, $\Fun(\CC,\CC)$ is a multi-fusion $n$-category for any separable $n$-category $\CC$. 
\end{prop}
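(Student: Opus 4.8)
The plan is to reduce to the indecomposable case, where Corollary~\ref{cor:sigma-hom} computes $\Sigma\Hom_\CC(A,A)$ outright. First I would dispose of the formal part of the definition. Set $\CA := \Hom_\CC(A,A)$. This is a $\C$-linear monoidal $(n-1)$-category, with the monoidal structure given by composition and tensor unit $\Id_A$, and it is condensation-complete because the hom-$(n-1)$-categories of a condensation-complete $n$-category are themselves condensation-complete. Thus the only real content is that $\Sigma\CA$ is a \emph{separable} $n$-category. The case $A=0$ is trivial (then $\CA=0$ and $\Sigma\CA=0$), so I assume $A\neq0$.

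Next I would decompose $\CC$. Since $\CC$ is separable it is, up to equivalence, $\Hom_{(n+1)\Vect}(\bullet,X)$ for some $X\in(n+1)\Vect$; writing $X$ as a finite direct sum of simple objects of $(n+1)\Vect$ (every object of a separable category is such a sum) and using additivity of $\Hom_{(n+1)\Vect}(\bullet,-)$ gives a finite decomposition $\CC\simeq\bigoplus_{i=1}^k\CC_i$, where one checks the summands $\CC_i$ are indecomposable (equivalently connected, in the sense of Corollary~\ref{cor:simp-nonzero}). Distinct summands of a direct sum are orthogonal, so the simple summands of $A$ distribute among the $\CC_i$, giving $A\simeq\bigoplus_i A_i$ with $A_i\in\CC_i$ and $\Hom_\CC(A_i,A_j)=0$ for $i\neq j$. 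Hence composition is block-diagonal and $\CA\simeq\prod_{i=1}^k\Hom_{\CC_i}(A_i,A_i)$ as monoidal $(n-1)$-categories.

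Then I would apply $\Sigma$. The decisive step is that $\Sigma$ turns this finite product into a direct sum, $\Sigma(\prod_i\CB_i)\simeq\bigoplus_i\Sigma\CB_i$: the tensor unit of $\prod_i\CB_i$ splits as the orthogonal sum of the idempotents $(\ldots,\one_{\CB_i},\ldots)$, so the distinguished object of $B(\prod_i\CB_i)$ condenses into $k$ mutually orthogonal condensates whose condensation-closures are precisely the $\Sigma\CB_i$. Applying this with $\CB_i=\Hom_{\CC_i}(A_i,A_i)$ and invoking Corollary~\ref{cor:sigma-hom} on each indecomposable $\CC_i$ with $A_i\neq0$ (while $\Sigma0=0$ absorbs the summands with $A_i=0$) yields $\Sigma\CA\simeq\bigoplus_{i:\,A_i\neq0}\CC_i$. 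This is a finite direct sum of separable $n$-categories, hence separable, since separable $n$-categories are identified with $(n+1)\Vect$ and are closed under finite direct sums. Therefore $\CA$ is a multi-fusion $(n-1)$-category. I expect the identity $\Sigma(\prod_i\CB_i)\simeq\bigoplus_i\Sigma\CB_i$ to be the main obstacle, as it is the one genuinely non-formal ingredient; an alternative that sidesteps it is to show directly that the canonical $\C$-linear functor $\Sigma\CA\to\CC$ sending the distinguished object to $A$ (obtained by delooping $\Id_\CA$ and extending over the Karoubi completion, using that $\CC$ is condensation-complete) is fully faithful and identifies $\Sigma\CA$ with the direct summand of $\CC$ spanned by the condensates of $A$.

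Finally, the ``in particular'' is immediate from the main statement one dimension up. By the full faithfulness of $\Hom_{(n+1)\Vect}(\bullet,-)\colon(n+1)\Vect\to\Cat^\C_n$ (the corollary following Theorem~\ref{thm:sigma-rmod}), writing $\CC\simeq\Hom_{(n+1)\Vect}(\bullet,X)$ gives $\Fun(\CC,\CC)\simeq\Hom_{(n+1)\Vect}(X,X)$. Since $(n+1)\Vect$ is itself a separable $(n+1)$-category, applying the main statement to the object $X$ of $(n+1)\Vect$ shows $\Hom_{(n+1)\Vect}(X,X)$, and hence $\Fun(\CC,\CC)$, is a multi-fusion $n$-category.
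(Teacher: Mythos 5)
Your proposal is correct and takes essentially the same approach as the paper, whose entire proof is the one-line reduction ``we may assume that $\CC$ is indecomposable, then apply Corollary \ref{cor:sigma-hom}''. Your block decomposition $\Hom_\CC(A,A)\simeq\prod_i\Hom_{\CC_i}(A_i,A_i)$ together with the identity $\Sigma(\prod_i\CB_i)\simeq\bigoplus_i\Sigma\CB_i$ (which holds by splitting the orthogonal idempotent summands of the tensor unit, in the same spirit as Proposition \ref{prop:mfc-morita}) merely makes that implicit reduction explicit, and your handling of the ``in particular'' clause by applying the statement one dimension up to the object $\CC\in(n+1)\Vect$ is also the intended argument.
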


\begin{proof}
We may assume that $\CC$ is indecomposable then apply Corollary \ref{cor:sigma-hom}.
\end{proof}

\begin{defn}
Let $\CA$ be a multi-fusion $n$-categories. We say that a $\C$-linear right $\CA$-module $\CM$ is {\em separable} if $\CM$ is a separable $n$-category. We use $\RMod_\CA((n+1)\Vect)$ to denote the full subcategory of $\RMod_\CA(\KarCat^\C_n)$ formed by the separable right $A$-modules. The notions of a {\em separable left module} and a {\em separable bimodule} are defined similarly.
\end{defn}

\begin{prop} \label{prop:sigma-rmod}
Let $\CA$ be a multi-fusion $n$-category. The functor $\Hom_{\Sigma\CA}(\bullet,-): \Sigma\CA \to \RMod_\CA((n+1)\Vect)$ is an equivalence.
\end{prop}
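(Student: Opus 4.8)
The plan is to combine the fully faithfulness already supplied by Theorem~\ref{thm:sigma-rmod} with a condensation argument for essential surjectivity. Write $F = \Hom_{\Sigma\CA}(\bullet,-)$. By Theorem~\ref{thm:sigma-rmod}, $F$ is a fully faithful functor $\Sigma\CA \to \RMod_\CA(\KarCat^\C_n)$ whose essential image consists exactly of the dualizable right $\CA$-modules (condition (4)). First I would check that $F$ lands in $\RMod_\CA((n+1)\Vect)$: since $\CA$ is multi-fusion, $\Sigma\CA$ is by definition a separable $(n+1)$-category, and the earlier proposition that hom-categories of a separable category are again separable shows that $F(M) = \Hom_{\Sigma\CA}(\bullet,M)$ is a separable $n$-category for every $M$. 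Thus $F$ restricts to a fully faithful functor into $\RMod_\CA((n+1)\Vect)$, and the entire content of the statement is essential surjectivity.

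For surjectivity I would first record that the essential image of $F$ is closed under condensates: $F$ is fully faithful and $\Sigma\CA$ is condensation-complete, so any condensation monad on an object $F(M)$ in the target pulls back along $F$ to a condensation monad on $M$, splits in $\Sigma\CA$, and its condensate is carried by $F$ onto the given one by uniqueness of condensates. Since $F(\bullet) = \CA$ is the regular right module and, $F$ preserving direct sums, $F(\bullet^{\oplus k}) = \CA^{\oplus k}$, the essential image contains all the free modules $\CA^{\oplus k}$ and hence all their condensates. It therefore suffices to show that an arbitrary separable right $\CA$-module $\CM$ is a condensate of free modules.

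The key input, and the step I expect to be the main obstacle, is that $\CA$ is a \emph{separable algebra}: the multiplication $\mu\colon\CA\boxtimes\CA\to\CA$ admits a section as an $\CA$-$\CA$-bimodule map, equivalently the regular bimodule $\CA$ is a condensate of the free bimodule $\CA\boxtimes\CA$. I would extract this from the separability of $\Sigma\CA$, which forces $\Sigma\CA$ to have duals; the adjunction data for the $1$-morphisms of $\bullet$ then provides the required bimodule section of the composition law, which is precisely $\mu$. Granting this, for any right $\CA$-module $\CM$ the action $\CM\boxtimes\CA\to\CM$ splits as a map of right $\CA$-modules, exhibiting $\CM$ as a condensate of the free module $\CM\boxtimes\CA$. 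When $\CM$ is separable it is a condensate of $(n\Vect)^{\oplus k}$ in $\KarCat^\C_n$ (write $\CM\simeq\Hom_{(n+1)\Vect}(\bullet,X)$ with $X$ a condensate of $\bullet^{\oplus k}$), so that $\CM\boxtimes\CA$ is a condensate of $\CA^{\oplus k}$. Combining the two, $\CM$ is a condensate of free modules, hence lies in the essential image by the previous paragraph. Therefore $F$ is essentially surjective, and an equivalence.

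A remark on where the difficulty concentrates: apart from the separability of $\CA$, everything is formal bookkeeping with condensates together with already-established facts about separable categories, namely their closure under $\boxtimes$ and under $(-)^{\op k}$, the idempotency $(n+1)\Vect\boxtimes(n+1)\Vect\simeq(n+1)\Vect$, and the identity $\Sigma(\CC\boxtimes\CD)\simeq\Sigma\CC\boxtimes\Sigma\CD$. The one substantial point is turning the hypothesis ``$\Sigma\CA$ is separable'' into the concrete splitting of $\mu$. I would treat this separability-of-$\CA$ lemma as the load-bearing sublemma and prove it first; one must resist the tempting shortcut of deducing dualizability of the diagonal bimodule by applying the present proposition to the multi-fusion category $\CA\boxtimes\CA^\rev$, since that is circular, and instead derive the section of $\mu$ directly from the duals present in the separable category $\Sigma\CA$ and from Corollary~\ref{cor:sep-dual}.
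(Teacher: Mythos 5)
Your skeleton coincides with the paper's proof: there, too, fully faithfulness is quoted from Theorem \ref{thm:sigma-rmod}, and essential surjectivity is obtained by inducing from the bimodule condensation $\otimes:\CA\boxtimes\CA\condense\CA$ a condensation of right $\CA$-modules $\CM\boxtimes_\CA(\CA\boxtimes\CA)\condense\CM\boxtimes_\CA\CA$, i.e.\ $\CM\boxtimes\CA\condense\CM$. The only cosmetic difference is the last step: the paper places $\CM\boxtimes\CA$ in the essential image via the one-line identity $(n+1)\Vect\boxtimes\Sigma\CA\simeq\Sigma\CA$, whereas you unwind this into ``$\CM$ is a condensate of $(n\Vect)^{\oplus k}$, hence $\CM\boxtimes\CA$ is a condensate of the free modules $\CA^{\oplus k}$,'' plus the (correct, and in the paper implicit) closure of the image under condensates. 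That bookkeeping is equivalent and sound.

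The genuine gap is in your proposed proof of the sublemma you yourself identify as load-bearing. You want to extract the bimodule condensation $\otimes:\CA\boxtimes\CA\condense\CA$ from ``the adjunction data for the $1$-morphisms of $\bullet$,'' i.e.\ from rigidity of $\CA$. Rigidity alone cannot produce it: in characteristic $p$ there are rigid finite semisimple monoidal $1$-categories that are not separable (cf.\ Remark \ref{rem:char-k}), so all the adjunction data you invoke is present while $\mu$ admits no bimodule section; a duals-only derivation would prove too much. What rigidity actually buys is only the lifting step: it makes the right adjoint $\otimes^R$ a bimodule functor, so that a condensation $\otimes:\CA\boxtimes\CA\condense\CA$ \emph{already known in $(n+1)\Vect$} lifts to $\BMod_{\CA|\CA}((n+1)\Vect)$. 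The existence of that underlying condensation is a separate, genuinely nontrivial input; the paper's mechanism (spelled out in the proofs of Proposition \ref{prop:indecom-fus} and of the analogous unitary lemma in Section \ref{sec:unitary}) is Proposition \ref{prop:simp-condense}, applied inside $(n+1)\Vect$ componentwise on the simple summands of $\CA$: this uses that $\CA$ and $\CA\boxtimes\CA$ are themselves separable $n$-categories (hom-categories of the separable $\Sigma\CA$) and that $\otimes$ is nonzero into each simple summand of $\CA$ (as $\one_\CA\otimes x\simeq x$). This is exactly where characteristic zero enters (for $n=1$ it encodes the Etingof--Nikshych--Ostrik-type splitting), so the sublemma is not formal. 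Replace your duals-plus-Corollary-\ref{cor:sep-dual} derivation by Proposition \ref{prop:simp-condense} together with the $1$-rigidity lift, and the rest of your argument goes through; your warning against the Morita-dualizability shortcut is well taken, since that route runs through Theorem \ref{thm:mfc-bim}, which itself depends on the present proposition.
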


\begin{proof}
According to Theorem \ref{thm:sigma-rmod}, we need to show that the functor is essentially surjective. For $\CM \in \RMod_\CA((n+1)\Vect)$, the condensation $\otimes: \CA\boxtimes\CA\condense\CA$ induces a condensation $\CM\boxtimes_\CA\CA\boxtimes\CA \condense \CM\boxtimes_\CA\CA$, i.e. $\CM\boxtimes\CA\condense\CM$. Since $(n+1)\Vect\boxtimes\Sigma\CA
\simeq\Sigma\CA$, $\CM\boxtimes\CA$ and hence $\CM$ belongs to the essential image.
\end{proof}

\begin{cor} \label{cor:sigma-bmod}
Let $\CA$ and $\CB$ be two multi-fusion $n$-categories. The functor $\Fun(\Sigma\CA,\Sigma\CB) \to \BMod_{\CA|\CB}((n+1)\Vect)$, $F \mapsto F(\CA)$ is an equivalence.
\end{cor}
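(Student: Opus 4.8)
The plan is to realize the functor as a composite of canonical equivalences, each of which is either definitional or already established, and then to read off that the composite is exactly $F\mapsto F(\CA)$. The backbone of the argument is the universal property of the condensation completion $\Sigma\CA=\Kar(B\CA)$ together with Proposition~\ref{prop:sigma-rmod}, which identifies $\Sigma\CB$ with the $(n+1)$-category $\RMod_\CB((n+1)\Vect)$ of separable right $\CB$-modules.

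First I would record the following chain of canonical equivalences of $(n+1)$-categories:
\[
\Fun(\Sigma\CA,\Sigma\CB)\ \simeq\ \Fun(B\CA,\Sigma\CB)\ =\ \LMod_\CA(\Sigma\CB)\ \simeq\ \LMod_\CA\bigl(\RMod_\CB((n+1)\Vect)\bigr)\ =\ \BMod_{\CA|\CB}((n+1)\Vect).
\]
Here the first equivalence is the universal property of $\Sigma\CA=\Kar(B\CA)$: since $\Sigma\CB$ is condensation-complete and the inclusion $\KarCat^\C_{n+1}\hookrightarrow\Cat^\C_{n+1}$ admits $\Kar$ as a left adjoint, restriction along $B\CA\to\Kar(B\CA)$ induces an equivalence of $\C$-linear functor $(n+1)$-categories. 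The second equality is the definition of a left $\CA$-module object: for a complete $\C$-linear $(n+1)$-category $\CD$ I write $\LMod_\CA(\CD):=\Fun(B\CA,\CD)$, and concretely a $\C$-linear functor $B\CA\to\CD$ is the datum of an object $d=F(\bullet)\in\CD$ together with a monoidal functor $\CA\to\Hom_\CD(d,d)$, since a $\C$-linear functor is determined on the direct-sum objects $\bullet^{\oplus k}$ by its value on $\bullet$. The third equivalence is $\LMod_\CA(-)$ applied to Proposition~\ref{prop:sigma-rmod}. The last equality is the defining identity $\BMod_{\CA|\CB}(\CE)=\LMod_\CA(\RMod_\CB(\CE))$, restricted to the separable objects: an object on either side is a bimodule whose underlying $n$-category is separable.

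Tracing a functor $F$ through the chain, I would check that it is sent to $F(\bullet)\in\Sigma\CB$ equipped with its residual right $\CB$-action (coming from $\Sigma\CB\simeq\RMod_\CB((n+1)\Vect)$) and with the left $\CA$-action obtained by applying $F$ to the endomorphisms $\Hom_{\Sigma\CA}(\bullet,\bullet)=\CA$ of the distinguished object. Under the identification $\bullet\leftrightarrow\CA$ (the regular right module) of Proposition~\ref{prop:sigma-rmod}, the value $F(\bullet)$ is exactly $F(\CA)$, so the composite equivalence is the stated functor.

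The step requiring the most care — and the one I regard as the main obstacle — is the first equivalence, upgrading the $\Kar\dashv\iota$ adjunction from an equivalence of mapping spaces to an equivalence of the full $\C$-linear functor $(n+1)$-categories $\Fun(\Sigma\CA,\Sigma\CB)\simeq\Fun(B\CA,\Sigma\CB)$, compatibly with all higher natural transformations; this is the internal-hom enhancement of the universal property of condensation completion and should be cited from \cite{GJF19} in the form used to prove Theorem~\ref{thm:sigma-rmod}. A secondary point to keep honest is the handedness bookkeeping: $\Fun(B\CA,-)$ produces a left $\CA$-action while $\Sigma\CB\simeq\RMod_\CB((n+1)\Vect)$ carries a right $\CB$-action, and one must confirm that these two actions commute, so that $F(\CA)$ is genuinely an $\CA$-$\CB$-bimodule landing in $\BMod_{\CA|\CB}((n+1)\Vect)$ rather than some opposite variant. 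Alternatively, one can argue more concretely: by the construction of $\Sigma\CA$ every object is a condensate of $\bullet$ (as in Lemma~\ref{lem:slice-cond}), so a functor out of $\Sigma\CA$ is determined up to condensation by its value on $\bullet$ together with the induced $\CA$-action, which yields full faithfulness and essential surjectivity directly.
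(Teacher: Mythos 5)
Your proof is correct, but it follows a genuinely different route from the paper's. The paper's proof is a one-line chain through the $\boxtimes$-calculus of separable categories:
$\Fun(\Sigma\CA,\Sigma\CB) \simeq (\Sigma\CA)^\vee\boxtimes\Sigma\CB \simeq \Sigma(\CA^\rev\boxtimes\CB) \simeq \RMod_{\CA^\rev\boxtimes\CB}((n+1)\Vect) \simeq \BMod_{\CA|\CB}((n+1)\Vect)$,
resting on the dualizability of the separable $(n+1)$-category $\Sigma\CA$, the compatibility $\Sigma(\CC\boxtimes\CD)=\Sigma\CC\boxtimes\Sigma\CD$, a single application of Proposition \ref{prop:sigma-rmod} to the multi-fusion $n$-category $\CA^\rev\boxtimes\CB$, and the tautological passage from right $\CA^\rev\boxtimes\CB$-modules to $\CA$-$\CB$-bimodules. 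You instead restrict along $B\CA\hookrightarrow\Kar(B\CA)=\Sigma\CA$ via the universal property of condensation completion, unpack $\Fun(B\CA,-)$ as left $\CA$-module objects, and apply Proposition \ref{prop:sigma-rmod} only to $\CB$. Each approach buys something: yours makes the formula $F\mapsto F(\CA)$ transparent, since after restriction the composite is literally evaluation at $\bullet$ and Proposition \ref{prop:sigma-rmod} identifies $\bullet$ with the regular module $\CA$ --- a verification the paper's chain of abstract equivalences leaves implicit --- whereas the paper's route avoids both the one-sided module bookkeeping and the functor-category-level enhancement of the $\Kar\dashv\iota$ adjunction that you rightly flag as the delicate step. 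That enhancement is indeed available in \cite{GJF19} and is used in the same form elsewhere in the paper (e.g.\ the adjunction equivalence $\Fun^{E_m}(\CB,\Omega^k\CC)\simeq\Fun^{E_{m-k}}(\Sigma^k\CB,\CC)$ in the proof of Theorem \ref{thm:mfc-center}), so your reliance on it is consistent with the paper's standing assumptions; likewise your handedness concern resolves itself, since the paper's definition $\BMod_{\CA|\CB}(\Cat_n)=\Fun(B\CA,\Fun(B\CB^\rev,\Cat_n))$ builds the commuting left $\CA$- and right $\CB$-actions into exactly the currying you perform.
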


\begin{proof}
$\Fun(\Sigma\CA,\Sigma\CB) \simeq (\Sigma\CA)^\vee\boxtimes\Sigma\CB \simeq \Sigma(\CA^\rev\boxtimes\CB) \simeq \RMod_{\CA^\rev\boxtimes\CB}((n+1)\Vect) \simeq \BMod_{\CA|\CB}((n+1)\Vect)$.
\end{proof}

\begin{thm} \label{thm:mfc-bim}
The construction $\CA\mapsto\Sigma\CA$ defines a symmetric monoidal equivalence 
$$\Mor_1^\mathrm{mf}(n\Vect) \simeq (n+1)\Vect$$ 
where $\Mor_1^\mathrm{mf}(n\Vect)$ is the symmetric monoidal $(n+1)$-category formed by multi-fusion $(n-1)$-categories and separable bimodules.
\end{thm}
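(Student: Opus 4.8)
The plan is to exhibit $\Sigma$ as a symmetric monoidal functor of $(n+1)$-categories and then check that it is fully faithful and essentially surjective; the equivalences already recorded in Proposition \ref{prop:sigma-rmod} and Corollary \ref{cor:sigma-bmod} carry almost all of the weight, so the real task is to organize them into a single coherent functor.

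First I would construct $\Sigma\colon \Mor_1^{\mathrm{mf}}(n\Vect) \to (n+1)\Vect$. On objects it is $\CA\mapsto\Sigma\CA$, which lands in $(n+1)\Vect$ precisely because $\CA$ being multi-fusion means $\Sigma\CA$ is separable by definition. On hom-categories, a separable $\CA$-$\CB$-bimodule $\CM\in\BMod_{\CA|\CB}(n\Vect)$ should be sent to the $\C$-linear functor $-\boxtimes_\CA\CM\colon \Sigma\CA\to\Sigma\CB$ (using $\Sigma\CA\simeq\RMod_\CA(n\Vect)$ from Proposition \ref{prop:sigma-rmod}); this is exactly the assignment inverse to $F\mapsto F(\CA)$ of Corollary \ref{cor:sigma-bmod}. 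Compatibility with composition is the associativity of the relative tensor product, $(-\boxtimes_\CA\CM)\boxtimes_\CB\CN \simeq -\boxtimes_\CA(\CM\boxtimes_\CB\CN)$, which matches composition of functors with the composition $\boxtimes_\CB$ of bimodules in $\Mor_1^{\mathrm{mf}}(n\Vect)$; the higher coherences are recorded level by level by the corresponding instances of Corollary \ref{cor:sigma-bmod}.

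Full faithfulness is then immediate: for multi-fusion $(n-1)$-categories $\CA,\CB$ the induced functor on hom-categories is precisely
$$\BMod_{\CA|\CB}(n\Vect) \xrightarrow{\ \sim\ } \Fun(\Sigma\CA,\Sigma\CB),$$
which is the equivalence of Corollary \ref{cor:sigma-bmod}. For essential surjectivity, let $\CC$ be a separable $n$-category and decompose it into indecomposable separable $n$-categories $\CC\simeq\bigoplus_i\CC_i$. By Corollary \ref{cor:sigma-hom} each $\CC_i\simeq\Sigma\CA_i$ with $\CA_i:=\Hom_{\CC_i}(A_i,A_i)$ an indecomposable multi-fusion $(n-1)$-category (Proposition \ref{prop:sep-mfc}). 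Setting $\CA:=\bigoplus_i\CA_i$, which is again condensation-complete, $\C$-linear and monoidal, one has $\Sigma\CA\simeq\bigoplus_i\Sigma\CA_i\simeq\CC$ separable, so $\CA$ is multi-fusion and $\Sigma\CA\simeq\CC$. Here one uses that $\Sigma$ carries finite direct sums to finite direct sums, which follows by splitting the (non-simple) tensor unit $\bigoplus_i\one_{\CA_i}$ into its orthogonal idempotents inside $\Kar(B\CA)$.

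Finally, the symmetric monoidal structure is supplied by the Remark giving $\Sigma(\CA\boxtimes\CB)\simeq\Sigma\CA\boxtimes\Sigma\CB$ together with $\Sigma((n-1)\Vect)\simeq n\Vect$, which identifies units, while compatibility with the symmetry is inherited from that of $\boxtimes$. I expect the main obstacle to be not any single equivalence, since these are in hand, but the bookkeeping of upgrading the object- and hom-level statements into a genuine symmetric monoidal functor of $(n+1)$-categories, i.e.\ producing all the coherence data for the relative-tensor-product composition and for the monoidal constraints simultaneously at every categorical level.
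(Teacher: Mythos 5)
Your proposal is correct and takes essentially the same approach as the paper, whose entire proof is the one-line observation that the functor $\CA\mapsto\Sigma\CA$ is fully faithful by Corollary \ref{cor:sigma-bmod} and essentially surjective by Corollary \ref{cor:sigma-hom}. The extra details you supply---the explicit inverse $\CM\mapsto -\boxtimes_\CA\CM$ on hom-categories, the direct-sum reduction for decomposable $\CC$, and the monoidal constraints via $\Sigma(\CA\boxtimes\CB)\simeq\Sigma\CA\boxtimes\Sigma\CB$---are left implicit in the paper but are consistent with its surrounding results.
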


\begin{proof}
The functor $\CA\mapsto\Sigma\CA$ is essentially surjective by Corollary \ref{cor:sigma-hom} and fully faithful by Corollary \ref{cor:sigma-bmod}.
\end{proof}

\begin{cor} \label{cor:mfc-dual}
Let $\CA$ be a condensation-complete $\C$-linear monoidal $n$-category. The following conditions are equivalent:
\begin{enumerate}
\item $\CA$ is a multi-fusion $n$-category.
\item $\CA$ is fully dualizable in $\Mor_1(\KarCat^\C_n)$.
\item $\CA$ is 2-dualizable in $\Mor_1(\KarCat^\C_n)$.
\end{enumerate}
\end{cor}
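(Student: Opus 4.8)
The plan is to prove Corollary \ref{cor:mfc-dual} by combining the structural equivalence of Theorem \ref{thm:mfc-bim} with the formal characterization of dualizable objects in a symmetric monoidal $(n+1)$-category, exactly mirroring the pattern already used to prove Corollary \ref{cor:sep-dual} one level down. The key observation is that $\Mor_1(\KarCat^\C_n)$ is the Morita $(n+2)$-category whose objects are condensation-complete $\C$-linear monoidal $n$-categories and whose $1$-morphisms are bimodules; full $2$-dualizability in this setting means the object $\CA$ has a dual (as a $1$-morphism-level datum realized by $\CA^\rev$), and that the evaluation and coevaluation $1$-morphisms themselves admit adjoints. The strategy is to show that each such condition is equivalent to $\Sigma\CA$ being separable, which by definition is what it means for $\CA$ to be multi-fusion.

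First I would establish $(1)\Rightarrow(2)$. Assume $\CA$ is multi-fusion, so $\Sigma\CA\in(n+1)\Vect$ is separable. By Theorem \ref{thm:mfc-bim}, the assignment $\CA\mapsto\Sigma\CA$ is a symmetric monoidal equivalence $\Mor_1^{\mathrm{mf}}(n\Vect)\simeq(n+1)\Vect$; since $(n+1)\Vect$ has duals by \cite[Theorem 4.1.1]{GJF19}, every object is fully dualizable there. The dual of $\CA$ is $\CA^\rev$, with $\Sigma(\CA^\rev\boxtimes\CA)\simeq\Sigma\CA^\rev\boxtimes\Sigma\CA$ supplying the evaluation and coevaluation bimodules via the separable-bimodule description of Corollary \ref{cor:sigma-bmod}. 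Because these structure morphisms live in a category equivalent to one with full duals, they themselves possess all required adjoints, which is precisely $2$-dualizability of $\CA$ inside $\Mor_1(\KarCat^\C_n)$. The step $(2)\Rightarrow(3)$ is immediate, since full dualizability entails $2$-dualizability.

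The substantive direction is $(3)\Rightarrow(1)$, and I expect this to be the main obstacle. Here I would unwind what $2$-dualizability buys us: the existence of a dual object $\CA^\vee$ together with evaluation $\mathrm{ev}:\CA^\vee\boxtimes\CA\to\one$ and coevaluation $\mathrm{coev}:\one\to\CA\boxtimes\CA^\vee$ as bimodules, each admitting a right adjoint $1$-morphism. Applying the delooping functor $\Sigma$ — which is symmetric monoidal and sends $\boxtimes$ to $\boxtimes$ by the remark after the tensor-product construction — translates this into the statement that $\Sigma\CA$ is a $1$-dualizable object of $\KarCat^\C_n$ equipped with the additional adjoints coming from the higher duals. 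The crux is then to invoke Corollary \ref{cor:sep-dual}: an object of $\KarCat^\C_n$ is separable if and only if it is $1$-dualizable. I would verify that the $2$-dualizability data for $\CA$ in the Morita category descends under $\Sigma$ to exhibit $\Sigma\CA$ as $1$-dualizable in $\KarCat^\C_n$, the essential point being that the adjoint of the evaluation bimodule corresponds, under Corollary \ref{cor:sigma-bmod} and Proposition \ref{prop:sigma-rmod}, to the dualizing data of $\Sigma\CA$ as a module category. Once $\Sigma\CA$ is seen to be $1$-dualizable, Corollary \ref{cor:sep-dual} forces $\Sigma\CA$ to be separable, whence $\CA$ is multi-fusion by definition. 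The delicate bookkeeping is ensuring that the Morita-level adjunctions match up with the module-level dualizations under the chain of equivalences, rather than any deep new computation; the heavy lifting has already been done in Theorem \ref{thm:sigma-rmod} and Theorem \ref{thm:mfc-bim}.
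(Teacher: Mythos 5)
Your proposal is correct and follows essentially the same route as the paper: $(1)\Rightarrow(2)$ via the symmetric monoidal equivalence of Theorem \ref{thm:mfc-bim}, $(2)\Rightarrow(3)$ trivially, and $(3)\Rightarrow(1)$ by converting the adjoints supplied by $2$-dualizability --- the left duals of $\CA$ as an $n\Vect$-$\CA\boxtimes\CA^\rev$-bimodule and as an $\CA^\rev\boxtimes\CA$-$n\Vect$-bimodule --- into functors $u:\Sigma\, n\Vect\to\Sigma\CA\boxtimes\Sigma\CA^\rev$ and $v:\Sigma\CA^\rev\boxtimes\Sigma\CA\to\Sigma\, n\Vect$ exhibiting $\Sigma\CA^\rev$ dual to $\Sigma\CA$, so that Corollary \ref{cor:sep-dual} yields separability of $\Sigma\CA$ and hence that $\CA$ is multi-fusion. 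One small repair: in that key step the conversion must invoke Theorem \ref{thm:sigma-rmod} (whose criterion in terms of left duals holds for an arbitrary condensation-complete $\C$-linear monoidal $\CA$), not Corollary \ref{cor:sigma-bmod} or Proposition \ref{prop:sigma-rmod} as you cite, since those presuppose $\CA$ is already multi-fusion and would make the argument circular --- a slip your closing sentence implicitly corrects by crediting the heavy lifting to Theorem \ref{thm:sigma-rmod}.
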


\begin{proof}
$(1)\Rightarrow(2)$ is due to Theorem \ref{thm:mfc-bim}. 
$(2)\Rightarrow(3)$ is trivial. 
$(3)\Rightarrow(1)$ Since $\CA$ is 2-dualizable, the $n\Vect$-$\CA\boxtimes\CA^\rev$-bimodule $\CA$ has a left dual thus determines a functor $u:\Sigma n\Vect\to\Sigma\CA\boxtimes\Sigma\CA^\rev$ by Theorem \ref{thm:sigma-rmod}. Similarly, the $\CA^\rev\boxtimes\CA$-$n\Vect$-bimodule $\CA$ has a left dual thus determines a functor $v:\Sigma\CA^\rev\boxtimes\Sigma\CA\to\Sigma n\Vect$. Then $u$ and $v$ exhibits $\Sigma\CA^\rev$ dual to $\Sigma\CA$. Therefore, $\Sigma\CA$ is a separable $(n+1)$-category by Corollary \ref{cor:sep-dual}.
\end{proof}

\begin{exam} \label{exam:mfc1}
It is clear that a multi-fusion 1-category is a finite semisimple monoidal 1-category with duals, i.e. a multi-fusion 1-category defined in \cite{ENO05}. The converse is also true because a multi-fusion 1-category defined in \cite{ENO05} is fully dualizable in $\Mor_1(\KarCat^\C_1)$ \cite{DSPS20}\footnote{The separability is automatic in characteristic zero by \cite{ENO05,DSPS20}.}. Then by \cite[Theorem 1.4.8 and 1.4.9]{DR18} and Proposition \ref{prop:sigma-rmod}, separable 2-categories are exactly semisimple 2-categories defined in \cite{DR18}.
\end{exam}

\begin{rem} \label{rem:MFC-compare-with-JF}
Corollary \ref{cor:mfc-dual} is essentially given by \cite[Theorem 1]{JF20}. According to Corollary \ref{cor:mfc-dual}, the definition of a multi-fusion $n$-category coincides with that in \cite{JF20}.
We conjecture that the definition of a multi-fusion 2-category is equivalent to that in \cite{DR18}.
\end{rem}

\begin{defn}\label{defn:internal-hom}
Let $\CA$ be a $\C$-linear monoidal $n$-category and $\CM$ be a $\C$-linear left $\CA$-module. The {\em internal hom} $[x,y]$ for $x,y\in\CM$, if exists, is defined to be the object of $\CA$ representing the functor $\Hom_\CM(-\otimes x,y): \CA^\op\to\Cat^\C_{n-1}$. That is, $\Hom_\CA(-,[x,y]) \simeq \Hom_\CM(-\otimes x,y)$. We say that $\CM$ is {\em enriched in $\CA$} if $[x,y]$ exists for all $x,y\in\CA$.
\end{defn}

\begin{prop} \label{prop:mfc-enrich}
If $\CA$ is a multi-fusion $n$-category then every separable left $\CA$-module is enriched in $\CA$.
\end{prop}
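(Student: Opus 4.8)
The plan is to realize the internal hom $[x,y]$ as the object of $\CA$ that represents the presheaf $\Phi := \Hom_\CM(-\otimes x,y)$ from Definition \ref{defn:internal-hom}, and to extract representability directly from the Yoneda equivalence for separable $n$-categories (Proposition \ref{prop:sep-yoneda}), rather than constructing a right adjoint to $-\otimes x$ by hand.

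First I would record that both $\CA$ and $\CM$ are separable $n$-categories. For $\CM$ this is the definition of a separable left $\CA$-module. For $\CA$ it follows from the definition of multi-fusion: $\Sigma\CA$ is a separable $(n+1)$-category and $\CA \simeq \Omega\Sigma\CA = \Hom_{\Sigma\CA}(\bullet,\bullet)$, so by the Proposition asserting that $\Hom$-categories of separable categories are separable, $\CA$ is a separable $n$-category. Likewise $\CA^\op$ is separable by the Lemma that $\CC^{\op k}$ is separable whenever $\CC$ is.

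The key observation is then that $\Phi$, a priori valued in $\Cat^\C_{n-1}$, in fact factors through the full subcategory $n\Vect \subset \Cat^\C_{n-1}$ of separable $(n-1)$-categories. Indeed, the action functor $-\otimes x:\CA\to\CM$ is $\C$-linear, and for each object $a$ the value $\Hom_\CM(a\otimes x,y)$ is a $\Hom$-$(n-1)$-category of the separable $n$-category $\CM$, hence a separable $(n-1)$-category, i.e.\ an object of $n\Vect$. Therefore $\Phi$ is a $\C$-linear functor $\CA^\op \to n\Vect$. I then apply Proposition \ref{prop:sep-yoneda} to the separable $n$-category $\CA^\op$, which yields an equivalence $\Fun(\CA^\op,n\Vect)\simeq(\CA^\op)^\op=\CA$ sending an object $Z\in\CA$ to $\Hom_{\CA^\op}(Z,-)=\Hom_\CA(-,Z)$. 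Since this Yoneda embedding is an equivalence, $\Phi\simeq\Hom_\CA(-,Z)$ for some $Z\in\CA$; this equivalence $\Hom_\CA(-,Z)\simeq\Hom_\CM(-\otimes x,y)$ is exactly the defining property of the internal hom, so $[x,y]:=Z$ exists. As $x,y\in\CM$ were arbitrary, $\CM$ is enriched in $\CA$.

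The one delicate point, and the step I would be careful to justify, is the factorization through $n\Vect$: it is precisely \emph{separability} of $\CM$ (not merely that $\CM$ is a $\C$-linear module) that forces each value of $\Phi$ to be a separable $(n-1)$-category, placing $\Phi$ in the domain where Proposition \ref{prop:sep-yoneda} guarantees representability. Without separability the presheaf need not be representable in $\CA$, so this is where the hypothesis does the real work; everything else is bookkeeping with the adjoint pair $(\Sigma,\Omega)$ and the $\op$-invariance of separability.
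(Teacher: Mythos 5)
Your proof is correct and is essentially the paper's own argument: the paper's proof consists of the single sentence ``By Proposition \ref{prop:sep-yoneda}, every $\C$-linear functor $\CA^\op\to n\Vect$ is representable,'' which is exactly your route via the Yoneda equivalence. You have merely made explicit the steps the paper leaves implicit --- that $\CA^\op$ is separable and that separability of $\CM$ forces the presheaf $\Hom_\CM(-\otimes x,y)$ to factor through $n\Vect$ --- and both are verified correctly.
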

\begin{proof}
By Proposition \ref{prop:sep-yoneda}, every $\C$-linear functor $\CA^\op\to n\Vect$ is representable.
\end{proof}

\begin{prop} \label{prop:mfc-morita}
Let $\CA$ be an indecomposable multi-fusion $n$-category. Let $\one_\CA = \oplus_i e_i$ be the simple decomposition so that $\CA = \oplus_{i,j}\CA_{i j}$ as a separable $n$-category where $\CA_{i j}=e_i\otimes\CA\otimes e_j$.
(1) $\Sigma\CA = \Sigma\CA_{i i}$. In particular, $\CA_{i i}$ is a fusion $n$-category.
(2) The $\CA_{i i}$-$\CA_{k k}$-bimodule map $\CA_{i j}\boxtimes_{\CA_{j j}}\CA_{j k} \to \CA_{i k}$ in invertible.
(3) The $\CA_{i i}$-$\CA_{j j}$-bimodule $\CA_{i j}$ is inverse to $\CA_{j i}$.
\end{prop}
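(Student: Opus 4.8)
The plan is to realize the matrix decomposition $\CA=\oplus_{i,j}\CA_{ij}$ as the table of $\Hom$-categories attached to a simple-object decomposition of the distinguished object $\bullet\in\Sigma\CA$, and then to read (1)--(3) off from the structure theory of separable $(n+1)$-categories developed above. First I would set up this dictionary. Since $\Sigma\CA$ is condensation-complete and $\one_\CA=\Id_\bullet=\oplus_i e_i$ is the simple decomposition of the unit, splitting the associated orthogonal idempotents of $\Id_\bullet$ produces a biproduct decomposition $\bullet\simeq\oplus_i B_i$ in $\Sigma\CA$, with structure maps $\iota_i\colon B_i\to\bullet$ and $\pi_i\colon\bullet\to B_i$ satisfying $\pi_i\circ\iota_{i'}\simeq\delta_{ii'}\Id_{B_i}$ and $\iota_i\circ\pi_i\simeq e_i$. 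Compressing $\CA=\Hom_{\Sigma\CA}(\bullet,\bullet)$ by these idempotents identifies $\Hom_{\Sigma\CA}(B_j,B_i)\simeq e_i\otimes\CA\otimes e_j=\CA_{ij}$, and a direct computation ($g_f\circ g_h=\iota_i f(\pi_j\iota_j)h\pi_k=\iota_i(f\circ h)\pi_k$) shows that composition in $\Sigma\CA$ corresponds to the tensor product $\otimes$ of $\CA$; in particular $e_i\otimes e_j\simeq\iota_i(\pi_i\iota_j)\pi_j\simeq\delta_{ij}e_i$. Each $B_i$ is simple, for a nontrivial splitting of $B_i$ would split $\Id_{B_i}\simeq e_i$ into two nonzero summands inside $\CA$, contradicting simplicity of $e_i$.

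Part (1) is then immediate. Each $B_i$ is a nonzero object of the indecomposable separable $(n+1)$-category $\Sigma\CA$, so Corollary \ref{cor:sigma-hom} gives $\Sigma\CA=\Sigma\Hom_{\Sigma\CA}(B_i,B_i)=\Sigma\CA_{ii}$. Hence $\CA_{ii}$ is a condensation-complete $\C$-linear monoidal $n$-category whose delooping $\Sigma\CA_{ii}=\Sigma\CA$ is separable, i.e. $\CA_{ii}$ is multi-fusion; and its tensor unit $\Id_{B_i}\simeq e_i$ is simple, so $\CA_{ii}$ is fusion.

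For part (2), the map in question is exactly the composition $\Hom_{\Sigma\CA}(B_j,B_i)\boxtimes_{\CA_{jj}}\Hom_{\Sigma\CA}(B_k,B_j)\to\Hom_{\Sigma\CA}(B_k,B_i)$ under this dictionary, so there is no ambiguity in identifying it. I would prove it is an equivalence by reduction to the unit axiom. Fixing $i,j$, regard both $Z\mapsto\Hom_{\Sigma\CA}(B_j,B_i)\boxtimes_{\CA_{jj}}\Hom_{\Sigma\CA}(Z,B_j)$ and $Z\mapsto\Hom_{\Sigma\CA}(Z,B_i)$ as functors $(\Sigma\CA)^{\op}\to(n+1)\Vect$, with composition a natural transformation between them. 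At $Z=B_j$ it is the right-unit equivalence $\CA_{ij}\boxtimes_{\CA_{jj}}\CA_{jj}\xrightarrow{\sim}\CA_{ij}$. Since $\Sigma\CA$ is indecomposable and $B_k$ is simple, Corollary \ref{cor:simp-nonzero} and Proposition \ref{prop:simp-condense} exhibit $B_k$ as a condensate of $B_j$; as both $\Hom_{\Sigma\CA}(-,-)$ and $\boxtimes_{\CA_{jj}}$ send condensations, being absolute colimits, to condensations, the two functors carry this condensation to retracts of their values at $B_j$, compatibly with the natural transformation. A morphism that is a retract of an equivalence is again an equivalence, so the composition map at $B_k$ is an equivalence. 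I expect this to be the main obstacle: making precise that the two functors on $(\Sigma\CA)^{\op}$ preserve the relevant condensations so that the "retract of an equivalence" principle applies, together with the condensation-formalism bookkeeping behind the dictionary (the biproduct relations $\pi_i\circ\iota_{i'}\simeq\delta_{ii'}\Id$ and the splitting of $\Id_\bullet$).

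Finally, part (3) is a formal consequence of (2). Applying (2) to the triples $(i,j,i)$ and $(j,i,j)$ shows that the composition maps $\CA_{ij}\boxtimes_{\CA_{jj}}\CA_{ji}\to\CA_{ii}$ and $\CA_{ji}\boxtimes_{\CA_{ii}}\CA_{ij}\to\CA_{jj}$ are equivalences of bimodules. These are precisely the unit and counit exhibiting $\CA_{ij}$ and $\CA_{ji}$ as mutually inverse invertible $\CA_{ii}$-$\CA_{jj}$-bimodules, and associativity of composition in $\Sigma\CA$ supplies the required coherence, so $\CA_{ij}$ is invertible with inverse $\CA_{ji}$.
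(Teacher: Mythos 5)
Your dictionary, your part (1) and your part (3) all agree with the paper's proof in substance: the paper realizes the same matrix decomposition by identifying $\Sigma\CA$ with $\RMod_\CA((n+1)\Vect)$ via Proposition \ref{prop:sigma-rmod} and setting $\CA_i=e_i\otimes\CA$, so that $\CA_{ij}=\Hom_{\Sigma\CA}(\CA_j,\CA_i)$ (your $B_i$ are the images of the $\CA_i$), then deduces (1) from Corollary \ref{cor:sigma-hom} and (3) formally from (2), exactly as you do. The divergence is in (2), where the paper's argument is a one-line Morita transport that avoids any pointwise analysis: by (1) one has $\Sigma\CA_{jj}=\Sigma\CA=\Sigma\CA_{kk}$, and the resulting equivalence $\Sigma\CA_{jj}\simeq\Sigma\CA_{kk}$ sends $\CA_{ij}$ to $\CA_{ik}$ while also sending $\CA_{ij}\simeq\CA_{ij}\boxtimes_{\CA_{jj}}\CA_{jj}$ to $\CA_{ij}\boxtimes_{\CA_{jj}}\CA_{jk}$; comparing the two images gives the equivalence of (2) at once.

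Your alternative route through the ``invertibility locus'' in the variable $Z$ can be made to work, but the step you yourself flagged is where your stated justification fails: a condensation $B_j\condense B_k$ does \emph{not} exhibit $B_k$ as a retract of $B_j$ in the homotopy category. The data $f:B_j\to B_k$, $g:B_k\to B_j$ only comes with a condensation $f\circ g\condense\Id_{B_k}$, not an isomorphism $f\circ g\simeq\Id_{B_k}$, so ``a retract of an equivalence is an equivalence'' does not literally apply to $\eta_{B_k}$ versus $\eta_{B_j}$ (this is precisely the higher-categorical phenomenon that makes condensation completion strictly stronger than idempotent splitting for $n>1$). The correct replacement lemma is that the locus where a natural transformation is invertible is closed under condensates: your $\eta$ defines a functor $(\Sigma\CA)^\op\to\Fun(\{0\to1\},(n+1)\Vect)$, $Z\mapsto\eta_Z$, which, like any functor, preserves condensations, so $\eta_{B_k}$ is a condensate of $\eta_{B_j}$; the equivalences span a \emph{full} subcategory $\CE$ of the arrow category with $\CE\simeq(n+1)\Vect$ via evaluation at the source, hence $\CE$ is condensation-complete; the diagram whose absolute colimit computes this condensate has all vertices equal to $\eta_{B_j}\in\CE$ and therefore factors through $\CE$, and since absolute colimits are preserved by the inclusion $\CE\hookrightarrow\Fun(\{0\to1\},(n+1)\Vect)$, uniqueness of colimits forces $\eta_{B_k}\in\CE$. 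With that lemma substituted for your retract step (and with the minor check, via Theorem \ref{thm:mfc-bim}, that $\CA_{ij}\boxtimes_{\CA_{jj}}-$ lands in $(n+1)\Vect$ so both functors are defined), your reduction to the unit axiom at $Z=B_j$ together with Corollary \ref{cor:simp-nonzero} and Proposition \ref{prop:simp-condense} completes the proof --- though the paper's transport argument buys the same conclusion with none of this bookkeeping.
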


\begin{proof}
We have a decomposition of right $\CA$-modules $\CA = \oplus_i\CA_i$ where $\CA_i= e_i\otimes\CA$. In view of Proposition \ref{prop:sigma-rmod}, we identify $\Sigma\CA$ with $\RMod_\CA((n+1)\Vect)$. Then 
$\CA = \Hom_{\Sigma\CA}(\CA,\CA)$ implies $\CA_{i j} = \Hom_{\Sigma\CA}(\CA_j,\CA_i)$. Invoking Corollary \ref{cor:sigma-hom}, we obtain (1). 
The equivalence $\Sigma\CA_{j j} = \Sigma\CA_{k k}$ maps $\CA_{i j}$ to $\CA_{i k}$ and maps $\CA_{i j} \simeq \CA_{i j}\boxtimes_{\CA_{j j}}\CA_{j j}$ to $\CA_{i j}\boxtimes_{\CA_{j j}}\CA_{j k}$. We obtain (2).
(3) is a consequence of (2).
\end{proof}

\begin{defn}
An {\em $E_m$-multi-fusion $n$-category} is a condensation-complete $\C$-linear $E_m$-monoidal $n$-category $\CA$ such that $\Sigma^m\CA$ is a separable $(n+m)$-category. 
We say that $\CA$ is {\em connected} if $\CA$ is an indecomposable separable $n$-category.
An $E_m$-multi-fusion $n$-category with a simple tensor unit is also referred to as an {\em $E_m$-fusion $n$-category}. We adopt the convention that $\C$ is the only $E_m$-fusion 0-category.
\end{defn}

\begin{rem}
If $\CA$ is an $E_m$-multi-fusion $n$-category, so is $\CA^{\op k}$ because $\Sigma^m(\CA^{\op k})=(\Sigma^m\CA)^{\op(k+m)}$ is separable.
\end{rem}

\begin{rem} \label{rem:mfc-fusion}
If $\CA$ is an indecomposable $E_m$-multi-fusion $n$-category where $m\ge2$ then $\CA$ is of fusion type. In fact, in the notations of Proposition \ref{prop:mfc-morita}, we have $X\otimes Y \simeq Y\otimes X \in \CA_{i i}\cap\CA_{j j}$ for $X\in\CA_{i j}$ and $Y\in\CA_{j i}$. Hence $\CA_{i j}=0$ whenever $i\ne j$.
\end{rem}

\begin{rem}
It was proved in sketch in \cite{JF20} that the following conditions are equivalent for a condensation-complete $\C$-linear $E_m$-monoidal $n$-category $\CA$:
\begin{enumerate}
\item $\CA$ is an $E_m$-multi-fusion $n$-category.
\item $\CA$ is fully dualizable in $\Mor_m(\KarCat^\C_n)$.
\item $\CA$ is $(m+1)$-dualizable in $\Mor_m(\KarCat^\C_n)$.
\end{enumerate}
\end{rem}

\begin{prop} \label{prop:indecom-fus}
Let $\CA$ be a $\C$-linear monoidal $n$-category. Suppose that $\CA$ is an indecomposable separable $n$-category. Then $\CA$ is a fusion $n$-category.
\end{prop}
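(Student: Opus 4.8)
The plan is to realise $\CA$ as a delooping and then import the structure theory of (braided) multi-fusion categories proved above. First I would record that $\one_\CA \neq 0$ (otherwise every object is zero, contradicting indecomposability) and apply Corollary \ref{cor:sigma-hom} with $A = \one_\CA$ to obtain an equivalence of $n$-categories $\CA \simeq \Sigma\CB$, where $\CB := \Omega\CA = \Hom_\CA(\one_\CA,\one_\CA)$. By Corollary \ref{cor:simp-nonzero} together with Proposition \ref{prop:simp-condense}, indecomposability makes every object of $\CA$ a condensate of $\one_\CA$, so the counit $\Sigma\Omega\CA \to \CA$ of the adjunction $\Sigma \dashv \Omega$ is an equivalence; since $\one_\CA$ is the tensor unit this counit is monoidal, giving $\CA \simeq \Sigma\CB$ as monoidal $n$-categories. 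By Proposition \ref{prop:sep-mfc}, $\CB$ is a multi-fusion $(n-1)$-category, and the tensor product of $\CA$ restricts to a second monoidal structure on $\CB$ which satisfies the interchange law with composition, so Eckmann--Hilton makes $\CB$ braided. Finally $\Sigma\CB \simeq \CA$ is indecomposable, hence $\CB$ is an indecomposable braided multi-fusion $(n-1)$-category.

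Next I would prove that $\one_\CA$ is simple. Using the criterion that $\one_\CA$ is simple iff $\Id_{\one_\CA} = \one_\CB$ is simple in $\CB$, this reduces to showing that $\CB$ is fusion, which I would do by running the argument of Remark \ref{rem:mfc-fusion} at the level of $\CB$. By Proposition \ref{prop:mfc-morita} write $\one_\CB = \bigoplus_i f_i$ and $\CB = \bigoplus_{ij}\CB_{ij}$ with each $\CB_{ij}$ an invertible, hence nonzero, $\CB_{ii}$-$\CB_{jj}$-bimodule. If some $i \neq j$ occurred, I would pick $0 \neq X \in \CB_{ij}$ with right dual $X^\vee \in \CB_{ji}$; then $0 \neq X \otimes X^\vee \in \CB_{ii}$ while $X^\vee \otimes X \in \CB_{jj}$, and the braiding forces these to be isomorphic, which is impossible because distinct blocks are orthogonal. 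Hence the decomposition of $\one_\CB$ has a single summand, $\CB$ is fusion, and $\one_\CA$ is simple.

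It then remains to show that $\CA$ is multi-fusion, i.e. that $\Sigma\CA$ is a separable $(n+1)$-category. Here I would use $\CA \simeq \Sigma\CB$ with $\CB$ braided: the braiding supplies the monoidal structure on $\Sigma\CB$ and, by the rigidity of condensation completions from the first part of \cite[Theorem 4.1.1]{GJF19} (cf. Remark \ref{rem:sigma-rigid}), makes $\CA$ rigid, so that every object of $\CA$ has a left and a right dual. A rigid separable monoidal $n$-category should then be $2$-dualizable in $\Mor_1(\KarCat^\C_n)$: object-duality yields the dual algebra $\CA^\rev$ together with the unit and counit bimodules, while separability of $\CA$ (which already has duals for all higher morphisms) yields the adjoints of their evaluation and coevaluation. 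By Corollary \ref{cor:mfc-dual} this shows that $\CA$ is multi-fusion; equivalently one may exhibit $\Sigma\CA^\rev$ as a dual of $\Sigma\CA$ in $\KarCat^\C_{n+1}$ through the regular bimodule and then conclude separability of $\Sigma\CA$ from Corollary \ref{cor:sep-dual}. Combined with the previous paragraph, $\CA$ is fusion.

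The hard part will be this last step. The reduction and the simple-unit argument are essentially formal once $\CB$ is known to be braided multi-fusion; the genuine content is the separability of $\Sigma\CA$. I expect the delicate point to be upgrading the rigidity of $\CA = \Sigma\CB$ to full $2$-dualizability in the Morita category (equivalently, checking that the regular $\CA$-$\CA$-bimodule and its duals assemble into a dual pair in $\KarCat^\C_{n+1}$), since this is precisely where separability of the underlying $n$-category $\CA$ must be combined with the object-level duals produced by the braiding on $\CB$.
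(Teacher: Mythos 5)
Your first two paragraphs are sound and, importantly, non-circular: you correctly avoid Theorem \ref{thm:sep-mf} and Corollary \ref{cor:indecom-emfus}, whose proofs in the paper depend on this very proposition, and your route to the simple unit --- $\CB=\Omega\CA$ is indecomposable multi-fusion by Corollary \ref{cor:sigma-hom} and Proposition \ref{prop:sep-mfc}, braided by Eckmann--Hilton, hence of fusion type by the block decomposition of Proposition \ref{prop:mfc-morita} together with the duality-and-braiding argument of Remark \ref{rem:mfc-fusion} --- is a legitimate alternative to the paper, whose proof never invokes the braiding at all. One minor repair: Remark \ref{rem:sigma-rigid} is stated for \emph{symmetric} monoidal $\CC$, so you cannot cite it for the merely braided $\CB$; the paper instead obtains $1$-rigidity of $\CA$ from $\CA=\Sigma\Omega\CA$ and Theorem \ref{thm:sigma-rmod}(3), and that derivation is available to you.

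Your third paragraph, however, has a genuine gap at exactly the point you flag as hard, and flagging it does not fill it. The claim that a rigid monoidal $n$-category with separable underlying category ``should be'' $2$-dualizable in $\Mor_1(\KarCat^\C_n)$ is, via Corollary \ref{cor:mfc-dual}, \emph{equivalent} to the statement being proved, and you offer no mechanism for producing the adjunction data: what is needed is that the multiplication $\otimes:\CA\boxtimes\CA\to\CA$ extends to a condensation of $\CA$-$\CA$-bimodules, i.e.\ that $\CA$ is separable as an algebra, and separability of the underlying $n$-category does not formally yield this --- already for $n=1$ this is the nontrivial theorem that fusion $1$-categories are separable algebras \cite{DSPS20}, and the paper's remark following Corollary \ref{cor:em-omega} records that an analogous claim in \cite{JF20} founders precisely on getting the condensation to terminate in a \emph{nonzero} $n$-morphism. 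The paper's proof supplies the missing mechanism from indecomposability: since $\CA$ is a simple object of $(n+1)\Vect$ and $\one_\CA\otimes\one_\CA\simeq\one_\CA\neq0$, Proposition \ref{prop:simp-condense} extends $\otimes$ to a condensation $\CA\boxtimes\CA\condense\CA$ in $(n+1)\Vect$; because every object of $\CA$ is a condensate of $\one_\CA$, the morphism $\otimes$ and then the consecutive counit maps $v_1,\dots,v_{n-1}$ are simple, so the terminating datum is a scalar and the condensation is essentially canonical; this canonicity, combined with the $1$-rigidity of $\CA$, is what allows the condensation to be lifted to $\BMod_{\CA|\CA}((n+1)\Vect)$, exhibiting $\CM=\RMod_\CA((n+1)\Vect)$ as a condensate of $(n+1)\Vect$, hence separable, whence $\CA\simeq\Omega(\CM,\CA)$ is fusion. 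None of this condensation-extension-and-lifting argument appears in your plan, so the proposal does not yet constitute a proof of the essential step.
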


\begin{proof}
The claim is trivial for $n=0$. We assume $n\ge1$.
By Proposition \ref{prop:simp-condense}, the tensor product functor $\otimes:\CA\boxtimes\CA\to\CA$ extends to a condensation in $(n+1)\Vect$. We assume that the condensation is given by the consecutive counit maps $v_1:\otimes\circ\otimes^R \to \Id_\CA$, $v_2:v_1\circ v_1^R \to \Id_{\Id_\CA}$, etc. terminated by an identity $v_n\circ w=1$. Since $\one_\CA\otimes\one_\CA\simeq\one_\CA$ and since every object of $\CA$ is a condensate of $\one_\CA$, $\otimes$ is a simple morphism of $(n+1)\Vect$. Thus the counit maps $v_1,\dots,v_{n-1}$ are all simple. Thus $w$ is (essentially) a scalar inverse to $v_n$. 

Since $\CA=\Sigma\Omega\CA$ by Corollary \ref{cor:sigma-hom}, $\CA$ is 1-rigid by Theorem \ref{thm:sigma-rmod}(3). Therefore, the canonical condensation $\otimes:\CA\boxtimes\CA\condense\CA$ lifts to $\BMod_{\CA|\CA}((n+1)\Vect)$, inducing a condensation $-\boxtimes\CA\condense\Id_\CM$ where $\CM=\RMod_\CA((n+1)\Vect)$ and therefore extending $-\boxtimes\CA:(n+1)\Vect\to\CM$ to a condensation. This shows that $\CM$ is separable hence $\CA\simeq\Omega(\CM,\CA)$ is a fusion $n$-category.
\end{proof}

\begin{cor} \label{cor:indecom-emfus}
Let $\CA$ be a $\C$-linear $E_m$-monoidal $n$-category where $m\ge1$. Suppose that $\CA$ is an indecomposable separable $n$-category. Then $\CA$ is an $E_m$-fusion $n$-category.
\end{cor}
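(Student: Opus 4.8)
The goal is to prove Corollary \ref{cor:indecom-emfus}: if $\CA$ is a $\C$-linear $E_m$-monoidal $n$-category with $m\ge1$ that is an indecomposable separable $n$-category, then $\CA$ is an $E_m$-fusion $n$-category.

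The plan is to bootstrap from the already-established $E_1$ case, namely Proposition \ref{prop:indecom-fus}, which asserts exactly this statement for $m=1$. First I would observe that an $E_m$-monoidal structure forgets to an $E_1$-monoidal structure via the forgetful functor $E_m\Cat^\C_n \to E_1\Cat^\C_n$ discussed in the excerpt. Applying Proposition \ref{prop:indecom-fus} to the underlying $E_1$-monoidal $n$-category immediately shows that $\CA$ is a fusion $n$-category, i.e. that its tensor unit $\one_\CA$ is simple and that $\CA$ itself (condensation-complete and separable) is in fact multi-fusion with simple unit. So the simplicity of the tensor unit is for free.

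What remains is to verify that $\CA$ qualifies as an \emph{$E_m$}-multi-fusion $n$-category in the sense of the definition, i.e. that $\Sigma^m\CA$ is a separable $(n+m)$-category; once we have that together with the simple tensor unit, $\CA$ is $E_m$-fusion by definition. Here I would argue by induction on $m$. The base case $m=1$ is Proposition \ref{prop:indecom-fus} (a fusion $n$-category is by definition one whose $\Sigma\CA$ is separable with $\CA$ having simple unit). For the inductive step, the key move is to pass to the delooping: since $\CA$ is $E_m$-monoidal with $m\ge1$, the delooping $B\CA$ is $E_{m-1}$-monoidal, and $\Sigma\CA = \Kar(B\CA)$ is a condensation-complete $\C$-linear $E_{m-1}$-monoidal $(n+1)$-category. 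By Corollary \ref{cor:sigma-hom} and the established $m=1$ conclusion, $\Sigma\CA$ is an indecomposable separable $(n+1)$-category. I would then apply the inductive hypothesis (the statement for $m-1$ in dimension $n+1$) to $\Sigma\CA$, concluding that $\Sigma\CA$ is an $E_{m-1}$-fusion $(n+1)$-category, which means $\Sigma^{m-1}(\Sigma\CA) = \Sigma^m\CA$ is separable. This is exactly the condition needed.

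The main subtlety to check carefully is that $\Sigma\CA$ really is again \emph{indecomposable} and \emph{separable} as a plain $(n+1)$-category, so that the inductive hypothesis applies — this is where Corollary \ref{cor:sigma-hom} (giving $\CA = \Sigma\Omega\CA$, hence $\Sigma\CA$ is generated by condensates of a single object) and the separability of $\Sigma\CA$ coming from the $m=1$ fusion conclusion do the work. I expect the genuine obstacle to be purely bookkeeping: confirming that the forgetful and delooping functors interact correctly with the $E_m$-structure, i.e. that $\Sigma^m\CA = \Sigma^{m-1}(\Sigma\CA)$ as the relevant iterated delooping and that indecomposability of $\CA$ transfers to indecomposability of $\Sigma\CA$. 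Once these compatibilities are in hand the induction closes immediately, and no delicate categorical construction beyond what Proposition \ref{prop:indecom-fus} already provides is required.
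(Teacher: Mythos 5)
Your proposal is correct and takes essentially the same route as the paper, whose entire proof reads ``Apply Proposition \ref{prop:indecom-fus} for $m$ times'' --- exactly your induction on $m$, with $\Sigma\CA$ inheriting the $E_{m-1}$-monoidal structure and remaining separable (since $\CA$ is fusion) and indecomposable (its generating object $\bullet$ is simple because $\one_\CA$ is) at each step. The bookkeeping you flag is precisely what the paper leaves implicit, and your unpacking of it is sound.
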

\begin{proof}
Apply Proposition \ref{prop:indecom-fus} for $m$ times. 
\end{proof}

\begin{thm} \label{thm:sep-mf}
Let $\CA$ be a condensation-complete $\C$-linear $E_m$-monoidal $n$-category where $m\ge1$. Suppose that $\Sigma\CA$ is a separable $(n+1)$-category. Then $\CA$ is an $E_m$-multi-fusion $n$-category.
\end{thm}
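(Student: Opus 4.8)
The plan is to reduce to the case where $\Sigma\CA$ is indecomposable and then quote Corollary~\ref{cor:indecom-emfus}. For $m=1$ the asserted conclusion---that $\Sigma\CA$ is separable---is literally the hypothesis, so I would assume $m\ge2$ throughout. The first observation is that $\CA=\Hom_{\Sigma\CA}(\bullet,\bullet)$ is itself a separable $n$-category, being a hom-category of the separable $(n+1)$-category $\Sigma\CA$; in particular $\CA$ is a multi-fusion $n$-category, so its tensor unit has a finite decomposition $\one_\CA=\oplus_i e_i$ into simple objects and $\CA=\oplus_{i,j}\CA_{ij}$ with $\CA_{ij}=e_i\otimes\CA\otimes e_j$ as in Proposition~\ref{prop:mfc-morita}.

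Next I would split $\CA$ along the unit. Declaring $i\sim j$ whenever $\CA_{ij}\neq0$ and letting $c$ range over the finitely many equivalence classes, I claim that each partial unit $\one^{(c)}:=\oplus_{i\in c}e_i$ is a central idempotent: this follows from $e_i\otimes e_j\simeq\delta_{ij}e_i$ together with the fact that a nonzero $\CA_{ij}$ forces $i\sim j$, which makes $X\otimes\one^{(c)}$ and $\one^{(c)}\otimes X$ agree for every $X\in\CA_{ij}$. Consequently the classes cut $\CA$ into an orthogonal direct sum $\CA=\oplus_c\CA^{(c)}$ of $E_m$-monoidal $n$-categories, where $\CA^{(c)}=\oplus_{i,j\in c}\CA_{ij}$, and each $\CA^{(c)}$ is a connected block, i.e.\ an indecomposable multi-fusion $n$-category whose simple unit-summands $e_i$ ($i\in c$) are pairwise joined by nonzero $\CA_{ij}$. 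Passing to deloopings via the module description $\Sigma(-)\simeq\RMod_{(-)}((n+1)\Vect)$ of Proposition~\ref{prop:sigma-rmod}, modules split along the central idempotents $\one^{(c)}$, so $\Sigma\CA=\oplus_c\Sigma\CA^{(c)}$ with each $\Sigma\CA^{(c)}$ indecomposable (its simple objects $e_i\otimes\CA^{(c)}$ are connected through the nonzero $\Hom$-categories $\CA_{ij}$; cf.\ Corollary~\ref{cor:simp-nonzero} and Corollary~\ref{cor:sigma-hom}), and the analogous splitting applied at each further delooping yields $\Sigma^m\CA=\oplus_c\Sigma^m\CA^{(c)}$.

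It then remains to treat one indecomposable block. Since $\CA^{(c)}$ is $E_m$-monoidal, $\Sigma\CA^{(c)}$ is an $E_{m-1}$-monoidal $(n+1)$-category, and by the previous step it is an indecomposable separable $(n+1)$-category; as $m-1\ge1$, Corollary~\ref{cor:indecom-emfus} shows $\Sigma\CA^{(c)}$ is an $E_{m-1}$-fusion $(n+1)$-category. By definition this means $\Sigma^{m-1}(\Sigma\CA^{(c)})=\Sigma^m\CA^{(c)}$ is separable, and therefore $\Sigma^m\CA=\oplus_c\Sigma^m\CA^{(c)}$ is a finite direct sum of separable $(n+m)$-categories, hence separable. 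Thus $\CA$ is an $E_m$-multi-fusion $n$-category.

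The step I expect to be the main obstacle is the centrality of the block idempotents $\one^{(c)}$, and with it the compatibility of the whole decomposition with iterated delooping. What is at stake is that $\CA=\oplus_c\CA^{(c)}$ must be a decomposition of $E_m$-monoidal $n$-categories that $\Sigma^m$ carries to a direct sum---not merely a decomposition of the underlying separable $n$-category, which (as already $\Rep(G)$ shows) can be strictly finer and carries no monoidal meaning. Once this monoidal block decomposition is secured, the reduction to the indecomposable case is immediate and Corollary~\ref{cor:indecom-emfus} supplies the rest.
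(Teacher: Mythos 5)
Your proposal is correct and takes essentially the same route as the paper: the paper's proof likewise reduces to the case where $\Sigma\CA$ is an indecomposable separable $(n+1)$-category and then applies Corollary \ref{cor:indecom-emfus} to $\Sigma\CA$ viewed as an $E_{m-1}$-monoidal $(n+1)$-category (the $m=1$ case being trivial). The monoidal block decomposition along the central summands $\one^{(c)}$ of $\one_\CA$ that you spell out, together with its compatibility with delooping, is exactly the content the paper compresses into the phrase ``we may assume that $\Sigma\CA$ is an indecomposable separable $(n+1)$-category.''
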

\begin{proof}
The claim is trivial for $m=1$. For $m\ge2$, we may assume that $\Sigma\CA$ is an indecomposable separable $(n+1)$-category. 
Invoking Corollary \ref{cor:indecom-emfus}, we conclude that $\Sigma\CA$ is an $E_{m-1}$-fusion $(n+1)$-category. That is, $\CA$ is an $E_m$-multi-fusion $n$-category. 
\end{proof}

\begin{cor} \label{cor:em-omega}
If $\CA$ is an $E_m$-multi-fusion $n$-category where $n\ge1$, then $\Omega\CA$ is an $E_{m+1}$-multi-fusion $(n-1)$-category.
\end{cor}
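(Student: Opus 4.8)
The plan is to present $\Omega\CA$ as the endomorphism $(n-1)$-category $\Hom_\CA(\one_\CA,\one_\CA)$ of the tensor unit and then to reduce the defining condition of an $E_{m+1}$-multi-fusion category (separability of $\Sigma^{m+1}\Omega\CA$) to a single application of $\Sigma$ by invoking Theorem \ref{thm:sep-mf}. The whole point is that Theorem \ref{thm:sep-mf} converts the heavy requirement ``$\Sigma^{m+1}(\Omega\CA)$ is separable'' into the lighter one ``$\Sigma(\Omega\CA)$ is separable'', and the latter is exactly the assertion that $\Omega\CA$ is \emph{merely} a multi-fusion $(n-1)$-category.

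First I would record the formal structure of $\Omega\CA$. Since $\Omega$ is the right adjoint of the delooping $B$, the looping of an $E_m$-monoidal $n$-category is an $E_{m+1}$-monoidal $(n-1)$-category; moreover $\Omega$ preserves condensation-completeness and $\C$-linearity, so $\Omega\CA$ is a condensation-complete $\C$-linear $E_{m+1}$-monoidal $(n-1)$-category. This already supplies all the structural hypotheses of Theorem \ref{thm:sep-mf} (with its $m$ and $n$ replaced by $m+1\ge1$ and $n-1$); only the separability of the $n$-category $\Sigma(\Omega\CA)$ remains to be verified.

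The key step is to observe that $\CA$ is itself a separable $n$-category. Indeed $\CA\simeq\Omega^m\Sigma^m\CA$ because $\Omega\Sigma\simeq\id$ on condensation-complete categories (one has $\Omega\Sigma\CC=\Hom_{\Kar(B\CC)}(\bullet,\bullet)=\Hom_{B\CC}(\bullet,\bullet)=\CC$), while $\Sigma^m\CA$ is separable by hypothesis; iterating the fact that the hom-categories of a separable category are separable then shows that $\CA$ is a separable $n$-category. With this in hand I would apply Proposition \ref{prop:sep-mfc} to the object $\one_\CA\in\CA$, which gives that $\Hom_\CA(\one_\CA,\one_\CA)=\Omega\CA$ is a multi-fusion $(n-1)$-category, i.e.\ that $\Sigma(\Omega\CA)$ is separable. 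Feeding this into Theorem \ref{thm:sep-mf} upgrades $\Omega\CA$ to an $E_{m+1}$-multi-fusion $(n-1)$-category, as desired.

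I expect the main obstacle to be conceptual bookkeeping rather than any hard estimate: one must keep the two gradings straight (looping raises the $E$-degree from $m$ to $m+1$ while lowering the categorical degree from $n$ to $n-1$) and must resist the temptation to attack separability of $\Sigma^{m+1}\Omega\CA$ directly. The genuinely non-formal input is the identification $\CA\simeq\Omega^m\Sigma^m\CA$ together with stability of separability under passing to hom-categories, which is precisely what promotes the multi-fusion hypothesis on $\CA$ into separability of $\CA$ as a plain $n$-category and thereby unlocks Proposition \ref{prop:sep-mfc}. Once separability of $\Sigma\Omega\CA$ is secured, Theorem \ref{thm:sep-mf} carries out the rest automatically.
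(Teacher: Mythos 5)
Your proof is correct and takes essentially the same route as the paper's: apply Proposition \ref{prop:sep-mfc} to $\one_\CA$ (using that $\CA\simeq\Omega^m\Sigma^m\CA$ is a separable $n$-category) to get that $\Sigma\Omega\CA$ is separable, then invoke Theorem \ref{thm:sep-mf} to upgrade $\Omega\CA$ to an $E_{m+1}$-multi-fusion $(n-1)$-category. The only difference is that you make explicit the bookkeeping (separability of $\CA$ itself via iterated hom-categories, and the structural hypotheses on $\Omega\CA$) that the paper's two-line proof leaves implicit.
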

\begin{proof}
By Proposition \ref{prop:sep-mfc}, $\Sigma\Omega\CA$ is a separable $n$-category. Then Apply Theorem \ref{thm:sep-mf} to $\Omega\CA$.
\end{proof}

\begin{rem}
According to Theorem \ref{thm:sep-mf}, the notion of a braided or symmetric fusion 1-category agrees with the usual one.
\end{rem}

\begin{rem}
In the final version of \cite{JF20}, Johson-Freyd derived the $m=1$ case of Corollary\,\ref{cor:em-omega} from an interesting result, which says that a full rigid monoidal subcategory $\CA'$ of a multi-fusion $n$-category $\CA$ is also a multi-fusion $n$-category \cite[Proposition II.12]{JF20}. However, there is a gap in his proof of this result: it is not clear why the condensation $\CA'\boxtimes\CA'\to\CA'$ stops at a {\em nonzero} $n$-morphism. Even in the simplest case $n=1$, we do not know how to fix the gap without resort to the dimension theory of fusion 1-categories which is not available for higher fusion categories yet. We believe the claim is true but a proof appeals to considerable development on higher fusion categories.
\end{rem}

\subsection{Centers} \label{sec:centers}
In this subsection, we study higher centers and prove a prediction in \cite{KW14,KWZ15}. 
The following definition is standard. See \cite[Section 5.3]{Lur14}. 

\begin{defn} \label{defn:centers}
Let $\CA$ be a condensation-complete $\C$-linear $E_m$-monoidal $n$-category. The {\em $E_m$-center} of $\CA$ is the universal condensation-complete $\C$-linear $E_m$-monoidal $n$-category $\FZ_m(\CA)$ equipped with a unital action $F:\FZ_m(\CA)\boxtimes\CA\to\CA$, i.e. a $\C$-linear $E_m$-monoidal functor rendering the following diagram in $E_m\KarCat^\C_n$ commutative up to equivalence:
$$\xymatrix{
  & \FZ_m(\CA)\boxtimes\CA \ar[rd]^F \\
  \CA \ar[ru]^{\one_{\FZ_m(\CA)}\boxtimes\Id_\CA} \ar[rr]^{\Id_\CA} && \CA .
}
$$
\end{defn}

\begin{exam}
For $\CA\in E_0\KarCat^\C_n$, we have $\FZ_0(\CA) = \Fun(\CA,\CA)$ which is independent of the distinguished object $\one_\CA$. In fact, giving a unital action $\CB\boxtimes\CA\to\CA$ is equivalent to giving a $\C$-linear functor $\CB\to\Fun(\CA,\CA)$ that maps $\one_\CB$ to $\Id_\CA$.
By slightly abusing notation, we use $\FZ_0(\CA)$ to denote $\Fun(\CA,\CA)$ for $\CA\in\KarCat^\C_n$.
\end{exam}

\begin{rem}
By definition, $\FZ_m(\CA^{\op k}) = \FZ_m(\CA)^{\op k}$ for $\CA\in E_m\KarCat^\C_n$. 
\end{rem}

\begin{rem}
The composition $\FZ_m(\CA)\boxtimes\FZ_m(\CA)\boxtimes\CA \to \FZ_m(\CA)\boxtimes\CA \to \CA$ induces a $\C$-linear $E_m$-monoidal functor $\FZ_m(\CA)\boxtimes\FZ_m(\CA)\to\FZ_m(\CA)$, promoting $\FZ_m(\CA)$ to a $\C$-linear $E_{m+1}$-monoidal category.
\end{rem}

\begin{thm} \label{thm:mfc-center}
For $\CA\in E_m\KarCat^\C_n$, $\FZ_m(\CA) = \Omega^k\FZ_{m-k}(\Sigma^k\CA)$ where $0\le k\le m$. In particular, $\FZ_m(\CA) = \Omega^m\Fun(\Sigma^m\CA,\Sigma^m\CA)$.
\end{thm}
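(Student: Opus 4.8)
The plan is to reduce everything to the single-step identity $\FZ_m(\CA)\simeq\Omega\FZ_{m-1}(\Sigma\CA)$ for $m\ge1$ and $\CA\in E_m\KarCat^\C_n$. Granting this, the general formula $\FZ_m(\CA)=\Omega^k\FZ_{m-k}(\Sigma^k\CA)$ follows by induction on $k$: applying the single-step identity to $\Sigma^{k-1}\CA\in E_{m-k+1}\KarCat^\C_{n+k-1}$ gives $\FZ_{m-k+1}(\Sigma^{k-1}\CA)\simeq\Omega\FZ_{m-k}(\Sigma^k\CA)$, and composing this with the inductive hypothesis $\FZ_m(\CA)\simeq\Omega^{k-1}\FZ_{m-k+1}(\Sigma^{k-1}\CA)$ yields the claim for all $0\le k\le m$. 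The ``in particular'' statement is then the case $k=m$ combined with the computation $\FZ_0(\CX)=\Fun(\CX,\CX)$.

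To prove the single-step identity I would show that $\Omega\FZ_{m-1}(\Sigma\CA)$ represents the functor of unital actions defining $\FZ_m(\CA)$ in Definition \ref{defn:centers}; that is, for every $\CB\in E_m\KarCat^\C_n$ I would produce an equivalence, natural in $\CB$,
$$\Fun^{E_m}\bigl(\CB,\Omega\FZ_{m-1}(\Sigma\CA)\bigr)\simeq \{\text{unital } E_m\text{-actions of }\CB\text{ on }\CA\}.$$
The three ingredients are: the $\Sigma\dashv\Omega$ adjunction $\Fun^{E_{m-1}}(\Sigma\CC,\CD)\simeq\Fun^{E_m}(\CC,\Omega\CD)$ for condensation-complete $\CD$; the monoidal compatibility $\Sigma(\CC\boxtimes\CD)\simeq\Sigma\CC\boxtimes\Sigma\CD$; and the fact that the adjunction unit $\CA\to\Omega\Sigma\CA=\Hom_{\Sigma\CA}(\bullet,\bullet)$ is an equivalence, since Karoubi completion does not alter the endomorphism category of the original object $\bullet$.

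Using these, the key step is to identify unital $E_m$-actions of $\CB$ on $\CA$ with unital $E_{m-1}$-actions of $\Sigma\CB$ on $\Sigma\CA$. On underlying functors this is the chain $\Fun^{E_m}(\CB\boxtimes\CA,\CA)\simeq\Fun^{E_m}(\CB\boxtimes\CA,\Omega\Sigma\CA)\simeq\Fun^{E_{m-1}}(\Sigma(\CB\boxtimes\CA),\Sigma\CA)\simeq\Fun^{E_{m-1}}(\Sigma\CB\boxtimes\Sigma\CA,\Sigma\CA)$. It then remains to match the unitality condition: I would check that $\Sigma$ carries the unit-insertion functor $\one_\CB\boxtimes\Id_\CA\colon\CA\to\CB\boxtimes\CA$ to $\one_{\Sigma\CB}\boxtimes\Id_{\Sigma\CA}\colon\Sigma\CA\to\Sigma\CB\boxtimes\Sigma\CA$, using that $\Sigma$ is monoidal and preserves the tensor unit (so $\Sigma\one_\CB\simeq\one_{\Sigma\CB}$) and preserves identities. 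Consequently, under the equivalence $\Omega\Sigma\CA\simeq\CA$, the condition $F\circ(\one_\CB\boxtimes\Id_\CA)\simeq\Id_\CA$ corresponds precisely to $\tilde F\circ(\one_{\Sigma\CB}\boxtimes\Id_{\Sigma\CA})\simeq\Id_{\Sigma\CA}$.

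Finally I would assemble the representation: by the universal property of $\FZ_{m-1}(\Sigma\CA)$, unital $E_{m-1}$-actions of $\Sigma\CB$ on $\Sigma\CA$ are equivalent to $\Fun^{E_{m-1}}(\Sigma\CB,\FZ_{m-1}(\Sigma\CA))$, which by the $\Sigma\dashv\Omega$ adjunction equals $\Fun^{E_m}(\CB,\Omega\FZ_{m-1}(\Sigma\CA))$. Chaining with the previous paragraph gives the desired natural equivalence, hence $\FZ_m(\CA)\simeq\Omega\FZ_{m-1}(\Sigma\CA)$. I expect the main obstacle to lie in the compatibility of the unitality data rather than the underlying functors: one must verify not merely that objects correspond but that the full coherent structure of a unital action---the $E_m$-monoidal functor together with the filling of the triangle in Definition \ref{defn:centers}---transports faithfully across $\Sigma$ and $\Omega$, and that every equivalence in the chain is natural in $\CB$, so that genuine representability, and not just a bijection on equivalence classes, is obtained.
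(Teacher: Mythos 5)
Your proof is correct and takes essentially the same route as the paper's: the paper applies the $\Sigma\dashv\Omega$ adjunction $k$-fold at once, together with $\Sigma^k(\CB\boxtimes\CA)\simeq\Sigma^k\CB\boxtimes\Sigma^k\CA$ and the unit equivalence $\CA\simeq\Omega^k\Sigma^k\CA$, to identify $\Fun^{E_m}(\CB,\FZ_m(\CA))$ with $\Fun^{E_m}(\CB,\Omega^k\FZ_{m-k}(\Sigma^k\CA))$ and conclude by representability, whereas you run the identical argument for $k=1$ and then induct on $k$. Your explicit verification that the unitality data (the unit-insertion functor and the triangle filling of Definition \ref{defn:centers}) transports across $\Sigma$ is left implicit in the paper's terser proof, but it is the same argument, not a different one.
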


\begin{proof}
Since $\Sigma$ is left adjoint to $\Omega$, we have 
$$\Fun^{E_m}(\CB,\Omega^k\CC) \simeq \Fun^{E_{m-k}}(\Sigma^k\CB,\CC)$$ 
for $\CB\in E_m\KarCat^\C_n$ and $\CC\in E_{m-k}\KarCat^\C_{n+k}$. In particular, 
$$\Fun^{E_m}(\CB\boxtimes\CA,\CA) \simeq \Fun^{E_{m-k}}(\Sigma^k\CB\boxtimes\Sigma^k\CA,\Sigma^k\CA).$$ Therefore, 
$$\Fun^{E_m}(\CB,\FZ_m(\CA)) \simeq \Fun^{E_{m-k}}(\Sigma^k\CB,\FZ_{m-k}(\Sigma^k\CA)) \simeq \Fun^{E_m}(\CB,\Omega^k\FZ_{m-k}(\Sigma^k\CA)).$$
Hence $\FZ_m(\CA) = \Omega^k\FZ_{m-k}(\Sigma^k\CA)$.
\end{proof}

\begin{exam}
For a multi-fusion $n$-category $\CA$, $\FZ_1(\CA) = \Omega\Fun(\Sigma\CA,\Sigma\CA) \simeq \Fun_{\CA|\CA}(\CA,\CA)$. 
\end{exam}

\begin{cor}
If $\CA$ is an $E_m$-multi-fusion $n$-category then $\FZ_m(\CA)$ is an $E_{m+1}$-multi-fusion $n$-category. 
\end{cor}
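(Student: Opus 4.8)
The plan is to reduce the statement to the formula already established in Theorem \ref{thm:mfc-center}, namely $\FZ_m(\CA) \simeq \Omega^m\Fun(\Sigma^m\CA,\Sigma^m\CA)$, and then to peel off the loopings one at a time using Corollary \ref{cor:em-omega}. First I would record that, since $\CA$ is $E_m$-multi-fusion, $\Sigma^m\CA$ is by definition a separable $(n+m)$-category; hence by Proposition \ref{prop:sep-mfc} the endofunctor category $\CB := \Fun(\Sigma^m\CA,\Sigma^m\CA)$ is a multi-fusion $(n+m)$-category, i.e. an $E_1$-multi-fusion $(n+m)$-category. By Theorem \ref{thm:mfc-center} we have $\FZ_m(\CA) \simeq \Omega^m\CB$ as $E_{m+1}$-monoidal $n$-categories, so it suffices to show that $\Omega^m\CB$ is $E_{m+1}$-multi-fusion.

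The second step is a short induction: I claim that for $0\le j\le m$ the category $\Omega^j\CB$ is an $E_{1+j}$-multi-fusion $(n+m-j)$-category. The base case $j=0$ is the previous paragraph. For the inductive step, assuming $\Omega^j\CB$ is $E_{1+j}$-multi-fusion of dimension $n+m-j$ with $n+m-j\ge1$, Corollary \ref{cor:em-omega} (applied with underlying dimension $n+m-j$ and monoidal level $1+j$) yields that $\Omega^{j+1}\CB = \Omega(\Omega^j\CB)$ is an $E_{j+2}$-multi-fusion $(n+m-j-1)$-category. The hypothesis $n+m-j\ge1$ is needed only for $0\le j\le m-1$, and there the dimension is smallest at $j=m-1$, where it equals $n+1\ge1$; so every invocation of Corollary \ref{cor:em-omega} is legitimate. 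Taking $j=m$ gives that $\Omega^m\CB$ is an $E_{m+1}$-multi-fusion $n$-category, which by the identification above is $\FZ_m(\CA)$.

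I do not expect a genuine obstacle here: the content has been front-loaded into Theorem \ref{thm:mfc-center} (the computation of the center as an iterated loop of a functor category) and into Corollary \ref{cor:em-omega} (that looping raises the multi-fusion level and lowers the dimension). The only point requiring care is the dimension bookkeeping in the induction — one must check that the dimension stays $\ge1$ throughout so that Corollary \ref{cor:em-omega} applies at each stage — but this is immediate since the smallest dimension encountered before a looping is $n+1$. The degenerate case $m=0$ is subsumed uniformly: there $\Omega^0\CB = \Fun(\CA,\CA)$ is multi-fusion directly by Proposition \ref{prop:sep-mfc}, with no loopings required.
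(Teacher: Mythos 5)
Your proposal is correct and follows exactly the paper's route: the paper's own proof is the one-line ``Combine Corollary \ref{cor:em-omega} and Theorem \ref{thm:mfc-center},'' and your argument simply makes explicit the details this combination leaves implicit (that $\Fun(\Sigma^m\CA,\Sigma^m\CA)$ is multi-fusion via Proposition \ref{prop:sep-mfc}, and the dimension bookkeeping in the $m$-fold application of Corollary \ref{cor:em-omega}). Nothing further is needed.
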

\begin{proof}
Combine Corollary \ref{cor:em-omega} and Theorem \ref{thm:mfc-center}.
\end{proof}

\begin{cor}
Let $\CA$ and $\CB$ be $E_m$-multi-fusion $n$-categories. The canonical $\C$-linear $E_m$-monoidal functor $\FZ_m(\CA)\boxtimes\FZ_m(\CB)\boxtimes\CA\boxtimes\CB \to \CA\boxtimes\CB$ induces a $\C$-linear $E_{m+1}$-monoidal equivalence $\FZ_m(\CA)\boxtimes\FZ_m(\CB) \simeq \FZ_m(\CA\boxtimes\CB)$. 
\end{cor}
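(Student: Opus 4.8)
The plan is to reduce everything to the endomorphism formula of Theorem \ref{thm:mfc-center} and then transport the identity $\Sigma^m(\CA\boxtimes\CB)\simeq\Sigma^m\CA\boxtimes\Sigma^m\CB$ through the duality calculus of $\KarCat^\C_{n+m}$. By Theorem \ref{thm:mfc-center}, $\FZ_m(\CA)\simeq\Omega^m\End(\Sigma^m\CA)$ and likewise for $\CB$ and for $\CA\boxtimes\CB$, where $\End(\CX):=\Fun(\CX,\CX)$. Since $\Sigma$ commutes with $\boxtimes$, so does $\Sigma^m$, whence $\Sigma^m(\CA\boxtimes\CB)\simeq\Sigma^m\CA\boxtimes\Sigma^m\CB$. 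Writing $\CX=\Sigma^m\CA$ and $\CY=\Sigma^m\CB$, both separable $(n+m)$-categories, the statement therefore reduces to a canonical equivalence $\End(\CX\boxtimes\CY)\simeq\End(\CX)\boxtimes\End(\CY)$ followed by an application of $\Omega^m$.

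The key computation uses that a separable category is dualizable: by Proposition \ref{prop:sep-yoneda} the dual of $\CX$ is $\CX^\vee=\Fun(\CX,(n+m)\Vect)\simeq\CX^\op$, and for dualizable objects the internal hom is computed as $\End(\CX)\simeq\CX^\vee\boxtimes\CX$, exactly as in the proof of Corollary \ref{cor:sigma-bmod}. Combining this with the standard identity $(\CX\boxtimes\CY)^\vee\simeq\CX^\vee\boxtimes\CY^\vee$ for duals in a symmetric monoidal category and the symmetry of $\boxtimes$, I obtain
\[
\End(\CX\boxtimes\CY)\simeq(\CX\boxtimes\CY)^\vee\boxtimes(\CX\boxtimes\CY)\simeq\CX^\vee\boxtimes\CY^\vee\boxtimes\CX\boxtimes\CY\simeq(\CX^\vee\boxtimes\CX)\boxtimes(\CY^\vee\boxtimes\CY)\simeq\End(\CX)\boxtimes\End(\CY).
\]
Applying $\Omega^m$ and using that $\Omega$ commutes with $\boxtimes$ then yields
\[
\FZ_m(\CA\boxtimes\CB)\simeq\Omega^m\bigl(\End(\CX)\boxtimes\End(\CY)\bigr)\simeq\Omega^m\End(\CX)\boxtimes\Omega^m\End(\CY)\simeq\FZ_m(\CA)\boxtimes\FZ_m(\CB),
\]
which is the desired equivalence of underlying $n$-categories.

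Two points remain, and together they form the main obstacle, since the equivalence type is the easy part. First, I must check that the composite equivalence is the one induced by the canonical unital action, i.e. that it agrees with the functor $G\colon\FZ_m(\CA)\boxtimes\FZ_m(\CB)\to\FZ_m(\CA\boxtimes\CB)$ classifying the tensor product of the two tautological actions. This follows by unwinding the identifications: under $\FZ_m(\CA)\simeq\Omega^m\End(\Sigma^m\CA)$ the universal action corresponds to the evaluation $\End(\Sigma^m\CA)\boxtimes\Sigma^m\CA\to\Sigma^m\CA$, and the equivalence $\End(\CX\boxtimes\CY)\simeq\End(\CX)\boxtimes\End(\CY)$ is compatible with evaluation by construction. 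Second, I must upgrade the equivalence to an $E_{m+1}$-monoidal one. For this I note that the map $\End(\CX)\boxtimes\End(\CY)\to\End(\CX\boxtimes\CY)$, $f\boxtimes g\mapsto f\boxtimes g$, is monoidal for the composition ($E_1$) structures because $(f\boxtimes g)\circ(f'\boxtimes g')\simeq(f\circ f')\boxtimes(g\circ g')$; since the $E_{m+1}$-structure on each center is precisely the $\Omega^m$ of this composition structure, applying $\Omega^m$ to the $E_1$-monoidal equivalence produces the required $E_{m+1}$-monoidal equivalence. The delicate part is making the coherences of these two compatibilities precise at the level of the iterated delooping and looping, rather than merely on equivalence classes; everything else is the duality bookkeeping above.
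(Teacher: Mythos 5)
Your argument is correct and is essentially the paper's own proof: the paper reduces via Theorem \ref{thm:mfc-center} and the compatibility of $\Sigma$, $\Omega$ with $\boxtimes$ to the single equivalence $\Fun(\CC,\CC)\boxtimes\Fun(\CD,\CD)\simeq\Fun(\CC\boxtimes\CD,\CC\boxtimes\CD)$ for separable $\CC,\CD$, justified by $(n+1)\Vect\boxtimes(n+1)\Vect\simeq(n+1)\Vect$, which is exactly the duality computation $\End(\CX)\simeq\CX^\vee\boxtimes\CX$ that you spell out. Your additional checks of compatibility with the canonical action and the $E_{m+1}$-monoidal structure are left implicit in the paper, so you have if anything been more careful.
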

\begin{proof}
We have $\Fun(\CC,\CC)\boxtimes\Fun(\CD,\CD) \simeq \Fun(\CC\boxtimes\CD,\CC\boxtimes\CD)$ for separable $n$-categories $\CC$ and $\CD$ because $(n+1)\Vect\boxtimes(n+1)\Vect\simeq(n+1)\Vect$.
\end{proof}

\begin{defn}
We say that the $E_m$-center $\FZ_m(\CA)$ of $\CA\in E_m\KarCat^\C_n$ is {\em trivial} if the canonical $\C$-linear $E_{m+1}$-monoidal functor $n\Vect \to \FZ_m(\CA)$ is invertible.
\end{defn}

\begin{exam}
$\FZ_m(n\Vect) = \Omega^m\FZ_0((n+m)\Vect)$ is trivial.
\end{exam}

\begin{lem} \label{lem:box-simple}
Let $\CC,\CD$ be separable $n$-categories. $(1)$ If $X\in\CC$ and $Y\in\CD$ are simple objects, then $X\boxtimes Y\in\CC\boxtimes\CD$ is also simple. $(2)$ If $f:X\to X'$ and $g:Y\to Y'$ are simple morphisms of $\CC$ and $\CD$, respectively, then $f\boxtimes g:X\boxtimes Y\to X'\boxtimes Y'$ is also simple.
\end{lem}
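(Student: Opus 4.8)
The plan is to deduce both parts from a single induction on $n$, using the explicit description of hom-categories in a tensor product. The key input is that, from the construction $\CC\boxtimes\CD=\Kar(\CC\otimes\CD)$ together with the full faithfulness of the Karoubi envelope, one has for all $X,X'\in\CC$ and $Y,Y'\in\CD$ a canonical equivalence of separable $(n-1)$-categories
$$\Hom_{\CC\boxtimes\CD}(X\boxtimes Y,\,X'\boxtimes Y')\simeq\Hom_\CC(X,X')\boxtimes\Hom_\CD(Y,Y'),$$
under which $\Id_{X\boxtimes Y}$ is carried to $\Id_X\boxtimes\Id_Y$ and, for $1$-morphisms $f\colon X\to X'$ and $g\colon Y\to Y'$, the $1$-morphism $f\boxtimes g$ of the left-hand side is carried to the object $f\boxtimes g$ of the right-hand side. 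First I would isolate and record this dictionary, together with its compatibility with composition; everything else is formal bookkeeping once it is available.

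Granting the dictionary, part $(2)$ at level $n$ is literally part $(1)$ at level $n-1$: the $1$-morphism $f\boxtimes g$ corresponds to the object $f\boxtimes g$ of $\Hom_\CC(X,X')\boxtimes\Hom_\CD(Y,Y')$, and $f$, $g$ are by hypothesis simple objects of the separable $(n-1)$-categories $\Hom_\CC(X,X')$ and $\Hom_\CD(Y,Y')$, so $(1)$ one level down gives that $f\boxtimes g$ is simple. Part $(1)$ at level $n$ reduces to $(1)$ at level $n-1$ as well: by the earlier characterization, $X\boxtimes Y$ is simple precisely when $\Id_{X\boxtimes Y}$ is a simple object of $\Hom_{\CC\boxtimes\CD}(X\boxtimes Y,X\boxtimes Y)\simeq\Hom_\CC(X,X)\boxtimes\Hom_\CD(Y,Y)$; since this identity is $\Id_X\boxtimes\Id_Y$ and $\Id_X$, $\Id_Y$ are simple (because $X$, $Y$ are), applying $(1)$ at level $n-1$ to $\Id_X$ and $\Id_Y$ shows $\Id_{X\boxtimes Y}$ is simple, hence so is $X\boxtimes Y$. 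Thus it suffices to prove $(1)$ for all $n$ by induction, and $(2)$ then comes for free from the level below.

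For the base case I would use $n=1$, where a separable $1$-category is finite semisimple. Here $\End_{\CC\boxtimes\CD}(X\boxtimes Y)\simeq\End_\CC(X)\otimes_\C\End_\CD(Y)$; since $X$ and $Y$ are simple and we work over $\C$, Schur's lemma gives $\End_\CC(X)=\End_\CD(Y)=\C$, so $\End_{\CC\boxtimes\CD}(X\boxtimes Y)=\C$. This is nonzero and admits no nontrivial idempotents, so $X\boxtimes Y$ is simple. This anchors the induction for $(1)$, and the reduction above then delivers $(2)$ at every level.

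The step I expect to require the most care is not the induction but the hom-category dictionary itself: making precise that the equivalence $\Hom_{\CC\boxtimes\CD}(X\boxtimes Y,X'\boxtimes Y')\simeq\Hom_\CC(X,X')\boxtimes\Hom_\CD(Y,Y')$ matches identities with identities, sends the $1$-morphism $f\boxtimes g$ to the evident object, and is compatible with composition. It is exactly this functorial identification that lets the predicates ``simple object'' and ``simple $1$-morphism'' be traded between consecutive categorical levels, and once it is secured the two parts of the lemma become the two faces of one statement propagated by induction on $n$.
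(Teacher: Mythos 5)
Your proof is correct and follows essentially the same route as the paper: the paper likewise proves $(1)$ by induction on $n$ (clear for $n=1$, and for $n\ge2$ via the criterion that $X\boxtimes Y$ is simple iff $\Id_{X\boxtimes Y}=\Id_X\boxtimes\Id_Y$ is simple), and deduces $(2)$ from $(1)$ using the equivalence $\Hom_{\CC\boxtimes\CD}(X\boxtimes Y,X'\boxtimes Y')\simeq\Hom_\CC(X,X')\boxtimes\Hom_\CD(Y,Y')$ built into the construction of $\boxtimes$. Your explicit Schur's-lemma base case and your emphasis on the hom-category dictionary (whose failure over non--separably-closed fields is exactly why the paper's Remark \ref{rem:char-k} excludes this lemma there) are just spelled-out versions of what the paper leaves implicit.
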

\begin{proof}
(2) is a consequence of (1) and the equivalence $\Hom_{\CC\boxtimes\CD}(X\boxtimes Y,X'\boxtimes Y') \simeq \Hom_\CC(X,X')\boxtimes\Hom_\CD(Y,Y')$.
(1) is clear for $n=1$. By induction on $n\ge2$, $\Id_{X\boxtimes Y}$ is simple as $\Id_X$ and $\Id_Y$ are simple. Hence $X\boxtimes Y$ is simple.
\end{proof}

\begin{lem} \label{lem:mfc-z1}
If $\CA$ is an indecomposable multi-fusion $n$-category then $\Sigma\FZ_1(\CA) = \FZ_0(\Sigma\CA)$.
\end{lem}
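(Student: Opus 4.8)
The plan is to reduce the identity to the counit of the adjunction $\Sigma\dashv\Omega$. By Theorem \ref{thm:mfc-center} (with $m=k=1$) we already have $\FZ_1(\CA)=\Omega\FZ_0(\Sigma\CA)$, where $\FZ_0(\Sigma\CA)=\Fun(\Sigma\CA,\Sigma\CA)$ with its composition-monoidal structure and tensor unit $\Id_{\Sigma\CA}$. Hence $\Sigma\FZ_1(\CA)=\Sigma\Omega\FZ_0(\Sigma\CA)$, and the claim becomes that the canonical counit $\Sigma\Omega\FZ_0(\Sigma\CA)\to\FZ_0(\Sigma\CA)$ is an equivalence. As in the proof of Proposition \ref{prop:slice-sigma}, this counit is an equivalence exactly when every object of $\FZ_0(\Sigma\CA)$ lies in the essential image of $\Sigma\Omega$, i.e. is a condensate of the tensor unit $\Id_{\Sigma\CA}$. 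So the whole statement is equivalent to the single assertion that every endofunctor of $\Sigma\CA$ is a condensate of $\Id_{\Sigma\CA}$.

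To establish this, I would first show that $\FZ_0(\Sigma\CA)$ is an indecomposable separable $(n+1)$-category. As in the proof of Corollary \ref{cor:sigma-bmod}, there is an equivalence $\Fun(\Sigma\CA,\Sigma\CA)\simeq\Sigma\CA^\rev\boxtimes\Sigma\CA\simeq\Sigma(\CA^\rev\boxtimes\CA)$, so $\FZ_0(\Sigma\CA)$ is certainly separable. Since $\CA$ is indecomposable, so is $\Sigma\CA$, and since reversing $1$-morphisms preserves separability and connectivity, $\Sigma\CA^\rev$ is an indecomposable separable $(n+1)$-category as well. By Lemma \ref{lem:box-simple}(1) the objects $X\boxtimes Y$ with $X\in\Sigma\CA^\rev$ and $Y\in\Sigma\CA$ simple are simple and generate the box product, and by Corollary \ref{cor:simp-nonzero}(2) one has $\Hom(X\boxtimes Y,X'\boxtimes Y')\simeq\Hom(X,X')\boxtimes\Hom(Y,Y')\ne0$ for all such objects. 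Thus a generating family of simple objects is pairwise connected by nonzero $1$-morphisms, so $\Sigma\CA^\rev\boxtimes\Sigma\CA$ is connected, i.e. indecomposable.

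With indecomposability in hand, Proposition \ref{prop:indecom-fus} applies directly to the monoidal $(n+1)$-category $\FZ_0(\Sigma\CA)=\Fun(\Sigma\CA,\Sigma\CA)$: being an indecomposable separable monoidal $(n+1)$-category, it is a fusion $(n+1)$-category, and in particular every object is a condensate of its tensor unit $\Id_{\Sigma\CA}$. This is exactly the property identified in the first paragraph, so the counit $\Sigma\Omega\FZ_0(\Sigma\CA)\to\FZ_0(\Sigma\CA)$ is an equivalence and $\Sigma\FZ_1(\CA)\simeq\FZ_0(\Sigma\CA)$, as required. The step I expect to be the crux is the indecomposability of $\FZ_0(\Sigma\CA)$, namely that the box product of two indecomposable separable categories is again connected; the rewriting $\Fun(\Sigma\CA,\Sigma\CA)\simeq\Sigma(\CA^\rev\boxtimes\CA)$ turns this into a clean connectivity check via Lemma \ref{lem:box-simple} and Corollary \ref{cor:simp-nonzero}, after which Proposition \ref{prop:indecom-fus} supplies the remaining content automatically.
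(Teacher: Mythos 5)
Your overall strategy coincides with the paper's: both arguments rewrite $\FZ_0(\Sigma\CA)=\Fun(\Sigma\CA,\Sigma\CA)\simeq(\Sigma\CA)^\vee\boxtimes\Sigma\CA\simeq\Sigma(\CA^\rev\boxtimes\CA)$, deduce that this $(n+1)$-category is indecomposable from Lemma \ref{lem:box-simple}(1) (your spelling-out of the connectivity check via Corollary \ref{cor:simp-nonzero}(2) is exactly what the paper leaves implicit), and then identify $\FZ_0(\Sigma\CA)$ with $\Sigma\Omega\FZ_0(\Sigma\CA)=\Sigma\FZ_1(\CA)$ using Theorem \ref{thm:mfc-center}. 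Where you diverge is the closing step, and that is where there is a genuine problem. The paper finishes in one stroke with Corollary \ref{cor:sigma-hom} applied to the nonzero object $\Id_{\Sigma\CA}$: an indecomposable separable category equals $\Sigma$ of the endomorphism category of any nonzero object. Your detour through Proposition \ref{prop:indecom-fus} replaces this with the assertion that a fusion $(n+1)$-category has every object a condensate of its tensor unit, and that assertion is false as stated.

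Concretely: already in the fusion $1$-category $\Vect$, the object $\C^{\oplus2}$ is not a condensate of $\C$ (a condensate of $\C$ in a $1$-category is a retract, hence $0$ or $\C$), so fusion-ness does not yield your generation property, and nothing in Proposition \ref{prop:indecom-fus} claims it does. For the same reason your criterion in the first paragraph is too strong: in the $\C$-linear (additive) setting the counit $\Sigma\Omega\CD\to\CD$ is an equivalence when every object of $\CD$ is a condensate of a \emph{direct sum} of copies of the unit, which is Proposition \ref{prop:slice-sigma-add}, not Proposition \ref{prop:slice-sigma}. With "direct sum" inserted in both places your argument is repairable, but the repaired generation property then comes not from fusion-ness but from indecomposability itself: every simple object of $\FZ_0(\Sigma\CA)$ receives a nonzero $1$-morphism from $\Id_{\Sigma\CA}$ by Corollary \ref{cor:simp-nonzero}(2), this extends to a condensation $\Id_{\Sigma\CA}\condense B$ by Proposition \ref{prop:simp-condense}, and a general object is a finite direct sum of simples. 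That chain is precisely the content of Corollary \ref{cor:sigma-hom}; substituting it for your third paragraph removes the gap and recovers the paper's proof verbatim.
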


\begin{proof}
Note that $\FZ_0(\Sigma\CA) = \Fun(\Sigma\CA,\Sigma\CA) \simeq (\Sigma\CA)^\vee\boxtimes\Sigma\CA$, which is connected by Lemma \ref{lem:box-simple}(1). Thus $\Sigma\Omega\FZ_0(\Sigma\CA) = \FZ_0(\Sigma\CA)$ by Corollary \ref{cor:sigma-hom}, where the left hand side is $\Sigma\FZ_1(\CA)$ by Theorem \ref{thm:mfc-center}.
\end{proof}

\begin{prop} \label{prop:center-square=1}
$(1)$ If $\CC$ is an indecomposable separable $n$-category then $\FZ_1(\Omega(\CC,A)) = \Omega\FZ_0(\CC)$ for any nonzero object $A\in\CC$. 

$(2)$ If $\CC$ is a nonzero separable $n$-category then $\FZ_1(\FZ_0(\CC))$ is trivial.

$(3)$ If $\CA$ is an indecomposable multi-fusion $n$-category then $\FZ_2(\FZ_1(\CA))$ is trivial.
\end{prop}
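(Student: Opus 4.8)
The plan is to derive all three parts from the center formula $\FZ_m(\CB)=\Omega^k\FZ_{m-k}(\Sigma^k\CB)$ of Theorem \ref{thm:mfc-center}, feeding in Corollary \ref{cor:sigma-hom} and Lemma \ref{lem:mfc-z1}, and to let parts (2) and (3) bootstrap off the earlier parts rather than argue them from scratch.

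For (1), I would set $\CB:=\Omega(\CC,A)=\Hom_\CC(A,A)$, which is a multi-fusion $(n-1)$-category by Proposition \ref{prop:sep-mfc}, and apply Theorem \ref{thm:mfc-center} with $m=k=1$ to get $\FZ_1(\CB)=\Omega\FZ_0(\Sigma\CB)$. Since $\CC$ is indecomposable separable and $A$ is nonzero, Corollary \ref{cor:sigma-hom} gives $\Sigma\CB=\Sigma\Hom_\CC(A,A)\simeq\CC$, so $\FZ_1(\Omega(\CC,A))=\Omega\FZ_0(\CC)$. This part is essentially immediate.

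For (2), the key move is to realize $\FZ_0(\CC)=\Fun(\CC,\CC)$ itself as a looping one dimension higher. Identifying the nonzero separable $\CC$ with an object of $(n+1)\Vect$, we have $\FZ_0(\CC)=\Hom_{(n+1)\Vect}(\CC,\CC)=\Omega((n+1)\Vect,\CC)$, with the composition monoidal structure matching. Now I would apply part (1) to the indecomposable separable $(n+1)$-category $(n+1)\Vect$ at the nonzero object $\CC$, obtaining $\FZ_1(\FZ_0(\CC))=\FZ_1(\Omega((n+1)\Vect,\CC))=\Omega\FZ_0((n+1)\Vect)$. Because $(n+1)\Vect=\Sigma^{n+1}\C$ is freely generated by a single object, $\FZ_0((n+1)\Vect)=\Fun((n+1)\Vect,(n+1)\Vect)\simeq(n+1)\Vect$, whence $\Omega\FZ_0((n+1)\Vect)\simeq\Omega(n+1)\Vect=n\Vect$; checking that this identification is induced by the canonical functor $n\Vect\to\FZ_1(\FZ_0(\CC))$ shows the center is trivial. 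Note that no indecomposability of $\CC$ is needed here, since the hypothesis of part (1) is carried by $(n+1)\Vect$.

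For (3), I would use Lemma \ref{lem:mfc-z1}: as $\CA$ is indecomposable multi-fusion, $\Sigma\FZ_1(\CA)=\FZ_0(\Sigma\CA)$. Applying Theorem \ref{thm:mfc-center} to the $E_2$-category $\FZ_1(\CA)$ with $m=2,\,k=1$ gives $\FZ_2(\FZ_1(\CA))=\Omega\FZ_1(\Sigma\FZ_1(\CA))=\Omega\FZ_1(\FZ_0(\Sigma\CA))$. Since $\Sigma\CA$ is a nonzero separable $(n+1)$-category, part (2) shows $\FZ_1(\FZ_0(\Sigma\CA))$ is trivial, i.e. $\simeq(n+1)\Vect$, and looping yields $\FZ_2(\FZ_1(\CA))\simeq n\Vect$. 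The hard part, such as it is, will not be the chain of equivalences but keeping track of the $E_m$-monoidal structures and confirming that each identification is realized by the canonical comparison functor (so that triviality holds on the nose, not merely up to abstract equivalence) — in particular that composition on $\Fun(\CC,\CC)$ agrees with the $\Omega$-monoidal structure of $(n+1)\Vect$ at $\CC$, and that the delooping in Lemma \ref{lem:mfc-z1} respects the braided structure. The underlying categorical manipulations are routine given the machinery already assembled.
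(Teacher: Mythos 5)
Your proposal is correct and takes essentially the same route as the paper's own proof: part (1) by combining Theorem \ref{thm:mfc-center} (with $m=k=1$) and Corollary \ref{cor:sigma-hom}, part (2) by viewing $\FZ_0(\CC)=\Fun(\CC,\CC)$ as $\Omega((n+1)\Vect,\CC)$ and applying (1) to the indecomposable separable $(n+1)$-category $(n+1)\Vect$, and part (3) via Theorem \ref{thm:mfc-center}, Lemma \ref{lem:mfc-z1} and (2). The only difference is that you spell out details the paper leaves implicit, such as $\FZ_0((n+1)\Vect)\simeq(n+1)\Vect$ from free generation and the canonicity of the comparison functors.
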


\begin{proof}
(1) Combine Theorem \ref{thm:mfc-center} and Corollary \ref{cor:sigma-hom}.

(2) Viewing $\CC$ as an object of $(n+1)\Vect$, we see that the $E_1$-center of $\Fun(\CC,\CC) = \Omega((n+1)\Vect,\CC)$ is $n\Vect$ by (1).

(3) Applying Theorem \ref{thm:mfc-center}, Lemma \ref{lem:mfc-z1} and (2), we obtain $\FZ_2(\FZ_1(\CA)) = \Omega\FZ_1(\Sigma\FZ_1(\CA)) = \Omega\FZ_1(\FZ_0(\Sigma\CA)) \simeq n\Vect$.
\end{proof}

\begin{rem}
The $m=2,k=1$ case of Theorem \ref{thm:mfc-center} appeared in \cite[IV.B]{JF20}; the $n=m=k=1$ case appeared in \cite[Proposition 3.27]{KLWZZ20a}. The $n=1$ case of Lemma \ref{lem:mfc-z1} appeared in \cite[Theorem$^{\mathrm{ph}}$ 3.28]{KLWZZ20a}. 
Proposition \ref{prop:center-square=1}(3) was predicted in \cite{KW14,KWZ15} and can be viewed as the mathematical formulation of the physical result: the bulk of a bulk is trivial. 
\end{rem}

\begin{rem} \label{rem:char-k}
All the results from this section apply to linear higher categories over an arbitrary field $k$ with some minor exceptions as follows. 
When $k$ is not separably closed, a fusion 0-category is defined to be a separable division $k$-algebra; Lemma \ref{lem:box-simple}, Lemma \ref{lem:mfc-z1} and Proposition \ref{prop:center-square=1}(3) fail. When $\chara k>0$, multi-fusion 1-categories correspond to separable multi-fusion 1-categories in the literature \cite{DSPS20}; finite semisimple 1-categories and semisimple 2-categories are not necessarily separable.
\end{rem}

\section{Unitary higher categories} \label{sec:unitary}
In this section, we outline a theory of unitary higher categories based on a $*$-version of condensation completion. 

\subsection{$*$-Condensations}

Let $\CC$ be an $n$-category equipped with an involution $*:\CC\to\CC^{\op m}$, where $1\le m\le n$, which fixes all the objects and all the $k$-morphisms for $k<m$. A {\em $*$-$n$-condensation} in $\CC$ is a $*$-equivariant condensation $\spadesuit_n\to\CC$, where the walking $n$-condensation $\spadesuit_n$ is endowed with the involution $*:\spadesuit_n\to\spadesuit_n^{\op m}$ that swaps the two generating $m$-morphisms and fixes all the objects and all the other generating morphisms.

By induction on $m$, we say that $\CC$ is {\em $*$-condensation-complete}, if every $*$-equivariant condensation monad in $\CC$ extends to a $*$-condensation and if $\Hom_\CC(X,Y)$ is $*$-condensation-complete when $m>1$ or condensation-complete when $m=1$ for all objects $X,Y\in\CC$.

In the special case $\CC=B\CD$ where $\CD$ is a $*$-condensation-complete monoidal $(n-1)$-category, we use $\Sigma_*\CD$ to denote the $*$-condensation completion of $\CC$. By construction, $\Sigma_*\CD$ inherits an involution $*: \Sigma_*\CD \to (\Sigma_*\CD)^{\op m}$.

\begin{rem} \label{rem:*-issue}
There are significant issues concerning the above definitions. Let us first consider the special case where $m=n$, the main concern of this work. Note that the associator $\alpha_{f,g,h}:(f\circ g)\circ h\to f\circ(g\circ h)$ for a sequence of $(n-1)$-morphisms $X\xrightarrow{f}Y\xrightarrow{g}Z\xrightarrow{h}W$ in $\CC$ induces the associator $\alpha_{f,g,h}^{-1}:(f\circ g)\circ h\to f\circ(g\circ h)$ in $\CC^{\op m}$. We should require the involution $*:\CC\to\CC^{\op n}$ maps $\alpha_{f,g,h}$ to $\alpha_{f,g,h}^{-1}$. That is, $\alpha_{f,g,h}$ is a {\em unitary} $n$-isomorphism of $\CC$. Informally speaking, the involution $*:\CC\to\CC^{\op m}$ is simply an involutive action on the class of $n$-morphisms of $\CC$ such that the coherence $n$-isomorphisms of $\CC$ are all unitary. Similarly, by a $*$-equivariant functor $F:\CC\to\CD$ we mean that $*\circ F=F\circ *$ on $n$-morphisms and that the coherence $n$-isomorphisms of $F$ are all unitary. In particular, a $*$-condensation in $\CC$ is a special case of a condensation and the $*$-condensation completion of $\CC$ might be realized as a subcategory of the condensation completion.

The case $m<n$ is similar but more complicated. Informally speaking, the involution $*:\CC\to\CC^{\op m}$ is an involutive action on the class of $\ge m$-morphisms of $\CC$ such that the $*$-action intertwines the coherence $\ge m$-equivalence of $\CC$ strictly. By a $*$-equivariant functor $F:\CC\to\CD$ we mean that $*\circ F=F\circ *$ on $\ge m$-morphisms and that the $*$-actions intertwine the coherence $\ge m$-equivalences of $F$ strictly. Again, a $*$-condensation in $\CC$ is a special case of a condensation and the $*$-condensation completion of $\CC$ might be realized as a subcategory of the condensation completion.

To summarize, the $*$-involution and the $*$-equivariance are defined in a strict rather than weak way. A comprehensive theory of $*$-condensations demands significant efforts and is far beyond the scope of this work. 
\end{rem}


\subsection{Unitary $n$-categories}

A {\em $*$-$n$-category} is a $\C$-linear $n$-category $\CC$ equipped with an anti-$\C$-linear involution $*:\CC\to\CC^{\op n}$ which fixes all the objects and all the $k$-morphisms for $k<n$. 
A {\em $*$-functor} $F:\CC\to\CD$ between two $*$-$n$-categories is a $*$-equivariant $\C$-linear functor. Similarly, a (higher) $*$-natural transformation is a $*$-equivariant $\C$-linear (higher) natural transformation. 

Let $\Cat^*_n$ denote the $(n+1)$-category formed by $*$-$n$-categories, $*$-functors and (higher) $*$-natural transformations and let $\KarCat^*_n$ denote the full subcategory of $*$-condensation-complete $*$-$n$-categories.
By slightly abusing notation, we use $\Fun(\CC,\CD)$ to denote $\Hom_{\Cat^*_n}(\CC,\CD)$.

\begin{rem}
As clarified in Remark \ref{rem:*-issue}, the coherence $n$-isomorphisms of a $*$-$n$-category $\CC$ are all unitary. This is exactly what we expect in a unitary theory.
\end{rem}

The theory of $*$-$n$-categories is completely parallel to that of $\C$-linear $n$-categories with some new features arising from the $*$-structure. In particular, all the results from the previous section have a $*$-version.

\smallskip

We use $\Hilb$ to denote the symmetric monoidal $*$-1-category of finite-dimensional Hilbert spaces, and
use $n\Hilb$ to denote the symmetric monoidal $*$-$n$-category $\Sigma_*^{n-1}\Hilb = \Sigma_*^n\C$.

\begin{prop}
The functor $\Hom_{(n+1)\Hilb}(\bullet,-): (n+1)\Hilb \to \Cat^*_n$ is fully faithful.
\end{prop}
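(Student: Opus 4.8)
The plan is to run the proof in exact parallel with the $\C$-linear statement (the Corollary immediately following Theorem \ref{thm:sigma-rmod}), replacing every ingredient by its $*$-analogue. The key input is the $*$-version of Theorem \ref{thm:sigma-rmod}, whose existence is asserted by the principle that ``all the results from the previous section have a $*$-version'': for a $*$-condensation-complete monoidal $*$-$n$-category $\CA$, the functor $\Hom_{\Sigma_*\CA}(\bullet,-)\colon \Sigma_*\CA \to \RMod_\CA(\KarCat^*_n)$ is fully faithful. Before applying it I would record the structural facts that make the hypotheses hold: by definition $n\Hilb = \Sigma_*^{n-1}\Hilb = \Sigma_*^n\C$ is a $*$-condensation-complete symmetric monoidal $*$-$n$-category, and consequently $(n+1)\Hilb = \Sigma_*^n\Hilb = \Sigma_*(n\Hilb)$ with $\bullet$ the distinguished object of this $*$-delooping. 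In particular the functor in question is a bona fide $*$-functor, since the Hom $n$-category $\Hom_{(n+1)\Hilb}(\bullet,X)$ inherits the involution $*\colon{}\to{}^{\op n}$ from the ambient $*$-structure on $(n+1)\Hilb$.

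First I would apply the $*$-version of Theorem \ref{thm:sigma-rmod} to $\CA = n\Hilb$. This immediately yields that
$$\Hom_{(n+1)\Hilb}(\bullet,-)\colon (n+1)\Hilb \to \RMod_{n\Hilb}(\KarCat^*_n)$$
is fully faithful, i.e.\ that for $X,Y\in(n+1)\Hilb$ the canonical $*$-functor $\Hom_{(n+1)\Hilb}(X,Y) \to \Fun\bigl(\Hom_{(n+1)\Hilb}(\bullet,X),\Hom_{(n+1)\Hilb}(\bullet,Y)\bigr)$ is an equivalence. Next, exactly as in the $\C$-linear case, $n\Hilb = \Sigma_*^n\C$ is the tensor unit of the symmetric monoidal $(n+1)$-category $\KarCat^*_n$, so evaluation at $\bullet$ identifies $\RMod_{n\Hilb}(\KarCat^*_n) \simeq \KarCat^*_n$. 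Finally, $\KarCat^*_n \hookrightarrow \Cat^*_n$ is by definition the inclusion of a full subcategory, hence fully faithful; composing the two fully faithful functors gives the desired fully faithful functor $(n+1)\Hilb \to \Cat^*_n$, which is precisely the statement.

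The hard part will not be this short deduction but the justification of the $*$-version of Theorem \ref{thm:sigma-rmod}, that is, transporting Gaiotto and Johnson-Freyd's proof of \cite[Corollary 4.2.3]{GJF19} into the $*$-setting. The delicate point, flagged in Remark \ref{rem:*-issue}, is that the $*$-involution and $*$-equivariance are imposed \emph{strictly} while all coherence $n$-isomorphisms remain unitary; one must verify that the adjunction $\Sigma_* \dashv \Omega$, the relative tensor product $-\boxtimes_\CA-$, and the module-dualizability arguments underlying Theorem \ref{thm:sigma-rmod} all respect the $*$-structure, so that the comparison functor is fully faithful not merely as a $\C$-linear functor but as a $*$-functor (the induced map on Hom $n$-categories being a $*$-equivalence). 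Granting the parallel $*$-theory asserted in this section, the proposition follows at once by the argument above.
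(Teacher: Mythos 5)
Your proposal is correct and follows essentially the same route as the paper: the paper gives no separate proof but obtains this proposition as the $*$-version of the corollary to Theorem \ref{thm:sigma-rmod}, i.e.\ by applying the ($*$-analogue of the) theorem to $\CA = n\Hilb$ exactly as you do, relying on the blanket assertion that ``all the results from the previous section have a $*$-version.'' You also correctly identify, in line with Remark \ref{rem:*-issue}, that the real work lies in establishing that $*$-version, which the paper likewise defers rather than carries out.
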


\begin{defn}
A {\em unitary $n$-category} is a $*$-$n$-category that lies in the essential image of the above functor.
\end{defn}

\begin{prop}
Let $\CC$ be a unitary $n$-category. Then $\Hom_\CC(A,B)$ is a unitary $(n-1)$-category for any two objects $A,B\in\CC$.
\end{prop}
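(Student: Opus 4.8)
The plan is to imitate, in the $*$-world, the proof of the separable analogue established above (where $\Hom_\CC(A,B)$ was identified with $\Hom_{n\Vect}(\bullet,A^R\circ B)$), replacing $\Vect$ by $\Hilb$ throughout; the substance lies not in any new construction but in checking that each identification is an equivalence of $*$-categories rather than merely of $\C$-linear ones. First I would invoke the definition of a unitary $n$-category to reduce to the universal case: we may assume $\CC=\Hom_{(n+1)\Hilb}(\bullet,X)$ for some $X\in(n+1)\Hilb$, equipped with its canonical involution $*:\CC\to\CC^{\op n}$, so that $A,B\in\CC$ are $1$-morphisms $\bullet\to X$ in $(n+1)\Hilb$ and $\Hom_\CC(A,B)$ is the $*$-$(n-1)$-category of $2$-morphisms $A\Rightarrow B$.

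Next I would produce the key identification via duality. Since $(n+1)\Hilb=\Sigma_*^{n+1}\C$ is rigid---the $*$-analogue of Remark~\ref{rem:sigma-rigid}---the $1$-morphism $A$ has a right dual $A^R:X\to\bullet$, which I would choose to be a \emph{unitary} dual, i.e.\ with unit and counit compatible with $*$. Taking mates under the adjunction $A\dashv A^R$ then yields
\[
\Hom_\CC(A,B)\;\simeq\;\Hom_{n\Hilb}(\bullet,A^R\circ B),
\]
where $n\Hilb=\Hom_{(n+1)\Hilb}(\bullet,\bullet)=\Sigma_*^n\C$, the object $A^R\circ B:\bullet\to\bullet$ lies in $n\Hilb$, and $\bullet$ on the right denotes $\one_{n\Hilb}=\Id_\bullet$.

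To conclude, I would observe that $A^R\circ B$ is an object of $n\Hilb$, so the right-hand side is visibly in the essential image of the functor $\Hom_{n\Hilb}(\bullet,-):n\Hilb\to\Cat^*_{n-1}$ and is therefore a unitary $(n-1)$-category by definition; since the essential image is closed under $*$-equivalence, $\Hom_\CC(A,B)$ is unitary as well.

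The hard part will be the $*$-equivariance of the mate equivalence displayed above: one must verify that it intertwines the (anti-$\C$-linear, top-level) involution on $\Hom_\CC(A,B)$ with that on $\Hom_{n\Hilb}(\bullet,A^R\circ B)$. This reduces to arranging a \emph{unitary} adjunction $A\dashv A^R$, so that forming mates commutes with $*$; as stressed in Remark~\ref{rem:*-issue}, the $*$-involution and $*$-equivariance here are imposed strictly and no comprehensive theory of $*$-duals is developed, so this compatibility---formally parallel to the non-unitary case but genuinely the crux---is the step that demands care.
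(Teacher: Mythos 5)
Your proposal is correct and takes the same route as the paper: the paper states this proposition without a separate proof, declaring that all results of the separable section carry over in $*$-form, and its proof of the separable analogue is exactly your argument --- reduce to $\CC=\Hom_{(n+1)\Hilb}(\bullet,X)$ and identify $\Hom_\CC(A,B)$ with $\Hom_{n\Hilb}(\bullet,A^R\circ B)$ via the duality supplied by the $*$-version of Remark \ref{rem:sigma-rigid}. Your closing caution that one must arrange a unitary adjunction $A\dashv A^R$ so that the mate equivalence is $*$-equivariant is precisely the strictness issue the paper acknowledges in Remark \ref{rem:*-issue} but leaves implicit, so identifying it as the crux is apt.
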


\begin{rem}
Since $n\Hilb$ is an iterated delooping of $\Hilb$, it is {\em positive}: $f^*\circ f\neq0$ for any nonzero $n$-morphism $f$. Since unitary $n$-categories are hom categories of $(n+1)\Hilb$, they are also positive.
\end{rem}

\begin{defn} \label{defn:Em-multi-fusion}
A {\em unitary $E_m$-multi-fusion $n$-category} is an $E_m$-monoidal $*$-$n$-category $\CA$ such that $\Sigma_*^m\CA$ is a unitary $(n+m)$-category. A unitary $E_m$-multi-fusion $n$-category with a simple tensor unit is also referred to as a {\em unitary $E_m$-fusion $n$-category}.
\end{defn}

\begin{exam}
(1) $n\Hilb$ is a unitary $n$-category.

(2) A unitary 0-category is a finite-dimensional Hilbert space.

(3) Giving a $*$-condensation algebra in $\Hilb$ is equivalent to giving a special $*$-Frobenius algebra. Giving a bimodule over $*$-condensation algebras is equivalent to giving a finite-dimensional $*$-bimodule over special $*$-Frobenius algebras.
Therefore, a unitary 1-category is $*$-equivalent to a finite direct sum of $\Hilb$.

(4) A unitary multi-fusion 1-category is a unitary multi-fusion 1-category in the usual sense and vice versa.
\end{exam}

\begin{rem}
Is a unitary $n$-category a separable $n$-category?
This is true for $n\le1$ by the above example. 
In view of Theorem \ref{thm:mfc-bim}, the question for $n=2$ is equivalent to whether finite semisimple modules over unitary multi-fusion 1-categories are unitarizable, which, as far as we know, remains open.

There are fusion 1-categories without unitary structure, for example, the Yang-Lee category of central charge $c=-\frac{22}5$ \cite{ENO05}. Thus $3\Hilb$ is not condensation-complete by Theorem \ref{thm:mfc-bim}. Therefore, $n\Hilb$ is not a separable $n$-category for $n\ge3$.
\end{rem}

\begin{lem}
Let $\CA$ be a monoidal $*$-$n$-category where $n\ge1$. Suppose that $\CA$ has duals and is a unitary $n$-category. Then $\CA$ is a unitary multi-fusion $n$-category.
\end{lem}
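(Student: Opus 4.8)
The plan is to reduce to the dualizability characterization of unitary higher categories, so that the only essential use of the hypothesis ``$\CA$ has duals'' is in exhibiting a single duality. Since $\CA$ is a unitary $n$-category it is in particular $*$-condensation-complete, so $\Sigma_*\CA = \Kar_*(B\CA)$ is defined and lies in $\KarCat^*_{n+1}$; by Definition~\ref{defn:Em-multi-fusion} (with $m=1$) it suffices to prove that $\Sigma_*\CA$ is a unitary $(n+1)$-category. By the $*$-analog of Corollary~\ref{cor:sep-dual}, applied to $\Sigma_*\CA\in\KarCat^*_{n+1}$, this reduces further to showing that $\Sigma_*\CA$ is $1$-dualizable in $\KarCat^*_{n+1}$, i.e.\ that it admits a dual object.

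I would take $\Sigma_*(\CA^\rev)$ as the candidate dual. Using the $*$-analog of the identity $\Sigma(\CB\boxtimes\CC)\simeq\Sigma\CB\boxtimes\Sigma\CC$ we have $\Sigma_*\CA\boxtimes\Sigma_*(\CA^\rev)\simeq\Sigma_*(\CA\boxtimes\CA^\rev)$, and the regular bimodule ${}_\CA\CA_\CA$, viewed as a left $\CA\boxtimes\CA^\rev$-module, is a unitary $n$-category precisely because $\CA$ is. Through the $*$-analog of the module--functor correspondence (Theorem~\ref{thm:sigma-rmod} together with the universal property of $\Sigma_*$ as the left adjoint of $\Omega$), this module determines an evaluation functor $\ev\colon\Sigma_*(\CA\boxtimes\CA^\rev)\to(n+1)\Hilb$ sending the distinguished object to $\CA$ itself, and the same bimodule read as a right $\CA^\rev\boxtimes\CA$-module determines a coevaluation functor $\coev\colon(n+1)\Hilb\to\Sigma_*(\CA^\rev\boxtimes\CA)$. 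I emphasise that neither of these requires $\CA$ to be known in advance to be multi-fusion; they are produced from the unitary $n$-category $\CA$ and its left--right self-action alone.

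What remains is to verify the two triangle identities, and this is exactly the step that consumes the hypothesis that $\CA$ has duals. Unwinding the composites $(\ev\boxtimes\id)\circ(\id\boxtimes\coev)$ and $(\id\boxtimes\ev)\circ(\coev\boxtimes\id)$ through the above dictionary turns them into relative tensor products of the regular bimodule with itself over $\CA$; the object-level dualities $X\otimes X^\vee\to\one_\CA$ and $\one_\CA\to X^\vee\otimes X$ in the rigid monoidal category $\CA$, together with associativity of $\otimes$, assemble into the unit and counit witnessing that both composites are canonically equivalent to the identity. Positivity of $\CA$ (which it inherits as a unitary $n$-category) ensures that all the coherence isomorphisms so produced are unitary, so the duality lives in $\KarCat^*_{n+1}$ and not merely in $\KarCat^\C_{n+1}$. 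Granting the duality, $\Sigma_*\CA$ is $1$-dualizable, hence unitary, whence $\CA$ is a unitary multi-fusion $n$-category.

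I expect the genuine obstacle to be precisely this last passage: upgrading the object-level duals of $\CA$ to the $1$-dualizability of $\Sigma_*\CA$, namely checking that the regular bimodule is dualizable with the stated (co)evaluation and that the triangle identities hold in a $*$-coherent manner. The underlying computation is the same one that makes rigidity drive the arguments of Proposition~\ref{prop:indecom-fus} and Remark~\ref{rem:sigma-rigid} in the non-unitary setting; the additional work here is purely the $*$-bookkeeping, ensuring that every piece of duality data extracted from $\CA$ is unitary in the strict sense demanded by Remark~\ref{rem:*-issue}.
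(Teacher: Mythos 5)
Your skeleton---reduce to $1$-dualizability of $\Sigma_*\CA$ in $\KarCat^*_{n+1}$ via the $*$-analog of Corollary \ref{cor:sep-dual}, with $\Sigma_*(\CA^\rev)$ as candidate dual---is legitimate, and it is essentially the route the paper itself takes in the non-unitary Corollary \ref{cor:mfc-dual}. But there is a genuine gap, and you locate the use of ``has duals'' in the wrong place. The evaluation does exist from unitarity alone: since $(n+1)\Hilb$ is $*$-condensation-complete and $\Sigma_*(\CA\boxtimes\CA^\rev)$ is the completion of $B(\CA\boxtimes\CA^\rev)$, a functor $\Sigma_*(\CA\boxtimes\CA^\rev)\to(n+1)\Hilb$ is the same as a unitary $n$-category carrying an $\CA$-$\CA$-bimodule structure, and the regular bimodule qualifies. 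The coevaluation, however, is a functor $(n+1)\Hilb\to\Sigma_*(\CA\boxtimes\CA^\rev)$, i.e.\ an \emph{object} of $\Sigma_*(\CA\boxtimes\CA^\rev)$, and by the $*$-version of Theorem \ref{thm:sigma-rmod} the regular bimodule defines such an object only if it has a left dual as a module---equivalently, only if it is a $*$-condensate of the free bimodule, i.e.\ only if $\otimes:\CA\boxtimes\CA\to\CA$ extends to a $*$-condensation of bimodules. So your claim that both (co)evaluations ``are produced from the unitary $n$-category $\CA$ and its left--right self-action alone'' fails for the coevaluation; conversely, once both functors exist, the snake composites unwind through the $*$-version of Corollary \ref{cor:sigma-bmod} into relative tensor products $\CA\boxtimes_\CA\CA\simeq\CA$, so the triangle identities are essentially formal---the opposite of your attribution, which spends the rigidity hypothesis there.

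The missing content is exactly what the paper proves, in two steps paralleling Proposition \ref{prop:indecom-fus}. First, because $\CA$ is a unitary $n$-category, $\otimes$ induces a nonzero functor onto each simple direct summand of $\CA$ (note the lemma does not assume indecomposability), so by the $*$-version of Proposition \ref{prop:simp-condense} it extends to a $*$-condensation in $(n+1)\Hilb$, with counit tower $v_1:\otimes\circ\otimes^R\to\Id_\CA$, $v_2:v_1\circ v_1^R\to\Id_{\Id_\CA}$, \dots\ terminating in $v_n\circ v_n^*=1$; this is where unitarity/positivity is actually consumed, and your remark that ``positivity handles the $*$-bookkeeping'' gestures at it without doing the work. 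Second, $1$-rigidity of $\CA$ is used to lift this $*$-condensation to $\BMod_{\CA|\CA}((n+1)\Hilb)$---this lifting, not the triangle identities, is where the duals hypothesis enters, since $\otimes^R$ and the condensation $2$-morphisms must be promoted to bimodule data using object-level duals. The paper then concludes slightly differently from you: the lifted condensation induces $-\boxtimes\CA\condense\Id_\CM$ for $\CM=\RMod_\CA((n+1)\Hilb)$, exhibiting $\CM$ as a $*$-condensate of $(n+1)\Hilb$, hence a unitary $(n+1)$-category, whence $\CA\simeq\Omega(\CM,\CA)$ is unitary multi-fusion. If you supply the bimodule-condensation step, your dualizability packaging and the paper's proof become two presentations of the same computation; as written, the central step is asserted rather than proved.
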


\begin{proof}
Since the tensor product functor $\otimes:\CA\boxtimes\CA\to\CA$ induces a nonzero functor from $\CA\boxtimes\CA$ to each simple direct summand of $\CA$, $\otimes$ extends to a $*$-condensation by the $*$-version of Proposition \ref{prop:simp-condense}. We assume that the $*$-condensation is given by the consecutive counit maps $v_1:\otimes\circ\otimes^R \to \Id_\CA$, $v_2:v_1\circ v_1^R \to \Id_{\Id_\CA}$, etc. terminated by an identity $v_n\circ v_n^*=1$.
Since $\CA$ is 1-rigid, the $*$-condensation $\otimes$ lifts to $\BMod_{\CA|\CA}((n+1)\Hilb)$, inducing a $*$-condensation $-\boxtimes\CA\condense\Id_\CM$ where $\CM=\RMod_\CA((n+1)\Hilb)$ and therefore extending $-\boxtimes\CA:(n+1)\Hilb\to\CM$ to a $*$-condensation. This shows that $\CM$ is a unitary $(n+1)$-category hence $\CA\simeq\Omega(\CM,\CA)$ is a unitary multi-fusion $n$-category.
\end{proof}

\begin{thm}
Let $\CA$ be an $E_m$-monoidal $*$-$n$-category where $n\ge1$. The following conditions are equivalent:
\begin{enumerate}
\item $\CA$ is a unitary $E_m$-multi-fusion $n$-category.
\item $\CA$ has duals and is a unitary $n$-category.
\end{enumerate}
\end{thm}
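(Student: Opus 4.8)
The plan is to prove the two implications separately, treating $(2)\Rightarrow(1)$ as the substantive direction and modelling its proof on that of Theorem \ref{thm:sep-mf}, with the preceding lemma (the case of a monoidal $*$-$n$-category) serving as the base case of an induction on $m$. Throughout I take for granted the $*$-analogues of the results of Section \ref{sec:separable}, as licensed by the assertion that all those results have a $*$-version; the genuinely new bookkeeping concerns how the predicate ``has duals'' propagates through the functor $\Sigma_*$.

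For $(2)\Rightarrow(1)$ I would induct on $m\ge1$. The case $m=1$ is exactly the preceding lemma. For $m\ge2$, first note that a unitary $n$-category is $*$-condensation-complete, so $\CA$ is $*$-condensation-complete and, by hypothesis, has duals and is a unitary $n$-category. Forgetting down to the underlying monoidal ($E_1$) structure, the preceding lemma applies and shows that $\Sigma_*\CA$ is a unitary $(n+1)$-category. The key step is to upgrade this: I claim that $\Sigma_*\CA$, regarded as an $E_{m-1}$-monoidal $*$-$(n+1)$-category, again satisfies hypothesis $(2)$. It is unitary by the previous sentence, so it remains to see that it has duals. Since $\CA$ is $*$-condensation-complete and has duals, the $*$-version of Remark \ref{rem:sigma-rigid} (the clause ``if $\CC$ is $j$-rigid then $\Sigma^m\CC$ is $(j+m)$-rigid'') shows that applying $\Sigma_*$ raises the rigidity by one, so that every object of $\Sigma_*\CA$ acquires a dual while all lower morphisms retain theirs; equivalently $B\Sigma_*\CA$ has duals, which is precisely ``$\Sigma_*\CA$ has duals'' in the $E_{m-1}$-monoidal sense. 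The inductive hypothesis, applied to $\Sigma_*\CA$ with $(m-1,n+1)$ in place of $(m,n)$, then shows $\Sigma_*\CA$ is a unitary $E_{m-1}$-multi-fusion $(n+1)$-category, i.e.\ $\Sigma_*^{m-1}(\Sigma_*\CA)=\Sigma_*^m\CA$ is a unitary $(n+m)$-category; by Definition \ref{defn:Em-multi-fusion} this is exactly the assertion that $\CA$ is a unitary $E_m$-multi-fusion $n$-category.

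For $(1)\Rightarrow(2)$, suppose $\Sigma_*^m\CA$ is a unitary $(n+m)$-category. Writing $\CA=\Omega^m\Sigma_*^m\CA$ and using that hom-categories of a unitary category are unitary (applied $m$ times), one gets that $\CA$ is a unitary $n$-category. For duals, recall that a unitary category has duals, being an essential image of $\Hom_{(n+m+1)\Hilb}(\bullet,-)$ and hence inheriting the duals of $(n+m+1)\Hilb$ (the $*$-analogue of \cite[Theorem 4.1.1]{GJF19}). Since $n\ge1$, in $\Sigma_*^m\CA$ all $k$-morphisms with $1\le k\le n+m-1$, and in particular the objects of $\CA$ viewed as $m$-morphisms, possess left and right duals; because the inclusion $B^m\CA\hookrightarrow\Sigma_*^m\CA$ into the $*$-condensation completion preserves the self-hom towers of the distinguished object, these duals already lie in $B^m\CA$. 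Hence $B^m\CA$ has duals, i.e.\ $\CA$ has duals.

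The main obstacle is the duals-tracking in the inductive step, namely establishing the $*$-analogue of Remark \ref{rem:sigma-rigid} with the required control on how rigidity increments under $\Sigma_*$; this is the one place where ``has duals'' must be produced rather than merely transported, and it rests on the first part of the proof of \cite[Theorem 4.1.1]{GJF19}. A secondary, foundational difficulty is that, as emphasised in Remark \ref{rem:*-issue}, the $*$-involution and $*$-equivariance are only defined strictly and a fully rigorous $*$-condensation calculus is not developed here, so each invocation of a ``$*$-version'' of an earlier result is understood modulo that theory.
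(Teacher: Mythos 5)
Your proof is correct and follows essentially the same route as the paper, whose entire argument for $(2)\Rightarrow(1)$ is ``apply the above lemma for $m$ times'' (with $(1)\Rightarrow(2)$ dismissed as clear); your induction on $m$ with the lemma as base case is exactly that iteration, spelled out. The duals-propagation step you isolate --- that $\Sigma_*\CA$ again has duals, via the $*$-version of Remark \ref{rem:sigma-rigid} --- is precisely what the paper's terse proof leaves implicit, so you have filled in rather than deviated from its argument.
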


\begin{proof}
$(1)\Rightarrow(2)$ is clear. $(2)\Rightarrow(1)$ Apply the above lemma for $m$ times. 
\end{proof}



\subsection{Coslice construction}

\begin{exam} \label{exam:slice1}
The coslice 1-category $\C/\Hilb$ consists of the following data. An object $(X,x)$ consists of a finite-dimensional Hilbert space $X$ and a linear map $x:\C\to X$ (equivalently, a vector $x\in X$). A 1-morphism $f:(X,x)\to(Y,y)$ is a linear map $f:X\to Y$ such that $f\circ x=y$ (equivalently, $f(x)=y$).
\end{exam}

We have an involution $*':\Hilb\to\Hilb^\rev$ defined by $X\mapsto X^\vee$ on objects and by $f\mapsto f^{\vee*}$ on morphisms. It induces an involution $*':n\Hilb\to n\Hilb^{\op(n-1)}$. The delooping $\Sigma_{*'}n\Hilb$ obtained by using $*'$ is the same as $(n+1)\Hilb$ because every $*'$-equivariant condensation in $\Sigma_{*'}n\Hilb$ can be modified to be $*$-equivariant so that $\Sigma_{*'}n\Hilb \subset (n+1)\Hilb$ and every $*$-equivariant condensation in $(n+1)\Hilb$ can be modified to be $*'$-equivariant so that $\Sigma_{*'}n\Hilb \supset (n+1)\Hilb$.

We endow $\C/\Hilb$ with an involution by extending $*':\Hilb\to\Hilb^\rev$
$$*:\C/\Hilb\to(\C/\Hilb)^\rev, \quad (X,x)\mapsto(X^\vee,x^{\vee*}),\quad f\mapsto f^{\vee*}.$$

\begin{prop} \label{prop:hilb-slice}
We have canonical equivalences 
$$\Sigma_*^n(\C/\Hilb) \simeq \bullet/(n+1)\Hilb \simeq ((n+1)\Hilb/\bullet)^{\op(n+1)}, \quad \text{for even $n$},$$
$$\Sigma_*^n(\C/\Hilb) \simeq (n+1)\Hilb/\bullet \simeq (\bullet/(n+1)\Hilb)^{\op(n+1)}, \quad \text{for odd $n$}.$$ 
\end{prop}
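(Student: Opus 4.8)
The plan is to treat this as the unitary (Hilbert-space) counterpart of Propositions \ref{prop:slice-sigma} and \ref{prop:slice-sigma-add} and to prove it by induction on $n$, the alternation between $\bullet/(n+1)\Hilb$ and $(n+1)\Hilb/\bullet$ being forced by the reversal built into the involution $*'$. The base case $n=0$ is the tautology $\Sigma_*^0(\C/\Hilb)=\C/\Hilb=\bullet/1\Hilb$ (recall $\bullet=\one_\Hilb=\C$). The inductive step rests on two ingredients, which I would isolate as a slice-to-coslice delooping and a coslice-to-slice delooping, namely $\Sigma_*(n\Hilb/\bullet)\simeq\bullet/(n+1)\Hilb$ and $\Sigma_*(\bullet/n\Hilb)\simeq(n+1)\Hilb/\bullet$.

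The first of these is the workhorse: a $*$-version of Lemma \ref{lem:slice-cond} and Proposition \ref{prop:slice-sigma-add}, asserting that for a $*$-condensation-complete additive monoidal $*$-$n$-category $\CC$ in which every object is a condensate of a direct sum of $\one_\CC$, one has $\Sigma_*(\CC/\one_\CC)\simeq\bullet/\Sigma_*\CC$. I would prove it by transcribing the arguments of Lemma \ref{lem:slice-cond} and Proposition \ref{prop:slice-sigma} verbatim, checking at each stage that the condensations produced—the condensation $\bullet\condense X$ coming from the construction of $\Sigma_*\CC$, and the induced condensations $(\bullet,\one_\CC)\condense(X,x)$ in the coslice—can be chosen $*$-equivariantly, so that every object of $\bullet/\Sigma_*\CC$ is a $*$-condensate of $(\bullet,\one_\CC)$ and hence $\Sigma_*\Omega(\bullet/\Sigma_*\CC)\simeq\bullet/\Sigma_*\CC$; combined with $\CC/\one_\CC\simeq\Omega(\bullet/\Sigma_*\CC)$ this yields the equivalence. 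The condensate hypothesis holds for $\CC=n\Hilb=\Sigma_*^n\C$ by construction, and applying this with $\CC=n\Hilb$ gives the slice-to-coslice delooping directly.

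The second ingredient, and the point where the hypotheses genuinely use unitarity rather than mere rigidity, is the duality equivalence $\bullet/m\Hilb\simeq(m\Hilb/\bullet)^{\op m}$ supplying the final $\simeq$ in each displayed line. Structurally, the definition of the coslice together with the fact that $m\Hilb$ has duals (Remark \ref{rem:sigma-rigid}) gives only the full reversal $\bullet/m\Hilb\simeq(m\Hilb/\bullet)^{\op1\cdots\op m}$. To cut this down to the single top-level $\op m$ in the statement, I would invoke the $*$-structure: the adjoint involution provides an equivalence $m\Hilb\simeq m\Hilb^{\op m}$, and composing it with the full dual shows that $m\Hilb\simeq m\Hilb^{\op1\cdots\op(m-1)}$, i.e. reversing all but the top morphisms is trivial, so the full reversal collapses to $\op m$ after passing to slices. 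Granting this, the coslice-to-slice delooping follows formally: using $\bullet/n\Hilb\simeq(n\Hilb/\bullet)^{\op n}$, the fact that $\op$ shifts by one under $\Sigma_*$, the slice-to-coslice delooping, and the level-$(n+1)$ duality equivalence, I obtain $\Sigma_*(\bullet/n\Hilb)\simeq(\Sigma_*(n\Hilb/\bullet))^{\op(n+1)}\simeq(\bullet/(n+1)\Hilb)^{\op(n+1)}\simeq(n+1)\Hilb/\bullet$. Iterating the two deloopings from the base case then reproduces exactly the even/odd pattern, and one should also record the identification $\Sigma_{*'}n\Hilb\simeq(n+1)\Hilb$ so that the involution on $\C/\Hilb$ landing in $(\C/\Hilb)^\rev$ is carried to the duality on $(n+1)\Hilb$.

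The hard part will be the $*$-equivariance. As stressed in Remark \ref{rem:*-issue}, no comprehensive theory of $*$-condensations is developed here, so the delicate work is to verify that each condensation appearing in the proofs of Lemma \ref{lem:slice-cond} and Proposition \ref{prop:slice-sigma} can in fact be realized $*$-equivariantly, and to keep precise track of the op-degrees—in particular confirming the collapse of the full reversal to the single $\op(n+1)$ and the correct parity at each inductive stage. Modulo these bookkeeping checks, the rest of the argument is a formal induction built from the two delooping equivalences and the duality.
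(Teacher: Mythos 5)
Your proposal is correct and takes essentially the same route as the paper: the paper's (two-line) proof likewise rests on the $*$-version of Proposition \ref{prop:slice-sigma-add} as the workhorse, combined with an adjoint flip between slice and coslice --- performed once at the base via the anti-linear equivalence $\C/\Hilb \simeq (\Hilb/\C)^{\op}$, $(X,x)\mapsto(X,x^*)$, $f\mapsto f^*$ --- with the higher op-degree bookkeeping (your per-level collapse $\bullet/m\Hilb \simeq (m\Hilb/\bullet)^{\op m}$ via the $*$-involutions) left implicit in the word ``canonical''. Your version merely makes explicit the $*$-equivariance checks and the op-collapse that the paper compresses, which is the same argument rather than a different one.
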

\begin{proof}
Invoke the $*$-version of Proposition \ref{prop:slice-sigma-add} and use the equivalence 
$\C/\Hilb \simeq (\Hilb/\C)^\op$ defined by $(X,x)\mapsto(X,x^*)$, $f\mapsto f^*$. 
\end{proof}

\begin{exam} \label{exam:slice2}
The coslice 2-category $\bullet/2\Hilb$ consists of the following data. An object $(\CA,A)$ consists of a unitary 1-category $\CA$ and an object $A\in\CA$. A 1-morphism $(F,f):(\CA,A)\to(\CB,B)$ consists of a $*$-functor $F:\CA\to\CB$ and a 1-morphism $f:F(A)\to B$ in $\CB$. A 2-morphism $\xi:(F,f)\to(G,g)$ is a natural transformation $\xi:F\to G$ such that $f=g\circ\xi_A$.

The involution $*:\bullet/2\Hilb\to(\bullet/2\Hilb)^\op$ induced by that of $\Sigma_*(\C/\Hilb)$ fixes all the objects, maps a 1-morphism $(F,f):(\CA,A)\to(\CB,B)$ to $(F^\vee,f^{\vee*})$ where $f^\vee:A\to F^\vee(B)$ is the mate of $f:F(A)\to B$, and maps a 2-morphism $\xi$ to $\xi^{\vee*}$.
\end{exam}

\begin{rem} \label{rem:slice-indecomp}
The coslice $n$-category $\bullet/n\Hilb$ is not additive. We say that an object $(X,x)$ is {\em indecomposable} if $X$ is indecomposable and if $x$ is nonzero. Note that $X=\Sigma\Omega(X,x)$ if $(X,x)$ indecomposable. Therefore, giving an indecomposable object of $\bullet/n\Hilb$ is equivalent to giving an indecomposable unitary multi-fusion $(n-2)$-category.
\end{rem}

\section{Categories of quantum liquids} \label{sec:TO}
In this section, we study quantum liquids, a notion which unifies topological orders, SPT/SET orders and symmetry-breaking phases, 2D rational CFT's and additional gapless phases. We also use the mathematical tools developed in the previous sections to obtain some precise results of quantum liquids. All quantum phases are assumed to be anomaly-free unless we declare otherwise.

We first recall the theory of the boundaries of 3D topological orders then motivate the notion of a quantum liquid. We then recall and further develop a unified theory of SPT/SET orders and symmetry-breaking orders in all dimensions and explain that it is a part of a much bigger theory of quantum liquids. A quantum liquid can be completely characterized by two types of data: local quantum symmetry and topological skeleton. In this work, we focus on topological skeletons and compute the categories of topological skeletons as certain coslice categories. In the end, we discuss the relation between these categories and the categories of quantum liquids.

\subsection{Boundaries of 3D topological orders} \label{sec:twr}
In this subsection, we review the mathematical theory of gapped/gapless boundaries of 3D topological orders developed in \cite{KZ18b,KZ20,KZ21}. In particular, we review the idea of topological Wick rotation. 

\smallskip
A 3D topological order can be described by a pair $(\CC,c)$, where $\CC$ is a unitary modular tensor 1-category (UMTC) and $c$ is the chiral central charge. 
\begin{thm}[\cite{KZ18b,KZ20,KZ21}] 
A gapped/gapless boundary $\CX$ of the 3D topological order $(\CC,c)$ can be completely characterized by a triple $\CX=(V,\phi,\CP)$. 
\begin{enumerate}

\item $V$ has three different meanings representing the following three situations: 
\begin{itemize}
\item When $\CX$ is gapless and chiral, $V$ is called a {\it chiral symmetry} and is defined by a unitary rational vertex operator algebra (VOA) of central charge $c$. 
\item When $\CX$ is gapless and non-chiral, $V$ is called a {\it non-chiral symmetry} and is defined by a unitary rational full field algebra with chiral central charge $c^L-c^R=c$ \cite{HK07,KZ21}. 
\item When $c=0$ and the boundary $\CX$ is gapped,\footnote{The condition $c=0$ does not imply the boundary is gapped. For example, all gappable gapless boundaries (see for example \cite{CJKYZ20}) have the trivial chiral central charge $c=0$ and are necessarily non-chiral because the only unitary VOA $V$ with $c=0$ is the trivial one, i.e. $V=\mathbb{C}$. When $c\neq 0$, the boundary $\CX$ is necessarily gapless.} $V=\mathbb{C}$, which can be viewed as the trivial VOA or the trivial full field algebra. 
\end{itemize}
The category $\Mod_V$ of $V$-modules is a UMTC \cite{Hua08,KZ21}. 

\item $\CP$ is a unitary fusion right $\CC$-module \cite{KZ18}, i.e. a unitary fusion 1-category equipped with a braided functor $\CC\rightarrow \FZ_1(\CP)$, which is automatically a fully faithful embedding. We denote the centralizer of $\CC$ in $\FZ_1(\CP)$ by $\CC'_{\FZ_1(\CP)}$ and its time reversal by $\overline{\CC'_{\FZ_1(\CP)}}$ (i.e. reversing the braidings). 

\item $\phi: \Mod_V \to \overline{\CC'_{\FZ_1(\CP)}}$ is a braided equivalence between two UMTC's.

\end{enumerate}
\end{thm}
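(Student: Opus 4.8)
The plan is to read the theorem as a \emph{classification}: for the fixed bulk $(\CC,c)$ one sets up a bijection between its $2$D boundaries and the triples $(V,\phi,\CP)$ satisfying (1)--(3), and the organizing principle is the boundary-bulk relation together with topological Wick rotation, i.e.\ the $3$D bulk is the ($E_1$-)center of the $2$D boundary, with the gravitational anomaly recorded by $c$. First I would extract the purely topological data: the topological defects on $\CX$ assemble into a unitary fusion $1$-category $\CP$, and ``pushing'' a bulk anyon onto the boundary defines a braided monoidal functor $\CC\to\FZ_1(\CP)$. Because $\CC$ is a UMTC, its braiding is nondegenerate, and a braided functor out of a modular category is automatically faithful on objects and fully faithful on $\Hom$-spaces (its putative kernel would be a degenerate subcategory of $\CC$, hence trivial); this is exactly the fully faithful embedding asserted in condition (2).

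The second step is a M\"uger-type decomposition. The embedding $\CC\hookrightarrow\FZ_1(\CP)$ realizes $\CC$ as a nondegenerate (modular) fusion subcategory of the modular category $\FZ_1(\CP)$, so M\"uger's centralizer theorem yields a braided equivalence $\FZ_1(\CP)\simeq\CC\boxtimes\CC'_{\FZ_1(\CP)}$ with $\CC'_{\FZ_1(\CP)}$ again modular. Comparing chiral central charges, and using that a Drinfeld center $\FZ_1(\CP)$ is anomaly-free (total central charge $0$) while $\CC$ contributes $c$, forces $\CC'_{\FZ_1(\CP)}$ to have central charge $-c$; its time reversal $\overline{\CC'_{\FZ_1(\CP)}}$ therefore has central charge $c$. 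This is the anomaly-matching constraint that the gapless part of the boundary must satisfy.

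Third, I would identify the local (gapless) data. In the chiral case the boundary carries a unitary rational VOA $V$ of central charge $c$, and by Huang's modularity theorem \cite{Hua08} $\Mod_V$ is a UMTC of the same central charge $c$. The content of topological Wick rotation is that the boundary is recovered as the $\Mod_V$-enriched fusion category ${}^{\Mod_V}\CP$ built from a braided functor $\Mod_V\to\FZ_1(\CP)$, whose enriched center is the bulk $(\CC,c)$. Unwinding the center computation, the ``non-topological'' complementary factor of $\FZ_1(\CP)$ not accounted for by $\CC$---which is $\CC'_{\FZ_1(\CP)}$, up to the orientation reversal introduced by the Wick rotation---must be realized by the local symmetry, producing the braided equivalence $\phi:\Mod_V\xrightarrow{\sim}\overline{\CC'_{\FZ_1(\CP)}}$ of condition (3). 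The non-chiral case ($V$ a unitary full field algebra \cite{HK07}) and the gapped case ($V=\C$, $c=0$) run through the same scheme, with $\Mod_V$ degenerating accordingly. Conversely, any triple $(V,\phi,\CP)$ obeying (1)--(3) determines the canonical enriched fusion category ${}^{\Mod_V}\CP$, whose center one checks reproduces $(\CC,c)$, furnishing the inverse construction and hence the bijection.

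The main obstacle is \emph{not} the categorical bookkeeping---M\"uger decomposition, fully faithfulness, and central-charge matching are by now standard---but the rigorous content of topological Wick rotation: the claim that every gapless boundary is \emph{faithfully and completely} encoded by the pair (local symmetry, enriched topological skeleton), and that the analytically defined objects $V$ (and, non-chirally, the full field algebra) account for \emph{precisely} the complementary modular factor $\overline{\CC'_{\FZ_1(\CP)}}$ and nothing more. Establishing this completeness requires genuine vertex-operator-algebra input---Huang's modularity of $\Mod_V$, the theory of full field algebras, and the identification of boundary OPE data with the enrichment and module-category structure---and this is where the weight of \cite{KZ18b,KZ20,KZ21} lies. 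The categorical facts recalled in the present paper (the formalism of $\FZ_1$, the coslice and delooping computations) only package the output of that analysis; by themselves they do not supply the CFT-to-category dictionary on which the proof ultimately rests.
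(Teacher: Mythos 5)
This theorem is imported, not proved, in the present paper: it is stated with citations to \cite{KZ18b,KZ20,KZ21}, and Subsection \ref{sec:twr} only reviews its content (the enriched category ${}^{\CC'_{\FZ_1(\CP)}}\CP$ via the canonical construction of \cite{MP17}, the boundary-bulk relation $\FZ_1({}^{\CC'_{\FZ_1(\CP)}}\CP)\simeq\CC$ cited to \cite{KWZ17,KZ18,KYZZ21}, and topological Wick rotation). So there is no in-paper proof to compare against; measured against the cited works, your reconstruction follows essentially their route — M\"uger centralizer decomposition $\FZ_1(\CP)\simeq\CC\boxtimes\CC'_{\FZ_1(\CP)}$, central-charge matching using $c(\FZ_1(\CP))=0$, Huang's modularity of $\Mod_V$, and the transport of the complementary factor to $\Mod_V$ via $\phi$ — and your closing paragraph locates the genuine weight correctly: the categorical bookkeeping is standard, while the completeness clause (``completely characterized'') is established in \cite{KZ20,KZ21} partly on physical grounds (unique-bulk and boundary-bulk principles, plus the VOA/full-field-algebra analysis), which is precisely what this paper abstracts into Hypothesis \ref{hyp:quantum-liquid}; only the converse direction, that a triple yields a boundary whose bulk is $(\CC,c)$, is a rigorous center computation.

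One technical nit: your justification of the automatic full faithfulness in condition (2) — ``its putative kernel would be a degenerate subcategory of $\CC$, hence trivial'' — is stated too loosely, since a braided tensor functor has no kernel subcategory in the naive sense, and braided functors out of \emph{degenerate} braided categories can certainly fail to be full (e.g.\ the fiber functor $\Rep G\to\Vect$). The correct argument, due to Davydov--M\"uger--Nikshych--Ostrik, is that the canonical connected \'etale algebra $A=R(\one_{\FZ_1(\CP)})$ supplied by the right adjoint $R$ of the braided functor $\CC\to\FZ_1(\CP)$ has trivial monodromy with every object of $\CC$, i.e.\ $A$ is transparent; nondegeneracy of the UMTC $\CC$ then forces $A\simeq\one_\CC$, which is equivalent to full faithfulness. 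With that repair, your sketch is a faithful account of the proof strategy of the cited references.
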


\begin{figure}
$$ 
\raisebox{-4em}{\begin{tikzpicture}
\fill[gray!30] (-3,0) rectangle (4,3) ;
\draw[very thick] (-1,3)--(-1,2.1) node[midway,left] {$M_{a,a}$};
\draw[very thick] (-1,1.9)--(-1,1.1) node[midway,left] {$M_{b,b}$};
\draw[very thick] (-1,0.9)--(-1,0) node[midway,left] {$M_{c,c}$};
\draw[fill=white] (-1.1,0.9) rectangle (-0.9,1.1) node[midway,right] {$\,M_{b,c}$} ;
\draw[fill=white] (-1.1,1.9) rectangle (-0.9,2.1) node[midway,right] {$\,M_{a,b}$} ;
\draw[very thick] (2,3)--(2,2.1) node[midway,right] {$M_{p,p}$};
\draw[very thick] (2,1.9)--(2,1.1) node[midway,right] {$M_{q,q}$};
\draw[very thick] (2,0.9)--(2,0) node[midway,right] {$M_{r,r}$};
\draw[fill=white] (1.9,0.9) rectangle (2.1,1.1) node[midway,left] {$M_{q,r}\,$} ;
\draw[fill=white] (1.9,1.9) rectangle (2.1,2.1) node[midway,left] {$M_{p,q}\,$} ;
\node at (-2.5,1.5) {$V$} ;
\node at (0.5,1.5) {$V$} ;
\node at (3.5,1.5) {$V$} ;
\end{tikzpicture}}
$$
\caption{macroscopic observables on the worldsheet of a 1+1D CFT 
}
\label{fig:observables}
\end{figure}

Such a triple characterizes all observables on the 2D world sheet as illustrated in Figure \ref{fig:observables}. In particular, $V$ is the chiral/non-chiral symmetry that is transparent on the entire 2D worldsheet; the objects $a,b,c,\cdots,p,q,r$ in $\CP$ are the labels of all topological defect lines (TDL) on the 2D worldsheet; $M_{a,a}$ is the space of fields living on the TDL labeled by $a$; and $M_{a,b}$ is the space of fields living at the 0D wall between TDL's labeled by $a$ and $b$. The labels of TDL's, together with all $M_{a,b}$, form an enriched category ${}^{\CC'_{\FZ_1(\CP)}}\CP$ with $\hom_{{}^{\CC'_{\FZ_1(\CP)}}\CP}(a,b):=M_{a,b}$, where ${}^{\CC'_{\FZ_1(\CP)}}\CP$ is defined by the canonical $\CC'_{\FZ_1(\CP)}$-action on $\CP$ via the so-called canonical construction \cite{MP17}. Note that both $\CP$ and $\CC'_{\FZ_1(\CP)}$ are abstract categories, so is ${}^{\CC'_{\FZ_1(\CP)}}\CP$, which does not have a direct connection to physical observables on the worldsheet until we supply the braided equivalence $\phi: \overline{\Mod_V} \to \CC'_{\FZ_1(\CP)}$. Through $\phi$, the hom spaces in ${}^{\CC'_{\FZ_1(\CP)}}\CP$ acquire their physical meanings as the physical observables in spacetime. In particular, the OPE of chiral or non-chiral fields on the 2D worldsheet and on the 1D TDL's are encoded in the compositions of the hom spaces in ${}^{\CC'_{\FZ_1(\CP)}}\CP$. Note that a different $\phi$ defines a different enrichment thus different macroscopic observables. For $\CX=(V,\phi,\CP)$, we introduce the following terminologies \cite{KZ20,KZ21}. 
\begin{enumerate}
\item the data $V$ and the braided equivalence $\phi: \overline{\Mod_V} \to \CC'_{\FZ_1(\CP)}$ are called the {\it local quantum symmetry} of $\CX$, which is denoted by $\CXlqs$; 


\item the data $\CP$, which can be viewed as an anomalous gapped boundary of $(\CC,c)$, is called the {\it topological skeleton} of $\CX$ and is denoted by $\CXsk$. 

\end{enumerate}
All together, we have $\CX=(\CXlqs,\CXsk)$. 

\begin{rem} \label{rem:anomalous-wall}
A boundary of a given nontrivial bulk phase is an example of an anomalous quantum phase. More generally, a domain wall $a$ between two quantum phases $X$ and $Y$, i.e. 
\begin{tikzpicture}[scale=0.6]
\draw[fill=white] (-0.2,-0.2) rectangle (0.2,0.2) ;
\draw (0,0) node{\scriptsize $a$};
\draw[thick] (-1,0) node[left] {\scriptsize $X$} -- (-0.2,0) ;
\draw[thick] (0.2,0) -- (1,0) node[right] {\scriptsize $Y$};


\end{tikzpicture}
, can be viewed as an anomalous boundary of $X$ with the anomaly defined by $Y$. Similarly, a defect junction of three quantum phases $X,Y$ and $Z$ can be viewed as an anomalous domain wall between $X$ and $Y$ with the anomaly defined by $Z$. 
\end{rem}

The boundary-bulk relation $\FZ_1({}^{\CC'_{\FZ_1(\CP)}}\CP) \simeq \CC$ holds \cite{KWZ17,KZ18,KYZZ21}. In other words, the gravitational anomaly of ${}^{\CC'_{\FZ_1(\CP)}}\CP$ is precisely given by $\CC$. 
We can also look at the same thing from a slightly different point of view. Note that $\CP$ can be viewed as an anomalous gapped boundary (recall Remark \ref{rem:anomalous-wall}) of the 3D topological order $(\CC,c)$, and its anomaly is encoded by the UMTC $\CC'_{\FZ_1(\CP)}$. There are two ways to cancel the anomaly of $\CP$. 
\begin{itemize}
\item An obvious way to cancel the anomaly is to attach the 3D topological order $(\overline{\CC'_{\FZ_1(\CP)}},c)$ to $\CP$ so that $\CP$ becomes an anomaly-free domain wall between the 3D topological orders $(\CC,c)$ and $(\overline{\CC'_{\FZ_1(\CP)}},c)$ as illustrated in Figure \ref{fig:TWR} with $\overline{\CB}:=\CC'_{\FZ_1(\CP)}$. As a consequence, the anomaly of $\CP$ is canceled in 2d space (or 3D spacetime). 
\item Another way to cancel the anomaly is to apply the so-called {\it topological Wick rotation}, which was introduced in \cite{KZ20}. More precisely, one can  `rotate' the 3D topological order $(\overline{\CC'_{\FZ_1(\CP)}},c)$ to the time direction as illustrated in Figure \ref{fig:TWR}. Physically, this `rotation' amounts to replace the physical observables in $\hom_\CP(a,b)$ by those in $\hom_{{}^{\CC'_{\FZ_1(\CP)}}\CP}(a,b)$ for $a,b\in\CP$. This replacement fixes the anomaly  in 2D spacetime. 
\end{itemize}
These two ways are connected by topological Wick rotation, which leads to many deep predictions as we will show later but remains a mystery to us.

\begin{figure}[htp]
$$ 
\raisebox{-2em}{\begin{tikzpicture}
\draw[thick] (-1,0)--(0.5,1.5) node[midway,left] {} ;
\draw[thick] (0,0)--(1.5,1.5) node[midway,left] {} ;
\draw[thick] (2,0)--(3.5,1.5) node[midway,left] {} ;
\draw[thick] (-1,0)--(2,0) node[midway,left] {} ;
\draw[thick] (0.5,1.5)--(3.5,1.5) node[midway,left] {} ;

\node at (-0.3,0.2) {\scriptsize $(\CB,c)$} ;
\node at (1,0.7) {\scriptsize $\CP$} ;
\node at (2.3,1) {\scriptsize $(\CC,c)$} ;
\end{tikzpicture}}
\xrightarrow{\mbox{\scriptsize topological Wick rotation}}
\raisebox{-2.5em}{\begin{tikzpicture}
\draw[thick] (0,0)--(0,0.8) node[midway,left] {} ;
\draw[thick] (0,0.8)--(1.5,2.3) node[midway,left] {} ;
\draw[thick] (1.5,1.5)--(1.5,2.3) node[midway,left] {} ;
\draw[thick] (0,0)--(1.5,1.5) node[midway,left] {} ;
\draw[thick] (2,0)--(3.5,1.5) node[midway,left] {} ;
\draw[thick] (0,0)--(2,0) node[midway,left] {} ;
\draw[thick] (1.5,1.5)--(3.5,1.5) node[midway,left] {} ;
\node at (0.9,1.3) {\scriptsize ${}^{\overline{\CB}}\CP$} ;
\node at (2.3,1) {\scriptsize $(\CC,c)$} ;
\end{tikzpicture}}
$$
\caption{the idea of topological Wick rotation
}
\label{fig:TWR}
\end{figure}



A special case is especially important. Consider the special case $(\CC,c)=(\Hilb,0)$ in Figure \ref{fig:TWR}. In this case, the topological Wick rotation gives an exact holographic duality between the 3D topological order $(\CB,c)$ with its gapped boundary $\CP$ (i.e. $\overline{\CB}\simeq \FZ_1(\CP)$) and an anomaly-free non-chiral 2D CFT $(V,\phi,\CP)$. While AdS/CFT correspondence is a duality between two gapless systems, topological Wick rotation defines a holographic duality between gapped and potentially gapless phases. Similar to AdS/CFT dictionary, this holographic duality also provides a long dictionary. For example, when $\CP$ is a modular tensor category, the Lagrangian algebra in $\CB$ associated to the boundary $\CP$ \cite{K14}, after the rotation, becomes precisely a modular-invariant 2D CFT \cite{KR09} (see \cite{BCDP22} for a related and more recent discussion). Applying topological Wick rotation to 3D topological orders with defects in all codimensions, we obtain a generalization of the exact holographic duality to defects \cite{KZ20,KZ21,KYZ21}. 

\begin{hyp} \label{hyp:quantum-liquid}
The following hypotheses, originally proposed in \cite{KZ20,KZ21}, are made more explicit. 
\begin{enumerate}

\item Topological Wick rotation works in all dimensions (see Figure\,\ref{fig:TWR_2}). When $\CC$ is trivial, it provides an exact duality between an $n+$1D topological order $\CB$ with a gapped boundary $\CP$ and a (potentially gapless) $n$D anomaly-free quantum phases. When $\CC$ is non-trivial, it provides an exact duality between $n$D anomalous gapped boundaries of an $n+$1D topological order $\CC$ (recall Remark\,\ref{rem:anomalous-wall}), i.e. the pair $(\CP,\CB)$, and $n$D anomaly-free (potentially gapless) boundaries of $\CC$, which can be viewed as an $n$D anomalous quantum phases. This exact duality generalizes to defects of all codimensions. We name all the $n$D (anomaly-free/anomalous) quantum phases obtained after TWR (anomaly-free/anomalous) {\it quantum liquids}. 

\item The description $\CX=(\CXlqs,\CXsk)$ works for all quantum liquids. The topological skeletons of quantum liquids (or boundaries) can all be obtained from topological Wick rotation (see Figure\,\ref{fig:TWR_2}). 

\item Boundary-bulk relation \cite{KWZ15,KWZ17} holds for all quantum liquids (see more discussion in Subsection \ref{sec:cat-QL}). 


\end{enumerate}
\end{hyp}

\begin{figure}
$$ 
\begin{array}{ccc}
\raisebox{-1em}{\begin{tikzpicture}[scale=0.6]
\draw[ultra thick] (0,0)--(-3,0) node[midway,above] {\scriptsize $\CC$} node[below] {\scriptsize $n$-th spatial direction} ; 
\draw[ultra thick] (2,0)--(0,0) node[midway,above] {\scriptsize $\CB$} ;
\draw[fill=white] (-0.1,-0.1) rectangle (0.1,0.1) node[midway,above] {\scriptsize $\CP$} ;
\end{tikzpicture}}
& \xrightarrow{\mathrm{TWR}} & 
\raisebox{-1em}{\begin{tikzpicture}[scale=0.6]
\draw[ultra thick] (0,2)--(0,0) node[midway,right] {\scriptsize $\overline{\CB}$} ;
\node[above] at (-2,2) {\scriptsize the time direction};
\draw[ultra thick] (0,0)--(-3,0) node[midway,above] {\scriptsize $\CC$} ; 
\draw[fill=white] (-0.1,-0.1) rectangle (0.1,0.1) node[midway,below] {\scriptsize $\CP$} ;
\end{tikzpicture}} 
=
\begin{tikzpicture}[scale=0.6]
\draw[ultra thick] (0,0)--(-3,0) node[midway,above] {\scriptsize $\CC$} ; 
\draw[fill=white] (-0.1,-0.1) rectangle (0.1,0.1) node[midway,above] {\scriptsize ${}^{\overline{\CB}}\CP$} ;
\end{tikzpicture}
\\
\mbox{\scriptsize  $\CP$ is a gapped domain wall beween} && \mbox{\scriptsize $n$D potentially gapless boundaries of}  \\
\mbox{\scriptsize  two $n+$1D topological orders $\CC$ and $\CB$} & & 
\mbox{\scriptsize the $n+$1D topological order $\CC$}
\end{array}  
$$
\caption{The idea of topological Wick rotation (TWR): Before TWR, slightly abusing notations, $\CP$ also denotes the category of all topological defects on the gapped wall and $\CB$ also denotes that of topological defects of codimension 2 and higher in the $n+$1D topological order $\CB$. After TWR, they become the category of topological sectors of states and that of operators, respectively (see \cite{KWZ21} for more details). The enriched higher category ${}^{\overline{\CB}}\CP$ summarizes all topological observables (or the topological skeleton) in the $n$D spacetime of the quantum liquid.}
\label{fig:TWR_2}
\end{figure}

A gapped boundary of the trivial $n+$1D topological order is an anomaly-free $n$D topological order $\CX$. In this case, the quantum liquid produced by the topological Wick rotation is precisely the topological order $\CX$. In other words, quantum liquid automatically include all topological orders. We explain in Subsection \ref{sec:spt-set} that quantum liquids also include all SPT/SET orders and symmetry-breaking orders, and, therefore, all `gapped quantum liquids' defined in \cite{ZW15,SM16}. This fact justifies the name `quantum liquid' (see Remark \ref{rem:the-name}).


\subsection{SPT/SET and symmetry-breaking orders} \label{sec:spt-set}
Using the main result in \cite{KLWZZ20a} and topological Wick rotation, we develop a unified theory of SPT/SET orders and symmetry-breaking orders as a part of the theory of quantum liquids. 

\smallskip
Let $\CR$ be a unitary symmetric fusion $n$-category. For example, $\CR=n\Rep G$ where $G$ is a finite group describing an onsite symmetry. 

\begin{thm}[\cite{KLWZZ20a}] \label{thm:spt-set}
An $n+1$D SPT/SET order with the higher symmetry $\CR$ can be completely characterized (up to invertible topological orders) by the following data: 
\begin{enumerate}
\item a unitary fusion $n$-category $\CA$ equipped with a braided faithful embedding $\CR \hookrightarrow \FZ_1(\CA)$ satisfying the following condition: 
\smallskip
\begin{enumerate}
\item[$(**)$] the composed functor $\CR \hookrightarrow \FZ_1(\CA) \to \CA$ is faithful;
\end{enumerate}
\item a braided equivalence $\phi: \FZ_1(\CR)\to \FZ_1(\CA)$ preserving the symmetry charges, i.e. rendering the following diagram commutative. 
\begin{equation} \label{eq:preserve-R}
\raisebox{1em}{\xymatrix@R=1em{
& \CR \ar@{^{(}->}[dl]_{} \ar@{^{(}->}[dr]^{} & \\
\FZ_1(\CR) \ar[rr]^{\phi} & & \FZ_1(\CA)
}}
\end{equation}
\end{enumerate}
\end{thm}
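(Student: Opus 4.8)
The plan is to realize the $n+1$D SPT/SET order as a gapped boundary of a canonically associated anomaly-free topological order, and then to read off the pair $(\CA,\phi)$ from the gauging construction together with the boundary-bulk relation. First I would gauge the higher symmetry $\CR$. For a finite symmetry described by a unitary symmetric fusion $n$-category $\CR$ (for instance $\CR=n\Rep G$), gauging produces an anomaly-free $n+1$D topological order whose braided fusion $n$-category of topological defects is the center $\FZ_1(\CR)$; the original phase is recovered by the reverse operation, i.e.\ by condensing the canonical condensable (Lagrangian-type) algebra in $\FZ_1(\CR)$ determined by the Tannakian subcategory $\CR$. In this picture the SPT/SET order appears as a gapped boundary, and its topological defects --- the symmetry charges, fluxes and twist defects --- assemble into a unitary fusion $n$-category $\CA$. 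The symmetry charges, being transparent point-like excitations, determine the braided faithful embedding $\CR\hookrightarrow\FZ_1(\CA)$, and condition $(**)$ records that these charges persist as genuine nonzero excitations of $\CA$, which is exactly what separates a nontrivially enriched order from a purely topological one.

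Next I would invoke the boundary-bulk relation, which identifies the bulk of $\CA$ with its center $\FZ_1(\CA)$. Since $\CA$ is a gapped boundary of the gauged theory $\FZ_1(\CR)$, these two descriptions of the same bulk must agree, producing a braided equivalence $\phi\colon\FZ_1(\CR)\to\FZ_1(\CA)$; the requirement that $\phi$ intertwine the two canonical embeddings of $\CR$, as in the commuting diagram \eqref{eq:preserve-R}, is precisely the statement that $\phi$ matches the symmetry charges of the gauged theory with those of $\CA$. Conversely, starting from abstract data $(\CA,\phi)$ obeying the stated conditions, I would reconstruct the phase by forming the $\CR$-enriched fusion $n$-category ${}^{\CR}\CA$ via the canonical construction and applying topological Wick rotation, which yields a well-defined SPT/SET order with symmetry $\CR$. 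Checking that these two assignments are mutually inverse, up to stacking with invertible topological orders, establishes the characterization.

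The hard part will be making the gauging/ungauging correspondence and the boundary-bulk relation precise at the level of fusion $n$-categories, and verifying that the two faithfulness conditions --- faithfulness of $\CR\hookrightarrow\FZ_1(\CA)$ and condition $(**)$ --- together with the charge-preserving equivalence $\phi$ neither lose nor overcount data. Concretely, one must confirm that the only surviving ambiguity is stacking by an invertible topological order, i.e.\ a phase with trivial $\CA$ and trivial center, which is what the qualifier ``up to invertible topological orders'' accounts for. Since the statement is quoted from \cite{KLWZZ20a}, I would rely on its detailed microscopic analysis for these points, supplying only the categorical inputs developed here and in \cite{KZ18b,KZ20,KZ21}: the centers $\FZ_1$, the boundary-bulk relation, and the canonical construction of enriched higher categories.
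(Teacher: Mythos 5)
Two comments. First, be aware that the paper does not prove this statement at all: it is imported verbatim from \cite{KLWZZ20a} (with Theorem \ref{thm:sb-order} as the paper's own mild extension), so there is no in-paper argument to match. Your forward direction is a faithful reconstruction of the cited derivation: gauge $\CR$ to get an anomaly-free bulk with defects $\FZ_1(\CR)$, realize the SPT/SET order as a gapped boundary with fusion $n$-category $\CA$, and use the boundary-bulk relation to produce $\phi:\FZ_1(\CR)\to\FZ_1(\CA)$ matching the charges as in \eqref{eq:preserve-R}. One gloss is off, though: condition $(**)$ does not separate ``a nontrivially enriched order from a purely topological one''; by Theorem \ref{thm:sb-order} it separates SPT/SET orders (symmetry unbroken) from symmetry-breaking orders (some charges condense on the boundary, killing faithfulness).

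The genuine gap is in your reconstruction step. You propose to rebuild the phase from $(\CA,\phi)$ as the enriched category ${}^{\CR}\CA$, enriched in $\CR$ itself. But enrichment in $\CR$ uses only the embedding $\CR\hookrightarrow\FZ_1(\CA)$ and discards the equivalence $\phi$ entirely, so your two assignments cannot be mutually inverse: distinct $\phi$'s with the same $(\CA,\CR\hookrightarrow\FZ_1(\CA))$ yield the same ${}^{\CR}\CA$. Concretely, take $\CA=\CR$: the paper notes that the braided equivalences $\phi:\FZ_1(\CR)\to\FZ_1(\CR)$ satisfying \eqref{eq:preserve-R} form the group of $n+1$D SPT orders with symmetry $\CR$, whereas ${}^{\CR}\CR$ is a single object independent of $\phi$, so your reconstruction would collapse the whole SPT group to the trivial phase. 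The correct object, used throughout Subsection \ref{sec:spt-set}, is ${}^{\FZ_1(\CR)}\CA$, with the enrichment defined through $\phi$ via the canonical construction; that is where the datum $\phi$ is stored. Relatedly, note that within the theorem itself the converse is spatial --- $\CA$ is realized as the gapped boundary of the bulk whose defects are $\FZ_1(\CR)$, with $\phi$ identifying the two descriptions of that bulk --- so topological Wick rotation is not needed for the inverse check; in the paper it enters only afterwards, as the proposal converting this characterization into the spacetime (enriched-category) description of the phase.
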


\begin{thm} \label{thm:sb-order}
When the condition $(**)$ and the preserving-charge condition (\ref{eq:preserve-R}) in Theorem \ref{thm:spt-set} do not hold, we obtain a characterization of a symmetry-breaking order. Altogether, we obtain the mathematical characterization (or classification) of all `gapped quantum liquids' (in the sense of \cite{ZW15,SM16} and up to invertible topological orders).\footnote{Theorem\,\ref{thm:spt-set} and \ref{thm:sb-order} can be stated differently as a characterization or classification of gapped liquids with algebraic higher symmetries or fusion category symmetries as in \cite{KLWZZ20b}.} 
\end{thm}


Note that $\CA$ in Theorem \ref{thm:spt-set} and \ref{thm:sb-order} can be viewed as an anomalous $n+$1D topological order \cite{KW14,KWZ15}. The anomaly is canceled along the space-direction by the $n+$2D bulk. However, this characterization cannot be a physical characterization of an $n+$1D SPT/SET or symmetry-breaking order because its $n+$2D bulk in a lattice model realization is completely empty. In a physical realization, the anomaly of $\CA$ is necessarily fixed in the $n+$1D spacetime instead of its $n+$2D bulk. We have two important observations. 
\begin{enumerate}
\item The macroscopic observables depicted in Figure \ref{fig:observables} or its higher dimensional analogues always work regardless the phase is gapped or gapless. In particular, the labels of topological defects of codimension $\geq 1$ form a monoidal higher category $\CP$. The macroscopic observables on defects are encoded in the internal homs of (higher) morphisms in $\CP$ thus form an enriched higher category, which will be defined precisely elsewhere. This is entirely similar to 2D rational CFT's. 
\item Onsite symmetries should be special cases of local quantum symmetries. 
\end{enumerate}
By these two observations, we propose to apply topological Wick rotation to Theorem \ref{thm:spt-set} and \ref{thm:sb-order}.  As a consequence, we obtain a physical characterization of an SPT/SET (or symmetry-breaking) order in terms of an enriched fusion higher category ${}^{\FZ_1(\CA)}\CA$, which will be defined elsewhere. By boundary-bulk relation \cite{KWZ15,KWZ17}, we expect that ${}^{\FZ_1(\CA)}\CA$ is anomaly-free in the sense that $\FZ_1({}^{\FZ_1(\CA)}\CA)\simeq n\Hilb$, which was rigorously proved only in the $n=1$ cases \cite{KZ18,KYZZ21}. The rigorous study of the $n>1$ cases are beyond this work and will be studied elsewhere. Now we assume that $\FZ_1({}^{\FZ_1(\CA)}\CA)\simeq n\Hilb$ is true for all $n$. In other words, the anomaly of $\CA$ is canceled by observables in $n+$1D spacetime. Similar to 2D CFT's, using a braided equivalence $\phi: \FZ_1(\CR) \to \FZ_1(\CA)$, we obtain an enriched higher category ${}^{\FZ_1(\CR)}\CA$, which should be viewed as the precise mathematical description of the spacetime observables in an $n+$1D SPT/SET (or symmetry-breaking) order. When $\CR=n\Rep G$, 
similar to AdS/CFT correspondence, the topological Wick rotation maps the 
gauge symmetry $G$ of the $n+$2D phase to the global symmetry $G$ of the 
$n+$1D phase. It is reasonable that this gauge/global interpretation of $G$ 
should generalize to all $\CR$.

Note that our proposal does not change the classification theory of SPT/SET orders in Theorem \ref{thm:spt-set}, which is consistent with many old classification results in physics. As we show in Example \ref{exam:2D-classification}, the 2D result in Theorem \ref{thm:sb-order} reproduces exactly the same classification of 2D gapped quantum liquids obtained earlier in \cite{CGW10b,SPGC11}, which was based on a different and microscopic approach. This fact provides a strong support of our proposal. After the appearance of this work in arXiv, our proposal was proved explicitly in the Ising chain and the Kitaev chain \cite{KWZ21} and in more general 2D models \cite{XZ22}. More precisely, the enriched fusion categories obtained from topological Wick rotation indeed give the physical characterizations of all the macroscopic observables in all phases realized in these 2D lattice models. We will provide more evidences in the future.

\begin{rem} \label{rem:the-name}
Above proposal also says that `gapped quantum liquids' defined in \cite{ZW15,SM16}, including topological orders, SPT/SET orders and symmetry-breaking orders, are examples of `quantum liquids' in our sense. We believe that two names are compatible in the sense that `gapped quantum liquids' in \cite{ZW15,SM16} are precisely gapped `quantum liquids' in our sense. 
\end{rem}

\begin{exam}
When $\CA=\CR$, the enriched higher category ${}^{\FZ_1(\CA)}\CA$ describes the physical observables in the spacetime of the $n+1$D SPT order with the symmetry $\CR$. The braided equivalences $\phi: \FZ_1(\CR) \to \FZ_1(\CR)$ satisfying the condition (\ref{eq:preserve-R}) form the group of $n+1$D SPT orders with the symmetry $\CR$. When $\CR=n\Rep G$ for a finite group $G$, 
the fusion $n$-category $n\Hilb_G = n\Hilb\times G$ is Morita equivalent to $n\Rep G$. A given Morita equivalence produces a braided equivalence $\phi: \FZ_1(\CR) \to \FZ_1(n\Hilb_G)$ and then an enriched higher category ${}^{\FZ_1(\CR)}n\Hilb_G$, which describes a phase with the symmetry $\CR$ completely broken. When $G=\Z_2$ and $n=1$, the trivial SPT order ${}^{\FZ_1(\Rep \Z_2)}\Rep \Z_2$ and the symmetry-breaking order ${}^{\FZ_1(\Rep \Z_2)}\Hilb_{\Z_2}$ were both realized in the Ising chain \cite{KWZ21}. For $n>1$, we predict that the enriched higher categories ${}^{\FZ_1(n\Rep\Z_2)}n\Rep\Z_2$ and ${}^{\FZ_1(n\Rep\Z_2)}n\Hilb_{\Z_2}$ can be realized in higher dimensional Ising models. 
\end{exam}

\begin{exam} \label{exam:2D-classification}
In 2D, when $\CR=\Rep G$ for a finite group $G$, by \cite{KZ18,KYZZ21,D10}, all such enriched fusion categories ${}^{\FZ_1(\CR)}\CA$ are of the form ${}^{\FZ_1(\CR)}(\FZ_1(\CR))_{A_{(H,\omega)}}$, where $H$ is a subgroup of $G$, $\omega \in H^2(H,U(1))$, $A_{(H,\omega)}$ is the Lagrangian algebra in $\FZ_1(\CR)$ determined by $(H,\omega)$ \cite{D10} and $(\FZ_1(\CR))_{A_{(H,\omega)}}$ denotes the category of right $A_{(H,\omega)}$-modules in $\FZ_1(\CR)$. The enrichment is defined by the canonical $\FZ_1(\CR)$-action on $(\FZ_1(\CR))_{A_{(H,\omega)}}$. This result recovers the well-known classification of all 2D gapped quantum phases with bosonic finite onsite symmetries by the triples $(G,H,\omega)$ \cite{CGW10b,SPGC11}, a result which was based on a microscopic definition of gapped quantum phases. Moreover, one can easily obtain the classification of the 1D boundaries of these 2D phases by classifying all closed modules \cite{KZ21} over enriched fusion 1-category ${}^{\FZ_1(\CR)}(\FZ_1(\CR))_{A_{(H,\omega)}}$, or equivalently, by classifying all modules over $(\FZ_1(\CR))_{A_{(H,\omega)}}$. When $G$ is abelian, the physical meanings of above enriched fusion categories and their closed modules were proved in 1D lattice models in \cite{XZ22}. It turns out that the same enriched fusion 1-categories (resp. triples) characterize (resp. classify) all 2D gapped quantum phases with fermionic finite onsite symmetries \cite{KWZ21}. The classification part of this result was previously conjectured via a model-dependent argument based on Jordan-Wigner transformations. 
\end{exam}

We refer to the $\CA$ in Theorem \ref{thm:spt-set} and \ref{thm:sb-order}, viewed as an $n+$1D anomalous topological order, as the topological skeleton of the $n+1$D gapped quantum liquid\footnote{Note that ${}^{\FZ_1(\CA)}\CA$ contains no further information than $\CA$. That is why we choose to define $\CXsk$ by $\CA$. Also note that ${}^{\FZ_1(\CR)}\CA$ already includes some information of the local quantum symmetry.}, and refer to $\CR$, together with the braided equivalence $\phi: \FZ_1(\CR) \to \FZ_1(\CA)$, as the local quantum symmetry. In other words, we have $\CX=(\CXlqs,\CXsk)$ for any gapped quantum liquid $\CX$.

\begin{rem} \label{rem:sk-diff-QL}
The same topological skeleton can be associated to gapped or gapless quantum liquids depending on what the local quantum symmetries are \cite{KZ20,KZ21,KWZ21}. In \cite{KZ22}, we construct the local quantum symmetries of 2D gapped liquid phases with a finite onsite symmetry as certain topological nets. 
\end{rem}

\begin{figure}
$$
\raisebox{-2.5em}{\begin{tikzpicture}[scale=0.5]
\fill[gray!20] (-3,0) rectangle (3,3) ;
\draw[very thick] (-3,0)--(3,0) node[midway,below] {\footnotesize $\CXsk$} ;
\draw[very thick] (3,3)--(3,0) node[midway,right,scale=0.8] {$\CB$} ;

\draw[fill=white] (2.9,-0.1) rectangle (3.1,0.1) node[midway,below] {\scriptsize $(\CM,m)$} ;
\node[above] at (-0,1.3) {\scriptsize $\FZ_1(\CXsk)$} ;

\end{tikzpicture}}
\quad \xrightarrow{\mbox{\scriptsize topological Wick rotation}} \quad 
\raisebox{-2em}{\begin{tikzpicture}[scale=0.5]
\draw[ultra thick] (-3,1)--(3,1) node[midway,below] {\scriptsize ${}^{\FZ_1(\CXsk)}\CXsk$} ;
\draw[fill=white] (2.9,0.9) rectangle (3.1,1.1) node[midway,above] {\scriptsize $({}^\CB\CM,m)$} ;
\end{tikzpicture}}
$$
\caption{the topological skeleton of the boundary of a quantum liquid}
\label{fig:boundaries}
\end{figure}

If we focus only on the topological skeletons of quantum liquids, our theory can be easily extended to include the topological skeletons of the gapped or gapless boundaries via the representation theory of (enriched) unitary multi-fusion higher categories. We briefly sketch it now. Let $\CX$ be an $n+$1D quantum liquid. Without loss of generality, we can assume that $\CXsk$ is an indecomposable unitary multi-fusion $n$-category. The associated enriched multi-fusion category is ${}^{\FZ_1(\CXsk)}\CXsk$ (see Figure \ref{fig:boundaries}). The topological skeleton of a boundary of $\CX$ is determined by a left $\CXsk$-module $\CM$, together with a distinguished object $m\in\CM$ which specifies the boundary condition. The $\CXsk$-module structure on $\CM$ is defined by a monoidal $*$-functor $\rho: \CXsk \to \Fun(\CM,\CM)$. The enriched higher category associated to the boundary is ${}^\CB\CM$, where $\CB:=\Fun_{\CXsk}(\CM,\CM)$. The enriched higher category ${}^\CB\CM$ is naturally a left ${}^{\FZ_1(\CXsk)}\CXsk$-module \cite{KZ21}. Moreover, the boundary-bulk relation should hold, i.e. $\FZ_0({}^\CB\CM) \simeq {}^{\FZ_1(\CXsk)}\CXsk$ \cite{KWZ15,KWZ17}. This equivalence means that the
left ${}^{\FZ_1(\CXsk)}\CXsk$-module ${}^\CB\CM$ is closed. If $\CX$ is an $n+$1D topological order, then we have $\FZ_1(\CXsk)\simeq n\Hilb$. In this case, if $\rho$ is an equivalence, then the boundary is gapped; otherwise, the boundary is either equipped with some internal symmetry or gapless. 
Since $\CB$ is determined by the $\CXsk$-module $\CM$, we can define the topological skeleton of the boundary simply by a pair $(\CM,m)$. 
This mathematical theory of the topological skeletons of the boundaries of quantum liquids is later proved in a direct study of the Ising chain and the Kitaev chain and their boundaries in \cite{KWZ21} and in more general 2D lattice models in \cite{XZ22}. 

\begin{rem}
The theory of the topological skeletons of defects in higher codimensions can be developed in a similar fashion. This leads us to an explicit construction of the categories of the topological skeletons of quantum liquids. In Subsection \ref{sec:tosk}, however, we compute these categories in a different manner and show that the result is compatible with the results in this subsection. 
\end{rem}

In summary, we have shown that the theory of quantum liquids unifies all topological orders, SPT/SET orders, symmetry-breaking orders and 2D rational CFT's. It is reasonable to believe that quantum liquids also include certain higher dimensional gapless quantum phases whose low energy effective theories are certain higher dimensional CFT's. A quantum liquid can be described by a pair $(\CXlqs,\CXsk)$, where the topological skeleton $\CXsk$ can be viewed as an anomalous topological order and the local quantum symmetry $\CXlqs$ encodes the information of local observables and cancels the anomaly of $\CXsk$. In gapless cases, local quantum symmetries encode the information of certain correlation functions; the topological skeleton encodes all the topological (or categorical) information, such as all topological defects. Together, they can also recover the correlation functions on each gapless defect as in \cite{KZ20,KZ21}. 

We want to emphasize that finding a unified mathematical framework to include both phases within and beyond Landau's paradigm is a necessary step towards a new paradigm. This work allows us to catch a glimpse of the new paradigm. A more complete picture of the new paradigm is developed in \cite{KZ22}. Interestingly, both $\CXlqs$ and $\CXsk$ generalize the notion of a symmetry in the Landau's paradigm (see Remark \ref{rem:LW-model}). 

\begin{rem} \label{rem:LW-model} 
First, it is a long tradition to view $\CXlqs$ in a 2D CFT as a symmetry. Secondly, $\CXsk$ can also be viewed as a higher dimensional symmetry. For example, in a 3D Levin-Wen model constructed from a unitary fusion 1-category $\CC$, the particles in the bulk can be identified with loop operators commuting with the Hamiltonian \cite{LW05}, i.e. 1-dimensional symmetries. Using the fact that a $B_p$-operator realizes the composition of morphisms around a plaquette in $\CC$, it is easy to see that the loop operators commuting with the Hamiltonian one-to-one correspond to $\CC$-$\CC$-bimodule functors, i.e. objects in $\FZ_1(\CC)$ \cite{KK12}. Similarly, particles on a gapped boundary can also be viewed as 1-dimensional symmetries \cite{KK12}. 
 
Levin-Wen models can be generalized to higher dimensions. Given a unitary fusion $n$-category $\CC$, we construct an $n+2$D lattice model in the following way. Fix a triangulation of an $(n+1)$-dimensional space manifold. We label each cell of codimension $k+1$ by a simple $k$-morphism of $\CC$. All such labels span the total Hilbert space. The Hamiltonian contains two type of stabilizers as in the 3D Levin-Wen models. More precisely, the first type of stabilizers project the Hilbert space to a subspace consisting of all composable morphisms, and the second type of stabilizers simply composing all morphisms around a $k$-cell. Then a defect of codimension two can be identified with an $n$-dimensional operator defined on an $n$-sphere and commuting with the Hamiltonian, i.e. an $n$-dimensional symmetry. Such an operator is precisely given by a $\CC$-$\CC$-bimodule functor. Together with all higher codimensional defects, we expect that they form the category $\FZ_1(\CC)$. 

A gapped boundary lattice model can be similarly constructed from an indecomposable unitary left $\CC$-module $\CM$. The defects of codimension one on the boundary can be identified with $n$-sphere operators commuting with the Hamiltonian on the boundary. Together with all the higher codimensional defects, we expect that they form the unitary fusion $n$-category of all $\CC$-module functors from $\CM$ to $\CM$. Therefore, the category of topological defects can be viewed as the category of ``higher dimensional symmetries'', i.e. higher dimensional operators commuting with the Hamiltonian. 

In the literature, a fusion $n$-category is also called an ``algebraic higher symmetry'' (see \cite{JW20,KLWZZ20b}) or ``fusion category symmetry'' (see, for example, \cite{TW19}); and its $E_1$-center is also called a ``categorical symmetry'' in \cite{JW20,KLWZZ20b}. 
\end{rem}

\subsection{The higher categories $\TO^n$} \label{sec:cat-QL}

The notion of a quantum liquid is new. It generalizes the existing physical notion of a gapped quantum liquid \cite{ZW15,SM16} by including certain gapless quantum phases, such as those 2D gapless phases described by 2D rational CFT's. Unfortunately, a microscopic definition of a gapless quantum liquid is not yet available. However, a macroscopic definition is possible because a quantum phase is a macroscopic notion. In principle, it can be defined by all its macroscopic observables. From the unified mathematical theory of gapped and gapless boundaries of 3D topological orders \cite{KZ18b,KZ20,KZ21} and Hypothesis \ref{hyp:quantum-liquid}, we have seen that there is a much bigger mathematical theory of a large family of gapped and gapless quantum phases far beyond topological orders and 2D rational CFT's. We name this family of quantum phases {\it quantum liquids}. We emphasize that, by developing this bigger theory, we automatically obtain a macroscopic definition of a quantum liquid. 

\begin{rem}
To answer which gapless theories, besides 2D rational CFT's, are examples of gapless quantum liquids demands extensive research in the future. However, we believe that 
the critical points of continuous topological phase transitions among SPT/SET orders and symmetry-breaking orders are good candidates for gapless quantum liquids. 
\end{rem}

Our guiding principles are topological Wick rotation and the boundary-bulk relation. In particular, all quantum liquids are required to satisfy the following three conditions that are needed for the proof of boundary-bulk relation \cite{KWZ15,KWZ17}. (1) the anomaly-free bulk of a potentially anomalous quantum liquid is unique; (2) the fusion among quantum liquids (or defects in them) is well-defined (see more discussion in Remark \ref{rem:fusion}); (3) dimensional reductions via fusions are independent of the order of the fusion as the $\otimes$-excision property of factorization homology \cite{AF20}. As a consequence, the boundary-bulk relation holds for all quantum liquids by definition.

\begin{rem} \label{rem:fusion}
Without the microscopic definition of quantum liquids, it is impossible to define the fusion of quantum liquids or defects microscopically. However, in general, instead of squeezing two defects closer to each other\footnote{In general, the fusion of two defects is not just the fusion of local observables living on each defect. When two defects are getting close, more local operators are possible \cite{KZ20,KZ21b}.}, a more controllable way to realize the fusion is by cross-graining. Therefore, the scale invariance is a necessary property of a quantum liquid. By Zamolodchikov-Polchinski Theorem/Conjecture (see for example \cite{Nak14}), under some natural assumptions, scale invariance can be enhanced to conformal invariance. Therefore, higher dimensional CFT's satisfying certain finiteness or fully dualizability are good candidates for quantum liquids. 

On the other hand, the fusion of quantum liquids (or defects) can be defined macroscopically via the tensor product of their topological skeletons as certain categorical structures can be defined mathematically. Examples include the fusion of defects in topological orders \cite{KW14,KWZ15,JF20} and in a fixed 2D rational CFT \cite{FFRS07,DKR15} and among different 2D rational CFT's \cite{KYZ21,KZ21}. The compatibility between the fusion of topological skeleton and that of local quantum symmetries and the compatibility of the microscopic and macroscopic approaches were formulated as a general principle/hypothesis of the universality at RG fixed points in \cite{KZ20}. The second compatibility was known explicitly only in certain gapped lattice models \cite{KK12,KWZ15,BD19,BD20,KTZha20}.
\end{rem}

As we have mentioned in Remark \ref{rem:liquid-like}, our intuition of the `liquid-like' property of quantum liquids and defects is that they are soft enough so that it does not rigidly depend on the local geometry of spacetime similar to 2D rational CFT's and topological defects in 2D rational CFT's \cite{FFRS07}. More precisely, it means that one can bend the phases or defects without making any difference. This already implies certain fully dualizability and produces a fully extended TQFT via cobordism hypothesis \cite{BD95,Lur09}. Moreover, the `liquid-like' property also implies that the quantum liquids in our sense depend covariantly on framing.



\medskip
Now we partially define the higher category $\TO^n$ of $n$D quantum liquids. An object of $\TO^n$ is an $n$D (spacetime dimension) quantum liquid; a 1-morphism is a domain wall; a 2-morphism is a defect of codimension two; so on and so forth. An $n$-morphism is a 0D defect, which is also called an instanton. The composition of morphisms are defined by the fusion of defects (recall Remark \ref{rem:fusion}). Morphisms higher than $n$ are possible and reasonable (see Remark \ref{rem:n+1-morphism-net} for more discussion) but they are not observables in spacetime. At this stage, we do not try to define morphisms higher than $n$ precisely. However, as we proceed, certain precise results of $\TO^n$ naturally emerge.

\begin{rem} \label{rem:n+1-morphism-net}
In a physical realization of a 0D defect $\CV$, the spacetime observables on $\CV$ consists of a spacetime operator $v$ together with the actions of operators $\{ \phi \}$ living in the neighborhood of $\CV$. In other words, it is a pair $(V,v)$, where $V$ is the space of operators invariant under the action of $\{ \phi \}$. Therefore, it is natural to define a morphism $f: (V,v)\to (V',v')$ by a linear map $f:V\to V'$ such that $f$ intertwines the action of $\{ \phi \}$ and $f(v)=v'$. A more precise mathematical definition of $\{ \phi \}$ is available through the theory of defect nets (of operators) near $\CV$ (see \cite[Section 2.2]{KZ22}). In principle, even higher morphisms are possible but they are not directly observable in spacetime. 
\end{rem}


\void{
Although a morphism between two $0$D defects is mathematically natural, it is not an observable living in the spacetime. Does it have any physical meaning? Moreover, since $\{ \phi \}$ is not defined, such an intuitive understanding of $(n+1)$-morphisms in $\TO^n$ does not lead us to any precise results of $\TO^n$. We solve both problems at the same time by introducing a physical definition of a morphism between two 0D defects $a$ and $b$. First, microscopic physics provides us isomorphisms between two 0D defects so that we can tell if $a\simeq b$ or not. In particular, the expression $a\simeq b$ makes sense physically. For example, a continuous deformation of a 0D defect in spacetime produce an isomorphic 0D defect and an isomorphism. Secondly, for a generic morphism between two 0D defects, using duality (see Remark \ref{exam:n+1-morphism}), we can reduce the definition to the special case $a,b\in \Hom_{\TO^n}(1, X)$, where $X$ is a 1D defect and $1$ is the tensor unit of a set of 1D defects including $X$, i.e. $X\otimes 1\simeq X \simeq 1\otimes X$. In this case, a morphism $f: a\to b$ is a pair $(x,\xi)$, where $x$ is a 0D defect on $X$, i.e. $x\in \hom_{\TO^n}(X,X)$, and $\xi: x\circ a \simeq b$ is an isomorphism. It is illustrated in the following picture. 
\begin{center}
\begin{tikzpicture}[scale=0.8]
\draw[fill=white] (-0.1,-0.1) rectangle (0.1,0.1) node[above] {$a$} ;
\draw[dashed] (-2,0) node[left] {$1$} -- (-0.1,0) ;
\draw[thick] (0.1,0) -- (1.9,0) ; 
\draw[fill=white] (1.9,-0.1) rectangle (2.1,0.1) node[above] {$x$} ;
\draw[thick] (2.1,0) -- (5,0) ;
\draw (5,0) node[right] {$X$} ;
\end{tikzpicture}
\end{center}
Note that an isomorphisms is an example of above morphisms by setting $x=\Id_X$. The composition of such two morphisms $(x,\xi):a \to b$ and $(x',\xi'):b\to c$ is defined by $(x'\circ x, x' \circ x \circ a \simeq x' \circ b \simeq c)$. 
It is easy to see that the pair $(\Id_X, \Id_X \circ a \simeq a)$, where $\simeq$ is the unit isomorphism, defines the identity isomorphism $\Id_a$. These morphisms between 0D defects and their compositions upgrade $\TO^n$ to an $(n+1)$-category. 

\begin{rem} \label{rem:compatibility}
In Section \ref{sec:tosk}, we show the compatibility between above definition of an $(n+1)$-morphism and that before and in Remark \ref{rem:n+1-morphism-net} in a special case. General cases are beyond this work. In this work, we assume this compatibility. Actually, we only need is a physically natural fact which says that  a nonzero endomorphism of an indecomposable $n$-morphism in $\TO^n$ is invertible. 
\end{rem}

\begin{exam} \label{exam:n+1-morphism}
Consider two 1D defects $X$ and $Y$ and four 0D defects $a,b\in \Hom_{\TO^n}(X,Y)=\Hom_{\TO^n}(1,X^R\otimes Y)$, $x\in \Hom_{\TO^n}(X,X)$ and $y\in \Hom_{\TO^n}(Y,Y)$. We have $x\otimes y \in \Hom_{\TO^n}(X^R\otimes Y, X^R\otimes Y)$. The pair $(x\otimes y, y\circ a\circ x \simeq b)$ defines a morphism $a\to b$. As special cases, we have $\Id_x=(\Id_X, \eta)$, where $\eta:\Id_X\circ x \simeq x$ is the unit isomorphism; and $(x, \Id_X\circ x \simeq x)$ defines a canonical morphism $\iota_x: \Id_X \to x$, which is nonzero if $x$ is nonzero. If $X$ is indecomposable, then $\iota_x^R\circ \iota_x: \Id_X \to \Id_X$ is invertible (recall Remark \ref{rem:compatibility}). 
\end{exam}
}

\begin{rem}
In physics literature, a quantum phase almost always refers to an indecomposable (or simple) one. For the study of the categories of quantum liquids, it is important to include the direct sums of them because they naturally appear in the fusions of phases. Similar to composite anyons, a decomposable quantum liquid (or defect) can be called a composite quantum liquid (or defect).  
\end{rem}

The stacking of two quantum liquids defines a symmetric tensor product in $\TO^n$. The trivial $n$D quantum liquid, denoted by $\one^n$, defines the tensor unit. Together, they endow $\TO^n$ with a symmetric monoidal structure. 
We have $\TO^{n-1} = \Omega\TO^n$. Moreover, the time-reversal operator defines an involution $*:\TO^n\to(\TO^n)^{\op n}$. The higher category $\TO^n$ encodes the information of all topological defects in an $n$D quantum liquid $\CA$. For example, we have a monoidal higher category $\Omega(\TO^n,\CA)$, which encodes all the defects of codimension $\ge1$ in $\CA$, and we have a braided monoidal higher category $\Omega^2(\TO^n,\CA)$, which encodes all the defects of codimension $\ge2$ in $\CA$. 



\smallskip

We propose the following hypothesis. 

\begin{hyp} \label{hyp:to-cond}
$\TO^n \simeq \Sigma_*\TO^{n-1}$. 
\end{hyp}

\begin{rem}
If a quantum liquid or a defect can be obtained by $*$-condensation, it is called a $*$-condensation descendant. The category $\TO^n$ should contain all possible $*$-condensation descendants unless there is a physical law forbidding their appearance. Therefore, $\TO^n$ must be $*$-condensation-complete. It remains to show that every $n$D quantum liquid $X\in\TO^n$ is a $*$-condensate of the trivial one $\one^n$ to establish Hypothesis \ref{hyp:to-cond}. Without loss of generality, we can assume that $X$ is indecomposable. Let $f:\one^n\to X$ be a 1-morphism, i.e. a boundary of $X$ as illustrated in Figure \ref{fig:cc-QL}: 
\begin{figure}
\begin{center}
\begin{tikzpicture}[scale=0.8]
\fill[gray!20] (-3,-1.5) node[above right,black] {$X$} rectangle (3,1.5) node[below left,black] {$X$};
\draw[ultra thick, fill=white] (-0.5,0) node[above right] {$\one^n$} (0,0) circle (1);
\node[left] at (1,0) {$a$} ;
\node[right] at (-1,0) {$b$} ;
\node[] at (0,1.25) {$f$} ;
\node[] at (0,-1.2) {$g$} ;
\draw[dashed] (-3,0)   --(-1.1,0) ;
\draw[dashed] (1.1,0) --(3,0) ;
\draw (-2,0) node[above=0pt] {$\Id_X$} ;
\draw (2,0) node[above=0pt] {$\Id_X$} ;
\end{tikzpicture}
\end{center}
\caption{Condensation completion of $\TO^n$}
\label{fig:cc-QL}
\end{figure}
Viewing the boundary as a wall between $X$ and $\one^n$ yields another 1-morphism $g:X\to\one^n$. Moreover, we have evident 2-morphisms $a: f\circ g \to \Id_X$ and $b: \Id_X \to f\circ g$ as shown in the picture, where $\Id_X$ is the trivial wall between $X$ and $X$. The construction goes all the way up to $n$-morphisms and the circle  (actually a cylinder $S^1\times \mathbb{R}^{n-2}$) in Figure \ref{fig:cc-QL} close up to an $(n-1)$-sphere, which defines a 0D defect in spacetime. We believe that a proper-but-yet-unknown definition of higher morphisms allows us to continue this process to extending $(f,g,a,b\cdots)$ to a $*$-condensation $\one^n\condense X$. The same argument can be applied to any two $X,Y\in \TO^n$ and gives a condensation $X\condense Y$. 
\end{rem}


\subsection{The higher categories $\TOsk^n$} \label{sec:tosk}

In this subsection, we focus on the topological skeletons of quantum liquids and defects. 

We denote by $\TOsk^n$ the symmetric monoidal higher category of the topological skeletons of $n$D quantum liquids. That is, an object of $\TOsk^n$ is a potentially anomalous topological order; a $k$-morphism is a potentially anomalous (recall Remark\,\ref{rem:anomalous-wall}) gapped defect of codimension $k$ for $1\leq k\leq n$; and possible higher morphisms. 
The topological Wick rotation is formulated mathematically by the forgetful functor 
$$\TO^n \to \TOsk^n, \quad \CX\mapsto \CX_{\mathrm{sk}}.$$

It turns out that the higher categories $\TOsk^n$ are much more accessible than $\TO^n$. By \cite{KW14,KWZ15,JF20}, a potentially anomalous $n+$1D topological order gives a unitary multi-fusion $n$-category $\CA$ and a potentially anomalous gapped $n$D domain wall gives a unitary bimodule with a distinguished object. If $\FZ_1(\CA)$ is trivial, then $\CA$ is an anomaly-free $n+$1D topological order. In principle, we can write down $\TOsk^n$ explicitly. However, we would like to compute $\TOsk^n$ differently by using Hypothesis\,\ref{thm:TOsk} and show its consistency with above picture. This consistency provides a strong evidence of Hypothesis\,\ref{thm:TOsk}.

Similar to Hypothesis \ref{hyp:to-cond}, we propose the following hypothesis.
\begin{hyp} \label{thm:TOsk}
$\TOsk^n \simeq \Sigma_* \TOsk^{n-1}$. 
\end{hyp}


Note that $\TOsk^0$ can be identified with the coslice 1-category $\C/\Hilb$ described in Example \ref{exam:slice1}, i.e. 
$$
\TOsk^0 \simeq \C/\Hilb.
$$ 
Indeed, a 1D topological order is a unitary multi-fusion 0-category with a trivial center, i.e. an algebra $\End_\C(U)$ where $U\in \Hilb$. A potentially anomalous 0D topological order is precisely a boundary of a 1D topological order, thus can be mathematically described by a pair $(U,u)$, where $u$ is a distinguished element of $U$. Here, the data $u$ is necessary because how the elements of $\End_\C(U)$ are fused into the 0D boundary is a physical data as illustrated in the following picture. 
\begin{center}
\begin{tikzpicture}[scale=0.8]
\draw[fill=white] (-0.1,-0.1) rectangle (0.1,0.1) node[above] {$(U,u)$} ;
\draw[dashed] (-2,0) node[left] {$\C$} -- (-0.1,0) ;
\draw[thick] (0.1,0) -- (3.9,0) ; 
\draw[fill=white] (3.9,-0.1) rectangle (4.1,0.1) node[below=3pt] {$(\Hom_\C(U,V),f)$} ;
\draw[thick] (4.1,0) -- (7,0) ;
\draw (2,0) node[above=0pt] {$\End_\C(U)$} ;
\draw (6,0) node[above=0pt] {$\End_\C(V)$} ;
\end{tikzpicture}
\end{center}
Mathematically, it is natural to define a 1-morphism between two such 0D domain walls $(U,u)$ and $(V,v)$ by a linear map $f: U \to V$ such that $f(u)=v$. This gives us the category $\C/\Hilb$. It turns out that this natural definition of 1-morphisms in $\C/\Hilb$ also has a natural physical meaning. Indeed, a morphism $f: (U,u) \to (V,v)$ between potentially anomalous 0D topological orders can be physically defined by another potentially anomalous 0D topological order $(\Hom_\C(U,V),f)$, together with the following isomorphism 
\begin{align*}
\Hom_\C(U,V) \otimes_{\End_\C(U)} U &\xrightarrow{\simeq} V \\
f\otimes_{\End_\C(U)} u &\mapsto f(u) = v. 
\end{align*}
The mathematical definition and the physical definition are equivalent \cite[Section 5.3]{KWZ15}\cite{KWZ17} (see also \cite[Theorem 3.2.3]{KZ18} for another example of this type of equivalences).  

\begin{rem}
A 0D topological order is an object of the unitary symmetric fusion 0-category $\C$. There is a $*$-equivariant embedding $\C\hookrightarrow\C/\Hilb$, $v\mapsto(\C,v)$.
\end{rem}

\begin{thm} \label{cor:TOts-n}
We have the following explicit mathematical description of $\TOsk^n$: 
$$
\TOsk^n \simeq \left\{ \begin{array}{ll}
\bullet/(n+1)\Hilb \simeq ((n+1)\Hilb/\bullet)^{\op(n+1)}, & \mbox{for even $n$,} \\
(n+1)\Hilb/\bullet \simeq (\bullet/(n+1)\Hilb)^{\op(n+1)}, & \mbox{for odd $n$.}
\end{array} \right.
$$
\end{thm}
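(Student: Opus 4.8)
The plan is to reduce the statement entirely to two ingredients already in place: the recursive Hypothesis \ref{thm:TOsk} and the explicit computation of iterated $*$-deloopings of $\C/\Hilb$ in Proposition \ref{prop:hilb-slice}. The assembly is short, so the only real content lies in verifying that the $*$-structures line up along the way.

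First I would iterate Hypothesis \ref{thm:TOsk}. Since it asserts $\TOsk^n \simeq \Sigma_*\TOsk^{n-1}$ for every $n\ge1$, an immediate induction on $n$ (with trivial base $\Sigma_*^0=\id$) gives $\TOsk^n \simeq \Sigma_*^n\TOsk^0$. At this step I would check that each equivalence is symmetric monoidal and $*$-equivariant, so that applying $\Sigma_*$ repeatedly is legitimate and the composite is an equivalence in $\KarCat^*_n$; this is exactly the categorical framework in which $\Sigma_*$ was defined.

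Second, I would substitute the base-case identification $\TOsk^0 \simeq \C/\Hilb$ established just above, which yields $\TOsk^n \simeq \Sigma_*^n(\C/\Hilb)$. Finally, Proposition \ref{prop:hilb-slice} computes $\Sigma_*^n(\C/\Hilb)$ verbatim as the parity-dependent expressions in the statement, namely $\bullet/(n+1)\Hilb \simeq ((n+1)\Hilb/\bullet)^{\op(n+1)}$ for even $n$ and $(n+1)\Hilb/\bullet \simeq (\bullet/(n+1)\Hilb)^{\op(n+1)}$ for odd $n$, so chaining the three equivalences proves the theorem.

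The hard part will not be the assembly but the compatibility of the involutions feeding into it. Specifically, I must confirm that the involution on $\C/\Hilb$ used in Proposition \ref{prop:hilb-slice}, namely $(X,x)\mapsto(X^\vee,x^{\vee*})$, is precisely the one induced on $\TOsk^0$ by the time-reversal involution $*$ on quantum liquids, so that the deloopings invoked in Hypothesis \ref{thm:TOsk} and in Proposition \ref{prop:hilb-slice} are performed with matching $*$-data. This is the single place a naive argument could slip, since a mismatched involution would produce the wrong parity pattern of opposites; granting this check (handled by the setup preceding Proposition \ref{prop:hilb-slice}) and granting Hypothesis \ref{thm:TOsk} itself, the theorem follows.
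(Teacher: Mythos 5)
Your proposal matches the paper's own proof exactly: the paper derives the theorem by combining Hypothesis \ref{thm:TOsk} (iterated to give $\TOsk^n \simeq \Sigma_*^n\TOsk^0$), the identification $\TOsk^0 \simeq \C/\Hilb$ established just before the theorem, and Proposition \ref{prop:hilb-slice}. Your added care about matching the time-reversal involution on $\TOsk^0$ with the involution $(X,x)\mapsto(X^\vee,x^{\vee*})$ used in Proposition \ref{prop:hilb-slice} is a reasonable refinement of the same argument, not a different route.
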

\begin{proof}
Combine Proposition \ref{prop:hilb-slice} and Hypothesis \ref{thm:TOsk}. 
\end{proof}


It follows that $n$D potentially anomalous topological orders are classified by pairs $(X,x)$, where $X$ is a unitary $n$-category and $x$ is an object in $X$. By Remark \ref{rem:slice-indecomp}, indecomposable $n$D potentially anomalous topological orders are classified by indecomposable unitary multi-fusion $(n-1)$-categories (i.e. $\Omega(X,x)$). Gapped defects of codimension one are unitary bimodules together with a distinguished object. This is exactly what we expect.  We unravel Theorem\,\ref{cor:TOts-n} in Example \ref{exam:QL1}, \ref{exam:QL2} and \ref{exam:QL3} and show that they are consistent with results in \cite{KZ20,KZ21,KYZ21}.

Moreover, for an indecomposable object $(X,x)\in \TOsk^n$, the physical meaning of $X$ is an anomaly-free $n+$1D topological order with a gapped boundary. When $X$ is viewed as a unitary $n$-category (i.e. an object in $(n+1)\Hilb$),  it is precisely the $n$-category of all gapped boundary conditions of the $n+$1D topological order $X$. The arrow $x:\bullet \to X$ is precisely a gapped boundary of $X$. All defects in $X$ form the fusion $n$-category $\Omega((n+1)\Hilb,X)=\Fun(X,X)$ and those on the boundary $x$ form the multi-fusion $(n-1)$-category $\Omega(X,x)$. Note that $\Omega^2((n+1)\Hilb,X)=\Omega\Fun(X,X)$ is the unitary braided fusion $(n-1)$-category consisting of all topological defects of codimension $\geq 2$ in $X$. 
\begin{cor} \label{cor:bbr}
For indecomposable $(X,x)\in \TOsk^n$, we have the following boundary-bulk relation:
$$
\Omega^2((n+1)\Hilb,X)=\Omega\Fun(X,X) \simeq\FZ_1(\Omega(X,x)).
$$ 
\end{cor}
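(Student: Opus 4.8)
The plan is to reduce the asserted boundary-bulk relation to the center formula of Theorem \ref{thm:mfc-center}, once $X$ is recognized as the ($*$-)delooping of its own endomorphism category. Set $\CA := \Omega(X,x) = \Hom_X(x,x)$, where $x$ denotes the nonzero object of the unitary $n$-category $X$ singled out by the distinguished morphism. Since $(X,x)\in\TOsk^n$ is indecomposable, we may take it (using the duality in Theorem \ref{cor:TOts-n}) to be an indecomposable object of $\bullet/(n+1)\Hilb$, so that $X$ is an indecomposable unitary $n$-category and $x$ is nonzero. This is precisely the setting of Remark \ref{rem:slice-indecomp}: $\CA$ is then an indecomposable unitary multi-fusion $(n-1)$-category and there is a canonical equivalence $X\simeq\Sigma_*\CA$.

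With this identification the computation is immediate. Applying the unitary $*$-version of Theorem \ref{thm:mfc-center} to $\CA$ in the case $m=k=1$ gives
$$\FZ_1(\CA) \simeq \Omega\Fun(\Sigma_*\CA,\Sigma_*\CA).$$
Substituting $\Sigma_*\CA\simeq X$ yields $\FZ_1(\Omega(X,x))\simeq\Omega\Fun(X,X)$, which is the desired equivalence. I would also record that $\Fun(X,X)$ is a unitary multi-fusion $n$-category by the $*$-version of Proposition \ref{prop:sep-mfc}, so that $\Omega\Fun(X,X)$ is genuinely the unitary braided fusion $(n-1)$-category of codimension-$\geq 2$ defects of the bulk, matching the physical reading of the two sides (bulk center versus boundary fusion category).

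The only substantive ingredient, and hence the step deserving the most care, is the reconstruction $X\simeq\Sigma_*\CA$. It relies on the fact that in an indecomposable separable (here unitary) $n$-category every simple object is a condensate of the fixed nonzero object $x$, i.e.\ the $*$-analogues of Proposition \ref{prop:simp-condense} and Corollary \ref{cor:sigma-hom}, which is exactly what Remark \ref{rem:slice-indecomp} invokes. Beyond this, the argument is a formal consequence of the adjunction $\Sigma_*\dashv\Omega$ underlying Theorem \ref{thm:mfc-center}; the main thing to check is that each tool (condensation completion, the center formula, the functor category $\Fun$) is deployed in its unitary $*$-version, which is legitimate since all results of Section \ref{sec:separable} transfer verbatim to unitary higher categories, as asserted at the start of Section \ref{sec:unitary}. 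In the odd-$n$ case the role of coslice and slice is interchanged, but the relation is preserved under reversing morphisms because $\FZ_1$, $\Omega$ and $\Fun$ intertwine with the $\op$-operations compatibly, so no separate argument is needed.
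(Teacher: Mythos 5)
Your proposal is correct and follows essentially the same route as the paper, whose proof reads in full ``Combine the $*$-variants of Corollary \ref{cor:sigma-hom} and Theorem \ref{thm:mfc-center}'': you identify $X\simeq\Sigma_*\Omega(X,x)$ via the $*$-version of Corollary \ref{cor:sigma-hom} (as recorded in Remark \ref{rem:slice-indecomp}) and then apply the $*$-version of Theorem \ref{thm:mfc-center} with $m=k=1$. Your additional remarks on the unitary transfer and the odd-$n$ slice/coslice symmetry are accurate elaborations of what the paper leaves implicit.
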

\begin{proof}
Combine the $*$-variants of Corollary \ref{cor:sigma-hom} and Theorem \ref{thm:mfc-center}. 
\end{proof}

\begin{rem}
The boundary-bulk relation \cite{KWZ15,KWZ17} is our guiding principle when we develop the whole theory. By recovering it from our main result Theorem \ref{cor:TOts-n}, we have passed an important consistency test. Moreover, it also  suggests that the fully dualizability of quantum liquids should suffice Hypothesis\,\ref{hyp:quantum-liquid}. 
\end{rem}


Hypothesis \ref{thm:TOsk} and Theorem \ref{cor:TOts-n} have the following implications. 
\begin{thm}
The symmetric monoidal $(n+1)$-category $\TOsk^n$ is $n$-rigid. In particular, every object of $\TOsk^n$ is $n$-dualizable.
\end{thm}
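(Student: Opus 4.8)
The plan is to reduce the statement to the $*$-version of Remark \ref{rem:sigma-rigid} applied to the coslice $1$-category $\C/\Hilb$. First I would use the identification $\TOsk^0 \simeq \C/\Hilb$ together with Hypothesis \ref{thm:TOsk} iterated $n$ times to write $\TOsk^n \simeq \Sigma_*^n(\C/\Hilb)$; equivalently this equivalence is already packaged by combining Proposition \ref{prop:hilb-slice} and Theorem \ref{cor:TOts-n}, since both identify the two sides with $\bullet/(n+1)\Hilb$ (resp. $(n+1)\Hilb/\bullet$) according to the parity of $n$. Thus it suffices to prove that the iterated $*$-delooping $\Sigma_*^n$ of a $*$-condensation-complete symmetric monoidal $1$-category is $n$-rigid.

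Next I would verify the two hypotheses needed to feed $\C/\Hilb$ into the rigidity statement. Since $\C = \one_\Hilb$ is the tensor unit of the symmetric monoidal $*$-$1$-category $\Hilb$, the coslice $\C/\Hilb = \one_\Hilb/\Hilb$ inherits a symmetric monoidal structure by the general fact, recorded in the subsection on (co)slice categories, that $\one_\CC/\CC$ is $E_m$-monoidal whenever $\CC$ is. Moreover $\Hilb = 1\Hilb$ is $*$-condensation-complete, and the coslice of a ($*$-)condensation-complete $n$-category is again ($*$-)condensation-complete, using the involution on $\C/\Hilb$ fixed just before Proposition \ref{prop:hilb-slice}. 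Hence $\C/\Hilb$ is a $*$-condensation-complete symmetric monoidal $1$-category, and crucially I would \emph{not} need it to be rigid: the rigidity of the deloopings is produced entirely by the delooping construction itself.

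With these in hand I would apply the $*$-analogue of Remark \ref{rem:sigma-rigid}: for a $*$-condensation-complete symmetric monoidal $n$-category $\CC$, the iterated $*$-delooping $\Sigma_*^m\CC$ is $m$-rigid. Taking $\CC = \C/\Hilb$ (so the ambient $n$ there equals $1$) and $m = n$ yields that $\TOsk^n \simeq \Sigma_*^n(\C/\Hilb)$ is $n$-rigid, i.e. every object has a dual and every $k$-morphism has both a left and a right dual for $1 \le k < n$. The final clause, that every object is $n$-dualizable, then follows formally: an object is $1$-dualizable by $n$-rigidity, and the iterated (co)evaluation data witnessing the duality are adjointable at each intermediate level $1 \le k < n$ precisely because all such morphisms have duals, which is exactly the data packaging $n$-dualizability.

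The main obstacle is not the reduction but the foundational input. Remark \ref{rem:sigma-rigid} is extracted from the first part of the proof of \cite[Theorem 4.1.1]{GJF19}, and here I would need its genuine $*$-equivariant version. As flagged in Remark \ref{rem:*-issue}, the theory of $*$-condensations is only outlined in this work, so the delicate point is to confirm that the Gaiotto--Johnson-Freyd argument producing duals and higher adjoints from the iterated delooping survives $*$-equivariantly, in tandem with the $*$-condensation-completeness of $\C/\Hilb$. Granting the parallel $*$-theory asserted in Section \ref{sec:unitary}, these are routine transcriptions of the non-unitary statements, and the theorem follows.
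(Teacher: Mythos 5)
Your proposal is correct and is essentially the paper's own argument: the paper proves this theorem in one line as following from the $*$-variant of Remark \ref{rem:sigma-rigid}, applied (implicitly via Hypothesis \ref{thm:TOsk} and the identification $\TOsk^0 \simeq \C/\Hilb$) to $\TOsk^n \simeq \Sigma_*^n(\C/\Hilb)$. Your expanded verification that $\C/\Hilb$ is a $*$-condensation-complete symmetric monoidal $1$-category, and your caveat that the $*$-equivariant version of the Gaiotto--Johnson-Freyd rigidity argument is only granted modulo the outline in Section \ref{sec:unitary}, faithfully spell out exactly what the paper's terse proof takes for granted.
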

\begin{proof}
This follows from the $*$-variant of Remark \ref{rem:sigma-rigid}. 
\end{proof}

It follows that, according to the cobordism hypothesis \cite{BD95,Lur09}, every object of $\TOsk^n$ determines an $n$D framed extended TQFT, i.e. a symmetric monoidal functor (see \cite{Lur09} for the precise meaning of the notations)
\begin{equation} \label{eqn:tqft1}
Z: \Bord^\mathrm{fr}_n \to \TOsk^n.
\end{equation}
As we have mentioned in Subsection \ref{sec:cat-QL}, quantum liquids depend covariantly on framing. Therefore, the above functor $Z$ should automatically lift to an oriented extended TQFT
\begin{equation} \label{eqn:tqft2}
Z: \Bord^\mathrm{or}_n \to \TOsk^n.
\end{equation}
This leads to the following mathematical conjecture. See also \cite[Conjecture 1.4.6]{GJF19}.

\begin{conj}
The homotopy $SO(n)$-action on the underlying $(n+1)$-groupoid of $\TOsk^n$ is canonically trivializable.
\end{conj}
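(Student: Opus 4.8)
The plan is to exhibit the $SO(n)$-action explicitly and then trivialize it through the model furnished by Theorem \ref{cor:TOts-n}. First I would recall the source of the action: since $\TOsk^n$ is $n$-rigid (established just above), every object is fully dualizable, so by the cobordism hypothesis \cite{Lur09} the underlying $(n+1)$-groupoid (the maximal sub-$(n+1)$-groupoid) is identified with the space of framed $n$D extended TQFTs valued in $\TOsk^n$, on which $O(n)$ acts by rotating the framing. A trivialization of the restricted $SO(n)$-action is exactly a coherent family of $SO(n)$-homotopy-fixed-point structures, one on each object and compatible under all invertible morphisms; equivalently, a coherent trivialization of the framing-rotation (Serre) automorphisms. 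By Theorem \ref{cor:TOts-n} it then suffices to construct this datum on $\bullet/(n+1)\Hilb$ for even $n$ and on $(n+1)\Hilb/\bullet$ for odd $n$, and hence to understand the action on condensates of the unit inside the iterated delooping $(n+1)\Hilb = \Sigma_*^{n+1}\C$.

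Second, I would set up an induction on $n$ using Hypothesis \ref{thm:TOsk}, $\TOsk^n \simeq \Sigma_*\TOsk^{n-1}$. The inductive claim is that each application of $\Sigma_*$ supplies a canonical trivialization of one further level of the framing action on the newly produced condensates. The structural input is Remark \ref{rem:sigma-rigid}: passing to $\Sigma_*$ makes every object dualizable and every low-degree morphism left- and right-dualizable, with the (co)evaluations extracted from the condensation data. The point to establish is that these duality data can be chosen $SO$-equivariantly, so that the canonical condensation witnessing an object $X$ as a retract of a delooped unit also canonically trivializes its Serre automorphism; granting this, a trivialization built at stage $n-1$ deloops to one at stage $n$. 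The distinguished object $x:\bullet\to X$ of the (co)slice is carried along by the compatibility of the slice construction with $\Sigma_*$ recorded in Propositions \ref{prop:slice-sigma-add} and \ref{prop:hilb-slice}; in particular the alternation between slice and coslice with the parity of $n$ should encode the orientation bookkeeping that singles out the \emph{special} orthogonal action, rather than the full $O(n)$-action whose orientation-reversing part is the genuinely nontrivial time-reversal involution $*$.

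The hard part will be promoting these object-level and morphism-level trivializations to a single $SO(n)$-equivariant datum, i.e. a null-homotopy of the classifying map out of $BSO(n)$. Trivializing the generator of the action is not enough: one must produce an infinite compatible tower of coherences indexed by the cells of $BSO(n)$ and the homotopy groups of $SO(n)$, and this demands an $SO(n)$-equivariant refinement of the $*$-condensation calculus that is not developed here (cf. the strictness caveats in Remark \ref{rem:*-issue}). Indeed, the reduction above shows that the present statement is essentially a (co)slice incarnation of \cite[Conjecture 1.4.6]{GJF19}, which asserts precisely that the $SO(n)$-action on the core of an $n$-fold condensation completion is canonically trivializable; a complete proof will therefore have to run in parallel with, or wait for, a proof of that conjecture. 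I expect the genuine obstacle to be homotopy-coherence rather than the construction of the trivialization on objects and $1$-morphisms, which the explicit model makes transparent.
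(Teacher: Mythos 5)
The statement you were asked to prove is, in the paper, a \emph{conjecture}: the authors supply no proof, only the physical motivation that quantum liquids depend covariantly on framing---so the framed TQFTs \eqref{eqn:tqft1} arising from objects of $\TOsk^n$ should lift to oriented ones \eqref{eqn:tqft2}---together with the pointer to \cite[Conjecture 1.4.6]{GJF19}. Your write-up reconstructs exactly this framing: the $SO(n)$-action on the maximal sub-$(n+1)$-groupoid via the cobordism hypothesis, trivialization as coherent homotopy-fixed-point (equivalently, Serre-automorphism trivialization) data, the reduction through Theorem \ref{cor:TOts-n} and Hypothesis \ref{thm:TOsk} to iterated $*$-deloopings of $\C$, and the identification of the statement as a (co)slice incarnation of the Gaiotto--Johnson-Freyd conjecture. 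You are right, and commendably explicit, that what you have is a strategy rather than a proof; since the paper itself proves nothing here, there is no argument of theirs for you to have missed.

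That said, the gap you flag should be located precisely, because it is larger than a coherence bookkeeping problem. Your key inductive step---``the duality data can be chosen $SO$-equivariantly, so that the canonical condensation witnessing $X$ as a condensate of the unit also canonically trivializes its Serre automorphism''---is asserted, not argued. Remark \ref{rem:sigma-rigid} extracts duality data from condensation data, but nothing in the condensation calculus of \cite{GJF19} (let alone its $*$-version, whose foundations are themselves incomplete per Remark \ref{rem:*-issue}) singles out an $SO$-equivariant choice; moreover the induction presupposes a comparison between the $SO(n)$-action on the core of $\Sigma_*\CC$ and the $SO(n-1)$-action on the core of $\CC$ that is nowhere formulated, so even the base-plus-one-step skeleton of your induction is not yet a well-posed claim. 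And as you note, trivializing the generator is far from a null-homotopy of the classifying map out of $BSO(n)$: the infinite tower of coherences is exactly the content of \cite[Conjecture 1.4.6]{GJF19}, which remains open. In short, your proposal is a faithful and well-organized account of why the conjecture is plausible and how it reduces to the Gaiotto--Johnson-Freyd conjecture, but it is not a proof, and honesty about that is its chief virtue.
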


By the definition of $\TOsk^n$, all extended TQFT's arising from quantum liquids have the form \eqref{eqn:tqft2}. We expect that quantum liquids catch all the topological information of the spacetime. In mathematical language, this amounts to that the extend TQFT's \eqref{eqn:tqft2} should supply a complete invariant for compact smooth $n$-manifolds as formulated in the following conjecture.

\begin{conj}
If two $n$-morphisms $f$ and $g$ in $\Bord^\mathrm{or}_n$ are not equivalent, then there exists a symmetric monoidal functor \eqref{eqn:tqft2} such that $Z(f)$ and $Z(g)$ are not isometric in $\TOsk^n$.
\end{conj}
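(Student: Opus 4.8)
\textbf{The plan} is to pass through the cobordism hypothesis so that the family of functors \eqref{eqn:tqft2} becomes a family of \emph{objects}, and then to reduce the conjecture to a joint-faithfulness statement for that family. Concretely, the preceding theorem asserts that $\TOsk^n$ is $n$-rigid, so every object is fully dualizable; granting the preceding conjecture that the homotopy $SO(n)$-action on the underlying $(n+1)$-groupoid of $\TOsk^n$ is trivializable, the oriented cobordism hypothesis \cite{BD95,Lur09} attaches to each object $c\in\TOsk^n$, after a choice of trivialization, a symmetric monoidal functor $Z_c$ of the form \eqref{eqn:tqft2} sending the generating point to $c$. It therefore suffices to prove that the resulting family $\{Z_c\}_{c\in\TOsk^n}$ is \emph{jointly faithful} on top-dimensional morphisms: whenever $f\not\simeq g$ there is some test object $c$ with $Z_c(f)$ and $Z_c(g)$ not isometric.

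First I would organize the comparison. Since $\TOsk^n$ is $n$-rigid, all mates and adjoints exist, so for any pair of $n$-morphisms $f,g$ with a common source and target (the only case not already separated by the endpoints) the images $Z_c(f)$ and $Z_c(g)$ lie in the same hom-Hilbert-space and the question of their being isometric is well-posed. The separating power of the family is exactly the supply of test objects, which by Theorem \ref{cor:TOts-n} and Remark \ref{rem:slice-indecomp} is the full collection of pairs $(Y,y)$ with $Y$ a unitary $n$-category; on indecomposables this is nothing but the class of all unitary multi-fusion $(n-1)$-categories. The freeness of $(n+1)\Hilb=\Sigma_*^n\C$ on a single object, used in the construction of $\TOsk^n$ and in the $n$-rigidity above, is what guarantees that this reservoir of functors is as large as possible. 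The substantive task is then to show that for each inequivalent pair $f,g$ one can \emph{manufacture} such a $Y$ whose associated invariant detects the difference---ideally by exhibiting $\Bord^{\mathrm{or}}_n$ as a locally full sub-$(n+1)$-category of a power of $\TOsk^n$, after which joint faithfulness would follow from a Yoneda-type argument.

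\textbf{The hard part}, and the reason the statement is offered as a conjecture rather than a theorem, is that this last task is at bottom a problem of smooth manifold topology, not of higher algebra. The categorical machinery assembled here guarantees only that we possess an enormous reservoir of oriented invariants; it gives no a priori reason for their \emph{completeness}, and completeness genuinely fails for impoverished targets---finite or merely semisimple TQFTs cannot separate all $n$-manifolds in the relevant dimensions. A real proof would therefore have to supply an independent geometric input: either the sub-$(n+1)$-category embedding just mentioned, or a reconstruction theorem recovering the oriented diffeomorphism type of an $n$-morphism from its collection of $\TOsk^n$-valued invariants. Both routes must confront phenomena---exotic smooth structures and the failure of surgery-theoretic control in the middle dimensions---that lie outside the formalism of this paper, so I expect the conjecture to be provable only for small $n$ (for $n\le 2$ the classification of oriented surfaces makes the reconstruction route tractable) and to remain open in general.
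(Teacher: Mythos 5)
The statement you were asked to prove is presented in the paper as a conjecture: the paper contains no proof of it, only the motivating discussion that the extended TQFT's \eqref{eqn:tqft2} arising from quantum liquids should supply a complete invariant of compact smooth oriented $n$-manifolds. Measured against that, your proposal is honestly calibrated but is not a proof, and the gap is the entire mathematical content of the statement. Your setup paragraphs only re-derive what the paper already has: the $n$-rigidity of $\TOsk^n$, the framed theories \eqref{eqn:tqft1}, and the passage to oriented theories via the (itself conjectural) trivialization of the homotopy $SO(n)$-action is precisely how the paper produces the family \eqref{eqn:tqft2} in the first place. Your ``reduction'' to joint faithfulness of the family $\{Z_c\}_{c\in\TOsk^n}$ is a tautological rewording rather than a genuine reduction: by the cobordism hypothesis the functors \eqref{eqn:tqft2} are classified by objects of $\TOsk^n$ (given trivializations), so ``for every inequivalent pair $f\not\simeq g$ some $Z$ as in \eqref{eqn:tqft2} separates them'' and ``the object-indexed family is jointly faithful on $n$-morphisms'' are the same sentence. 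No separating object is ever constructed; the two routes you name (a locally full embedding of $\Bord^{\mathrm{or}}_n$ into a power of $\TOsk^n$, or a reconstruction theorem recovering the oriented diffeomorphism type from the invariants) are announced and then declared out of reach, and even the $n\le2$ case you call tractable is not carried out.

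Two further cautions on what you do assert. First, your whole framing is conditional on the preceding conjecture about the $SO(n)$-action (and implicitly on Hypothesis \ref{thm:TOsk}, through Theorem \ref{cor:TOts-n}); since those are open, you should flag that your plan proves at best a conditional statement, whereas the conjecture as stated simply posits the existence of some functor \eqref{eqn:tqft2}. Second, your diagnosis of the obstruction is correct and is consistent with why the authors state this as a conjecture: completeness cannot follow from formal properties of the target alone (indeed, semisimple TQFT's are known not to detect exotic smooth structures in dimension four, so any proof must use features of $(n+1)\Hilb$-valued theories beyond semisimplicity, or genuinely geometric input such as surgery-theoretic control). But an accurate account of why a statement is hard, together with a plan whose decisive step is conceded to lie outside the available formalism, leaves the conjecture exactly where the paper leaves it.
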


\begin{rem}
For a UMTC $\CC$, the delooping $\Sigma_*\CC$ is a unitary fusion 2-category hence defines an object of $\TOsk^3$. The extended TQFT $Z: \Bord^\mathrm{fr}_3 \to \TOsk^3$ associated to this object is essentially the same as the one defined in \cite{Zh17} which is expected to extend the Reshetikhin-Turaev TQFT associated to the UMTC $\CC$ down to dimension zero. In fact, the symmetric monoidal 4-category constructed in \cite{Zh17} is embedded in $\TOsk^3$.
\end{rem}

\begin{rem}
In view of \cite[Theorem 7.15]{JFS17} and Theorem \ref{cor:TOts-n}, the extended TQFT's \eqref{eqn:tqft2} are nothing but the oplax twisted or relative extended TQFT's with target $(n+1)\Hilb$. See also \cite{ST11,FT14,FV15}.
\end{rem}

\subsection{$\TOsk^n$ in low dimensions and 2D CFT's}
Theorem \ref{cor:TOts-n} is heavily loaded. We unravel it for a few low dimensional cases and show that they indeed reproduce the topological skeletons and the categorical information of all macroscopic observables on the 2D worldsheet of all 2D rational CFT's obtained in \cite{KYZ21} and those of gapless boundaries of 3D topological orders obtained in \cite{KZ20,KZ21}. 

\smallskip
According to \cite{KYZ21}, the topological skeletons of 2D rational CFT's form a 3-category 
$\widehat{\mathrm{MFus}^{\mathrm{ind}}_\bullet}$, which is defined and slightly simplified as follows. 
\begin{itemize}
\item A 0-morphism is an indecomposable unitary multi-fusion category $\CA$.
\item A 1-morphism $\CA\to\CB$ is a pair $(\CX,x)$, where $\CX\in \BMod_{\CA|\CB}(2\Hilb)$ and $x\in \CX$ are both nonzero. 
\item A 2-morphism between two 1-morphisms $(\CX,x),(\CX',x') : \CL\to\CM$ is a pair $(F,f)$, where $F:\CX\to\CX'$ is an $\CA$-$\CB$-bimodule $*$-functor and $f:F(x)\to x'$ is a morphism in $\CX'$.
\item A 3-morphism between two 2-morphisms $(F,f),(F',f'):(\CX,x)\to(\CX',x')$ is a bimodule natural transformation $\phi:F\to F'$ such that $f=f'\circ\phi_x$.
\end{itemize}
The physical meanings of the ingredients of $\widehat{\mathrm{MFus}^{\mathrm{ind}}_\bullet}$ is provided by an 3-equivalence\footnote{The same story can be told via the language of enriched categories (see \cite[Section\ 5.4]{KYZ21}).} $\widehat{\FZ}: \widehat{\mathrm{MFus}^{\mathrm{ind}}_\bullet} \to \widehat{\mathrm{BFus_\bullet^{\mathrm{cl}}}}$. The precise definition of the 3-category $\widehat{\mathrm{BFus_\bullet^{\mathrm{cl}}}}$ can be found in \cite[Section 5.2]{KYZ21}. Instead of recalling it, we only explain the image of $\widehat{\FZ}$ which defines the observables on the 2D worldsheet as illustrated in Figure \ref{fig:2-category}.


\begin{figure}
\begin{tikzpicture}[scale=1]
\fill[gray!20] (-5,0) rectangle (5,5) ;
\draw[line width=3.5pt, color=blue] (0,0) -- (0,2.4) node[very near start,left] {\small $[x,x]_\CU$};
\draw[line width=3.5pt, color=blue] (0,2.6) -- (0,5) node[very near end,left] {\small $[x',x']_{\CU'}$};
\draw[color=purple, line width=2pt, fill=white] (-0.1,2.4) rectangle (0.1,2.6) node[midway,left] {\small $(\CW,F,[x,x']_{\CW},\tilde{f})\,\,$} ;
\draw[->,dashed] (-4,2.8) -- (-0.3,4.2) ;
\draw[->,dashed] (-4,2.2) -- (-0.3,0.8) ;
\draw[->,dashed] (4,2.8) -- (0.3,4.2) ;
\draw[->,dashed] (4,2.2) -- (0.3,0.8) ;
\draw[->,dashed] (-0.2,4.2) -- (-0.2,2.7) node[midway,left] {\small $\tilde{f}_l$};
\draw[->,dashed] (0.2,4.2) -- (0.2,2.7) node[midway,right] {\small $\tilde{f}_l$} ;
\draw[->,dashed] (-0.2,0.8) -- (-0.2,2.3) node[midway,left] {\small $\tilde{f}_r$};
\draw[->,dashed] (0.2,0.8) -- (0.2,2.3) node[midway,right] {\small $\tilde{f}_r$} ;
\draw (-4.3,4.2) node[above=0pt] {\small $\FZ_1(\CA)$} ;
\draw (4.3,4.2) node[above=0pt] {\small $\FZ_1(\CB)$} ;
\draw (-5,2.5) node[right=0pt] {\small $[\one_\CA,\one_\CA]_{\FZ_1(\CA)}$} ;
\draw (5,2.5) node[left=0pt] {\small $[\one_\CA,\one_\CA]_{\FZ_1(\CB)}$} ;

\draw (0,0) node[below=0pt] {\scriptsize $(\CX,x)$}; 
\draw (0,5) node[above=0pt] {\scriptsize $(\CX',x')$};

\draw (-4.3,0) node[below=0pt] {\scriptsize $\CA$} ;
\draw (4.3,0) node[below=0pt] {\scriptsize $\CB$} ;

\end{tikzpicture}
\caption{Observables on the 2D worldsheet of rational CFT's}
\label{fig:2-category}
\end{figure}

\begin{enumerate}
\item 0-morphims: we have
\begin{equation} \label{eq:internal-hom}
\widehat{\FZ}: \CA \mapsto (\FZ_1(\CA), [\one_\CA,\one_\CA]_{\FZ_1(\CA)}), 
\end{equation}
where  the internal hom $[\one_\CA,\one_\CA]_{\FZ_1(\CA)}$ is a Lagrangian algebra in $\FZ_1(\CA)$. If a proper local quantum symmetry is provided, this algebra defines a 2D bulk CFT, which contains all the information of the OPE of all bulk fields and that of the modular-invariant correlation functions of all genera. 

\item 1-morphisms: we have 
\begin{equation} \label{eq:two-homs}
\widehat{\FZ}: (\CX,x) \mapsto (\CU:=\Fun_{\CA|\CB}(\CX,\CX), \,\, [x,x]_{\CU}).  
\end{equation}
This image defines a 1D wall CFT (see the blue lines in Figure \ref{fig:2-category}) between two 2D bulk CFT's, and  the internal hom $[x,x]_{\Fun_{\CA|\CB}(\CX,\CX)}$ encode the information of the OPE of fields living in the 1D wall CFT. 

\item 2-morphisms: for $(F,f): (\CX,x) \to (\CX',x')$ and $\CU'=\Fun_{\CA|\CB}(\CX',\CX')$, 
\begin{equation} \label{eq:quadruple}
\widehat{\FZ}: (F,f) \mapsto (\CW:=\Fun_{\CA|\CB}(\CX,\CX'), F, [x,x']_\CW, \tilde{f}),
\end{equation}
where $\tilde{f}: F \to [x,x']_\CW$ is the mate of $f: F(x) \to x'$. Note that this quadruple precisely defines a 0D wall between two 1D wall CFT's (see the purple square in Figure \ref{fig:2-category}). The dashed lines in Figure \ref{fig:2-category} represents the fusion of fields (or observables in the gapped cases) into the 1D walls and the 0D wall. The fusion maps $\tilde{f}_l$ and $\tilde{f}_r$ were defined in \cite[Equation (5.4),(5.5)]{KYZ21} and encode the information of how observables on the 1D wall CFT's fuse into the 0D wall. 

\item The images of the 3-morphisms can also be defined (see \cite[Section 5.2]{KYZ21}). Since they do not show up on the 2D worldsheet, we omit the discussion here. 

\end{enumerate}
Moreover, both 3-categories $\widehat{\mathrm{MFus}^{\mathrm{ind}}_\bullet}$ and $\widehat{\mathrm{BFus_\bullet^{\mathrm{cl}}}}$ are symmetric monoidal. The tensor product in $\widehat{\mathrm{MFus}^{\mathrm{ind}}_\bullet}$ is the usual Deligne tensor product $\boxtimes$ and the tensor unit is $\Hilb$. The 3-equivalence $\widehat{\FZ}$ is symmetric monoidal. 

\smallskip
Now we are ready to explain how various morphisms in $\TOsk^n$ for $n=1,2,3$ encode the topological skeletons and the categorical information of the macroscopic observables on the 2D worldsheet. 
\begin{exam} \label{exam:QL1}
When $n=1$, $\TOsk^1\simeq (\bullet/2\Hilb)^{\op 2}$. The 0-,1-morphisms in $(\bullet/2\Hilb)^{\op 2}$ determines observables in the 1D spacetime of anomaly-free 1D CFT's via the 3-equivalence $\widehat{\FZ}$ restricted to $\Omega\widehat{\mathrm{MFus}^{\mathrm{ind}}_\bullet}$. 
\begin{enumerate}
\item A 0-morphism in $(\bullet/2\Hilb)^{\op 2}$ is a pair $(\CX,x)$, where $\CX\in 2\Hilb$ and $x: \Hilb \to \CX$. 
It is precisely the topological skeleton of 1D anomaly-free CFT (i.e. its 2D bulk CFT is trivial)\footnote{According to \cite[Remark\ 5.2]{KYZ21}, such a 1D anomaly-free CFT often occurs as a consequence of a dimensional reduction process.}. It can be identified with a 1-morphism in $\widehat{\mathrm{MFus}^{\mathrm{ind}}_\bullet}$ if neither $\CX$ nor $x$ is zero. Its image under $\widehat{\FZ}$ (see Eq. (\ref{eq:two-homs}) for $\CA=\CB=\Hilb$) defines the spacetime observables of the 1D anomaly-free CFT associated to the pair $(\CX,x)$. 

\item A 1-morphism $(\CX,x) \to (\CX',x')$ in $(\bullet/2\Hilb)^{\op 2}$ is a pair $(F,f)$, where $F\in \Fun(\CX,\CX')$ and $f: F\circ x\to x'$ is a natural transformation. It is precisely a 2-morphism in $\widehat{\mathrm{MFus}^{\mathrm{ind}}_\bullet}$. Its image is the quadruple in (\ref{eq:quadruple}), which is precisely the observables on a 0D domain wall between two anomaly-free 1D CFT's. 

\end{enumerate}
The compositions of morphisms in $(\bullet/2\Hilb)^{\op 2}$ are compatible with the physical fusions of defects in $\TOsk^1$ due to the functoriality of $\widehat{\FZ}$. As a consequence, we see that the 3-equivalence $\widehat{\FZ}$ is compatible with Theorem \ref{cor:TOts-n} for $n=1$. 
\end{exam}

\begin{exam} \label{exam:QL2}
When $n=2$, $\TOsk^2\simeq \bullet/3\Hilb$. 
\begin{enumerate}
\item A 0-morphism in $\bullet/3\Hilb$ is a pair $(A,a)$, where $A \in 3\Hilb$ and $a:\bullet \to A$ can be viewed as an object in the unitary 2-category $A$. The unitary multi-fusion 1-category $\CA:=\Omega(A,a)$ defines the topological skeleton of the anomalous 2D topological order determined by $(A,a)$. Without loss of generality, we can assume that $A$ is indecomposable. Then $A\simeq \Sigma_*\CA \simeq \RMod_{\CA}(2\Hilb)$. 
Applying topological Wick rotation, we see that $\CA$ (or equivalently, ${}^{\FZ_1(\CA)}\CA$) defines the topological skeleton of an anomaly-free 2D CFT \cite[Section 5.2]{KYZ21} or an anomaly-free 2D gapped liquid phase (recall Remark \ref{rem:sk-diff-QL}). By the 3-equivalence $\widehat{\FZ}$, the internal hom $[\one_{\CA},\one_{\CA}] \in \FZ_1(\CA)$ in the image of $(\CA,\one_{\CA})$ (recall Equation (\ref{eq:internal-hom})) encodes all the information of the OPE in the 2D bulk CFT if a proper local quantum symmetry is provided. 

\item A 1-morphism $(A,a)\to (B,b)$ in $\bullet/3\Hilb$ is a pair $(\CX,x)$, where $\CX\in \Fun(A,B)\simeq \BMod_{\CA|\CB}(2\Hilb)$, $\CB=\Omega(B,b)$ and $x: \CX \circ a \to b$ is a 2-natural transformation (i.e. a right $\CB$-module 1-functor $\CX(a) \to b$). Then $x^\vee: \CB \to \CX$ defines an object in the $\CA$-$\CB$-bimodule category $\CX$. Using $\widehat{\FZ}$ (recall Equation (\ref{eq:two-homs}), the image of $(\CX,x^\vee)$ under $\widehat{\FZ}$ gives precisely a 1D domain wall between two 2D CFT's (or two gapped liquid phases) and the internal hom $[x^\vee,x^\vee]$ encodes all the information of the OPE in the 1D wall CFT if a proper local quantum symmetry is provided. 

\item A 2-morphism $(\CX,x) \to (\CX',x')$ in $\bullet/3\Hilb$ is a pair $(F,f)$, where $F: \CX \to \CX'$ is a 2-natural transformation (i.e. an $\CA$-$\CB$-bimodule 1-functor) and $f: x \to x'\circ F_a$ is a 2-morphism in $B$. This determines a 2-morphism $(F,\tilde{f}^\vee)$ in $\widehat{\mathrm{MFus}^{\mathrm{ind}}_\bullet}$, where $\tilde{f}:x'^\vee \to F_a(x^\vee)$ is canonically defined by $f$ via dualities. Its image of $(F,\tilde{f}^\vee)$ under $\widehat{\FZ}$ gives precisely a 0D domain wall between two 1D domain walls. 

\end{enumerate}
The compositions of morphisms in $\bullet/3\Hilb$ are compatible with the physical fusions of defects. In other words, the 3-equivalence $\widehat{\FZ}$  is compatible with Theorem \ref{cor:TOts-n} for $n=2$. 
\end{exam}

\begin{exam} \label{exam:QL3}
When $n=3$, $\TOsk^3\simeq 4\Hilb/\bullet$. It is similar to the $n=2$ case. So we omit some similar parts, but focus on the appearance of the topological skeletons of the gapless boundaries of 3D topological orders in $\TOsk^3$. 
\begin{enumerate}
\item An object in $4\Hilb/\bullet$ is a pair $(\CX,x)$, where $\CX$ is a unitary 3-category and $x:\CX \to \bullet$ can be viewed as an object in the unitary 3-category $\CX^\op$. The 2-category $\CA_x:=\Hom_{4\Hilb}(x,x)\simeq \Omega(\CX^\op,x)$ is a unitary multi-fusion 2-category, and is precisely the topological skeleton of a 3D quantum liquid associated to $(\CX,x)$. If $\CX$ is indecomposable, we have $\Sigma_* \CA_x^\rev \simeq \CX$ and $\Omega^2(4\Hilb,\CX) \simeq \FZ_1(\CA_x^\rev)$. If $\FZ_1(\CA_x) \simeq 2\Hilb$, then 3D quantum liquid associated to $(\CX,x)$ is an anomaly-free 3D topological order. 

\item A 1-morphism between two pairs $(\CX,x)$ and $(\CY,y)$ is a pair $(f,\phi)$, where $f\in \Fun(\CX,\CY)$ and $\phi: x \to y\circ f$. Without loss of generality, we can assume that $\CX$ and $\CY$ are both indecomposable. We have $\CX\simeq \LMod_{\CA_x}(3\Hilb)$, $\CY\simeq \LMod_{\CA_y}(3\Hilb)$, $\Fun(\CX,\CY)\simeq\BMod_{\CA_y|\CA_x}(3\Hilb)$, $\Fun(\CX,\bullet)\simeq\RMod_{\CA_x}(3\Hilb)$ and $\Fun(\CY,\bullet)\simeq\RMod_{\CA_y}(3\Hilb)$. It is clear that $\phi:x\to y\circ f$ determines an object $\phi$ in the $\CA_y$-$\CA_x$-bimodule 2-category $f$. Then $\Omega(f,\phi)$ defines the topological skeleton of a domain wall between two 3D quantum liquids associated to $(\CX,x)$ and $(\CY,y)$. It is illuminating to consider two special cases. 
\begin{enumerate}
\item When $(\CY,y)=(\bullet,\Id_\bullet)$, $x$ is indecomposable and $\FZ_1(\CA_x) \simeq 2\Hilb$, we have $\Omega\CA_x$ is a UMTC, $f$ is the 2-category of all boundaries of the 3D topological order $(\CX,x)$ and $\phi\in f$ is a single boundary. If $\phi$ is indecomposable, $\CP:=\Omega(f,\phi)$ is a unitary fusion 1-category and precisely the topological skeleton of the boundary $\phi$. The unital $\CA_x$-action on $f$ induces a braided monoidal functor $\beta: \Omega\CA_x^\rev \to\FZ_1(\CP)$. If $\beta$ is an equivalence, then $\CP$ defines the topological skeleton of a gapped boundary of the 3D topological order $(\CX,x)$; otherwise, $\CP$ defines the topological skeleton of a gapless boundary of $(\CX,x)$ and the associated enriched fusion 1-category is ${}^{(\Omega\CA_x^\rev)'_{\FZ_1(\CP)}}\CP$ (recall Subsection \ref{sec:twr}). If $\Omega\CA_x$ is a chiral UMTC, then $\beta$ is never an equivalence. In this case, $(f,\phi)$ is a gapless boundary. This example shows that $\TOsk^3$ contains the topological skeletons (as $\Omega(f,\phi)$) of all gapped and gapless boundaries of all 3D topological orders. 

\item When $(\CY,y)=(\bullet,\Id_\bullet)$, $x$ is indecomposable and $\Omega^2(4\Hilb,\CX) \simeq \FZ_1(\Sigma_*\CE)$ for a symmetric fusion 1-category $\CE$, then the 3D quantum liquid associated to $(\CX,x)$ can be a 3D SPT/SET (or symmetry-breaking) order with the internal symmetry $\CE$, $\CA_x$ defines its topological skeleton, and $f$ is the 2-category of all boundaries of the 3D SPT/SET (or symmetry-breaking) order and $\phi\in f$ is a single boundary, and the associated enriched 2-category is ${}^\CB f$ where $\CB=\Fun_{\CA_x}(f,f)$. We will provide more details of ${}^\CB f$ elsewhere. 
\end{enumerate}

\end{enumerate}
The compositions of morphisms in $4\Hilb/\bullet$ is compatible with the fusion of defects in spacetime. 
\end{exam}


\subsection{Detecting local quantum symmetries} \label{sec:detect-lqs}


As argued in Subsection \ref{sec:twr}, a quantum liquid $\CX\in\TO^n$ can be described by a pair $(\CXlqs,\CXsk)$, where $\CXsk\in\TOsk^n$ is the topological skeleton and $\CXlqs$ is the local quantum symmetry. 
However, it turns out that the local quantum symmetry $\CXlqs$ is not observable in the higher category $\TO^n$ in the sense that different local quantum symmetries may be related by invertible morphisms in $\TO^n$.

For example, consider the non-chiral $E_8$ CFT which we also denote by $E_8$, viewed as a 2D gapless quantum liquid. The boundary $E_8$ CFT supplies a 1-morphism between $E_8$ and the trivial 2D quantum liquid $\one^2$ in $\TO^2$. This 1-morphism turns out to be invertible by considering higher morphisms, which provides a way to gap a narrow strap with two boundaries $E_8$ CFT's (with boundary modes moving in two opposite directions). In other words, we obtain $E_8\simeq\one^2$ in $\TO^2$. Therefore, $E_8$ can not be distinguished from $\one^2$ in $\TO^2$ at all.


In general, there is no hope to recover local quantum symmetries from the equivalence type of the higher category $\TO^n$. Recall that the morphisms of $\TO^n$ are defined by domain walls. The composition of two morphisms are defined by dimensional reduction during which local quantum symmetries are usually considerably reduced. When applying dimensional reduction all the way to dimension zero, the information of local quantum symmetries are completely lost. We propose the following hypothesis:

\begin{hyp} \label{hyp:TOsk=TO}
The forgetful functor $\TO^0\to\TOsk^0$ is an equivalence. 
\end{hyp} 

Invoking Hypothesis \ref{hyp:to-cond} and Hypothesis \ref{thm:TOsk}, we obtain the following physical prediction:
\begin{cor} \label{thm:TOsk=TO}
The forgetful functor $\TO^n\to\TOsk^n$ is an equivalence. 
\end{cor}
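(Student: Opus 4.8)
The plan is to prove the statement by induction on $n$, using the two delooping hypotheses together with the functoriality of the $*$-condensation completion $\Sigma_*$. The base case $n=0$ is precisely Hypothesis \ref{hyp:TOsk=TO}, which asserts that the forgetful functor $\TO^0\to\TOsk^0$ is an equivalence. For the inductive step, I would assume that the forgetful functor $\TO^{n-1}\to\TOsk^{n-1}$ is an equivalence and aim to deduce the same for $\TO^n\to\TOsk^n$.

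The key observation is that the forgetful functor---which is the mathematical incarnation of topological Wick rotation---should be compatible with the delooping constructions of Hypothesis \ref{hyp:to-cond} and Hypothesis \ref{thm:TOsk}. Concretely, one wants the square
$$\xymatrix{
  \Sigma_*\TO^{n-1} \ar[r] \ar[d]_\simeq & \Sigma_*\TOsk^{n-1} \ar[d]^\simeq \\
  \TO^n \ar[r] & \TOsk^n
}$$
to commute up to equivalence, where the top arrow is $\Sigma_*$ applied to the forgetful functor $\TO^{n-1}\to\TOsk^{n-1}$ and the two vertical arrows are the equivalences supplied by Hypotheses \ref{hyp:to-cond} and \ref{thm:TOsk}. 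Granting this compatibility, the conclusion is immediate: the functor $\Sigma_*$ is a left adjoint to the looping functor $\Omega$ and hence preserves equivalences, so $\Sigma_*$ of the inductive-hypothesis equivalence $\TO^{n-1}\to\TOsk^{n-1}$ is again an equivalence. Tracing around the square then shows that $\TO^n\to\TOsk^n$ is an equivalence, closing the induction.

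The main obstacle is establishing the commutativity of this square, i.e.\ the naturality of the forgetful functor with respect to $*$-condensation completion. Physically this is the statement that forgetting the local quantum symmetry commutes with the dimensional-reduction/condensation process by which higher-dimensional quantum liquids are assembled from lower-dimensional ones; since both $\TO^n$ and $\TOsk^n$ are built as iterated $*$-condensation completions starting from dimension zero, and the forgetful functor is essentially the identity on the underlying defect data while discarding only the local quantum symmetry, one expects the forgetful functor to respect the $*$-condensation structure on the nose. At the present level of rigor---where $\TO^n$ itself is only partially defined and Hypotheses \ref{hyp:to-cond}, \ref{thm:TOsk} and \ref{hyp:TOsk=TO} are taken as inputs---this compatibility is most naturally packaged as the assertion that the forgetful functor commutes with $\Omega$, from which the commuting square follows by the universal property of $\Sigma_*$ as the left adjoint of $\Omega$. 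This is the step on which the argument genuinely rests, the remainder being formal.
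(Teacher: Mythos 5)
Your proposal matches the paper's argument: the paper's proof is literally the one-line ``Invoking Hypothesis \ref{hyp:to-cond} and Hypothesis \ref{thm:TOsk}'' applied on top of Hypothesis \ref{hyp:TOsk=TO}, i.e.\ exactly your induction via iterated $\Sigma_*$-delooping from the base case $n=0$. You are in fact more careful than the paper, since you isolate the tacit compatibility of the forgetful functor with the delooping equivalences (your commuting square), which the paper leaves implicit at the same heuristic level as the hypotheses themselves.
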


It is enlightening to compare $\TO^n$ with the 1-category of Riemannian manifolds and smooth maps. In this 1-category, Riemannian metrics are not observable because two objects are equivalent if and only if they are diffeomorphic. In order to detect Riemannian metrics, one has to modify the 1-category properly. For example, one uses only isometric maps rather than all smooth maps. In the shrunk 1-category, the metric on a Riemannian manifold $M$ can be recovered by the morphisms from the segments of the Euclidean line to $M$, aka geodesic lines. Under the above analog, $\TOsk^n$ is compared with the 1-category of smooth manifolds and smooth maps; local quantum symmetries are compared with Riemannian metrics.

\smallskip

In our situation, to detect local quantum symmetries we have to modify the higher categories $\TO^n$ by separating transparent domain walls from other invertible ones. 
Roughly speaking, a domain wall $\CW$ between two quantum liquids or defects $\CX$ and $\CY$ is transparent if $\CX$ and $\CY$ can be identified by a local unitary transformation of quantum states such that $\CW$ is a trivial domain wall. For example, the 2D chiral $E_8$ CFT is an invertible but not transparent domain wall between an invertible 3D topological order and the trivial 3D quantum liquid.
Once the transparent domain walls are specified, one is able to recover the information of local quantum symmetries in certain categorical structures. We will come back to this issue in a subsequent paper.


\void{
\begin{figure}
\begin{tikzpicture}[scale=0.9]
\fill[gray!20] (-2,0) rectangle (2,2) ;
\draw[ultra thick,blue] (0,0) -- (2,2) node[midway,above] {\small $P$};
\draw[fill=white] (-0.1,-0.1) rectangle (0.1,0.1) node[midway,below] {\small $a$} ;
\draw[ultra thick,black]  (-2,0) -- (-0.1,0) node[midway,below] {\small $Y$};
\draw[ultra thick,black]  (0.1,0) -- (2,0) node[midway,below] {\small $X$} ;
\draw (-1.5,1.2) node[above=0pt] {\small $\FZ(Y)$} ;
\draw (1.5,0.4) node[above=0pt] {\small $\FZ(X)$} ;
\end{tikzpicture}
\caption{A map between two potentially anomalous quantum liquids}
\label{fig:QL-map}
\end{figure}
}


\begin{thebibliography}{999}

\bibitem[AF20]{AF20}
D. Ayala, J. Francis, 
{\it A Factorization Homology Primer, Handbook of Homotopy Theory}, 
Chapman and Hall/CRC, New York, NY, U.S.A., 2020, \arXiv{1903.10961}.

\bibitem[BD95]{BD95}
	J. C. Baez and J. Dolan, 
	{\em Higher dimensional algebra and topological quantum field theory}. 
	J. Math. Phys., 36:6073--6105, 1995. 
	\arXiv{q-alg/9503002}.


\bibitem[BCDP22]{BCDP22}
F. Benini, C. Copetti, L. Di Pietro, 
{\it Factorization and Global Symmetries in Holography}, \arXiv{2203.09537}.

\bibitem[BD19]{BD19}
A. Bullivant, C. Delcamp, 
{\it Tube algebras, excitations statistics and compactification in gauge models of topological phases},
Journal of High Energy Physics, 2019 (10), 2019. \arXiv{1905.08673}

\bibitem[BD20]{BD20}
A. Bullivant, C, Delcamp
{\it Excitations in strict 2-group higher gauge models of topological phases}, 
Journal of High Energy Physics, 2020(1), 2020. \arXiv{1909.07937}.

\bibitem[Cha05]{C05}
C.~Chamon,
{\it Quantum glassiness in strongly correlated clean systems: An example of topological overportection}, 
Phys. Rev. Lett. 94 (2005) 040402. 

\bibitem[CGW10a]{CGW10}
X. Chen, Z.-C. Gu and X.-G. Wen,
{\it Local unitary transformation, long-range quantum entanglement, wave function renormalization, and topological orders}, 
Phys. Rev. B 82, 155138 (2010).
\arXiv{1004.3835}.

\bibitem[CGW10b]{CGW10b}
X. Chen, Z.C. Gu, X.G. Wen,
{\it Classification of Gapped Symmetric Phases in 1D Spin Systems}, Phys. Rev. B 83, 035107 (2011)

\bibitem[CGLW13]{CGLW13}
X. Chen, Z.-C. Gu, Z.-X. Liu, X.-G. Wen, 
{\it Symmetry protected topological orders and the group cohomology of their symmetry group}, Phys. Rev. B 87 155114 (2013).
\arXiv{1106.4772}.

\bibitem[CLW11]{CLW11}
X. Chen, Z.-X. Liu, X.-G. Wen, 
{\it Two-dimensional symmetry-protected topological orders and their protected gapless edge excitations}, 
Phys. Rev. B 84 235141 (2011).
\arXiv{1106.4752}.


\bibitem[CJKYZ20]{CJKYZ20}
W.-Q. Chen, C.-M. Jian, L. Kong, Y.-Z. You, H. Zheng,
{\it A topological phase transition on the edge of the 2d $\mathbb{Z}_2$ topological order}, Phys. Rev. B 102, 045139 (2020), \arXiv{1903.12334}



\bibitem[CR16]{CR16}
N. Carqueville and I. Runkel, 
{\it Orbifold completion of defect bicategories}, Quantum Topol. 7 (2016), 203-279 \arXiv{1210.6363}

\bibitem[Dav10]{D10}
A. Davydov, 
{\it Modular invariants for group-theoretic modular data I}, 
J. Algebra 323 (2010) 1321-1348


\bibitem[DKR15]{DKR15}
A. Davydov, L. Kong, and I. Runkel, 
{\it Functoriality of the center of an algebra}, 
Adv. Math. 285 (2015) 811-876, \arXiv{1307.5956}

\bibitem[DM82]{DM82}
	P. Deligne and J. Milne,
	{\em Tannakian categories}. 
	Lecture Notes in Mathematics 900 (1982). 
	\url{http://www.jmilne.org/math/xnotes/tc.pdf}.


\bibitem[DR18]{DR18}
C. L. Douglas and D. J. Reutter, 
{\em Fusion 2-categories and a state-sum invariant for 4-manifolds}, \arXiv{1812.11933}.


\bibitem[DSPS20]{DSPS20}
	C. L. Douglas, C. Schommer-Pries and N. Snyder,
	{\em Dualizable tensor categories}. 
	Memoirs of the AMS, 2020. 
	\arXiv{1312.7188}.


\bibitem[ENO05]{ENO05}
	P. Etingof, D. Nikshych and V. Ostrik,
	{\em On fusion categories}.
	Ann. Math. 162 (2005), 581--642.
	\arXiv{math/0203060}.

\bibitem[FV15]{FV15}
	D. Fiorenza and A. Valentino,
	{\em Boundary conditions for topological quantum field theories, anomalies and projective modular functors}.
	Comm. Math. Phys. 338, 1043-1074 (2015).
	\arXiv{1409.5723}.

\bibitem[FT14]{FT14}
	D. S. Freed and C. Teleman,
	{\em Relative quantum field theory}. 
	Comm. Math. Phys. 326, 459--476 (2014). \arXiv{1212.1692}.

\bibitem[FFRS07]{FFRS07}
J. Fr\"{o}hlich, J. Fuchs, I. Runkel, C. Schweigert,
{\it Duality and defects in rational conformal field theory}, Nucl. Phys. B 763, 354-430 (2007) 

\bibitem[GJF19]{GJF19}
	D. Gaiotto and T. Johnson-Freyd,
	{\em Condensations in higher categories}. 
	2019. 
	\arXiv{1905.09566}.

\bibitem[GW09]{GW09}
Z.-C. Gu, X.-G. Wen, 
{\it Tensor-entanglement-filtering renormalization approach and symmetry-protected topological order}, 
Phys. Rev. B 80, 155131 (2009) \arXiv{0903.1069v2}.

\bibitem[Haa11]{H11}
J.~Haah,
{\it Local stabilizer codes in three dimensions without string logical operators}, 
Phys. Rev. A, 83 (2011) 042330

\bibitem[Hua08]{Hua08}
Y.-Z. Huang, 
{\it Rigidity and modularity of vertex tensor categories}, Commun. Contemp. Math. 10 (2008) 871.

\bibitem[HK07]{HK07}
Y.-Z. Huang, L. Kong, 
{\it Full field algebras}, Commun. Math. Phys. 272 (2007) 345-396, \arXiv{math/0511328}.

\bibitem[JF22]{JF20}
	T. Johnson-Freyd,
	{\em On the classification of topological orders}, Commun. Math. Phys. (2022). https://doi.org/10.1007/s00220-022-04380-3, \arXiv{2003.06663}.

\bibitem[JFS17]{JFS17}
	T. Johnson-Freyd and C. Scheimbauer,
	{\em (Op)lax natural transformations, twisted field theories, and ``even higher'' Morita categories}. 
	Adv. Math., 307:147--223, 2 2017. \arXiv{1502.06526}.

\bibitem[JW20]{JW20}
W. Ji, X.-G. Wen, 
{\it Categorical symmetry and noninvertible anomaly in symmetry-breaking and topological phase transitions}, Phys. Rev. Res. 2, 033417 (2020).

\bibitem[KK12]{KK12}
A. Kitaev, L. Kong,
{\it Models for Gapped Boundaries and Domain Walls}, 
Commun. Math. Phys. 313, 351-373 (2012)



\bibitem[Kon14]{K14}
L. Kong, 
{\it Anyon condensation and tensor categories}, Nucl. Phys. B 886 (2014) 436-482; Erratum and addendum: “Anyon condensation and tensor categories” [Nucl. Phys. B 886 (2014) 436-482], Nucl. Phys. B 973 (2021) 115607; see also a refinement \arXiv{1307.8244v7}.

\bibitem[KLWZZ20a]{KLWZZ20a}
L. Kong, T. Lan, X.-G. Wen, Z.-H. Zhang and H. Zheng,
{\em Classification of topological phases with finite internal symmetries in all dimensions},
J. High Energ. Phys., 2020, 93 (2020)  \arXiv{2003.08898}.


\bibitem[KR09]{KR09}
L. Kong and I. Runkel, 
{\it Cardy Algebras and Sewing Constraints, I}, Commun. Math. Phys. 292, 871-912 (2009) 
\arXiv{0807.3356}

\bibitem[KLWZZ20b]{KLWZZ20b}
L. Kong, T. Lan, X.-G. Wen, Z.-H. Zhang and H. Zheng,
{\em Algebraic higher symmetry and categorical symmetry: A holographic and entanglement view of symmetry}, Phy. Rev. Research, 2, 043086 (2020).
\arXiv{2005.14178}.


\bibitem[KTZho20]{KTZ20}
L. Kong, Y. Tian, S. Zhou, 
{\it The center of monoidal 2-categories in 4D Dijkgraaf-Witten theory}, 
Adv. Math. 360 (2020) 106928  \arXiv{1905.04644v2}


\bibitem[KTZha20]{KTZha20} 
L. Kong, Y. Tian, Z.-H.. Zhang, {\it Defects in the 3-dimensional toric code model form a braided fusion 2-category}, J. High Energ. Phys. 2020, 78 (2020)

\bibitem[KW14]{KW14}
L. Kong and X.-G. Wen, 
{\em Braided fusion categories, gravitational anomalies, and the mathematical framework for topological orders in any dimensions}, \arXiv{1405.5858}.

\bibitem[KWZ15]{KWZ15}
L. Kong, X.-G. Wen and H. Zheng, 
{\em Boundary-bulk relation for topological orders as the functor mapping higher categories to their centers}, \arXiv{1502.01690}.

\bibitem[KWZ17]{KWZ17}
L. Kong, X.-G. Wen and H. Zheng, 
{\em Boundary-bulk relation in topological orders}, Nucl. Phys. B 922 (2017), 62--76.
\arXiv{1702.00673}.

\bibitem[KWZ22]{KWZ21}
L. Kong, X.-G. Wen and H. Zheng, 
{\it One dimensional gapped quantum phases and enriched fusion categories}, 
J. High Energ. Phys. 2022, 22 (2022)
\arXiv{2108.08835}



\bibitem[KYZ21]{KYZ21}
L. Kong, W. Yuan, H. Zheng,
{\it Pointed Drinfeld Center Functor},
Commun. Math. Phys. 381, 1409-1443 (2021), \arXiv{1912.13168}

\bibitem[KYZZ21]{KYZZ21}
L. Kong, W. Yuan, Z.-H. Zhang, H. Zheng,
{\it Enriched monoidal categories I: centers}, \arXiv{2104.03121}


\bibitem[KZ18a]{KZ18c}
L. Kong and H. Zheng,
{\it The center functor is fully faithful}, 
Adv. Math. 339 (2018) 749–779, \arXiv{1507.00503}

\bibitem[KZ18b]{KZ18}
L. Kong and H. Zheng,
{\em Drinfeld center of enriched monoidal categories}, Adv. Math. 323
(2018) 411. \arXiv{1704.01447}.


\bibitem[KZ18c]{KZ18b}
L. Kong and H. Zheng, 
{\it Gapless edges of 2d topological orders and enriched monoidal
categories}, Nucl. Phys. B 927 (2018) 140 \arXiv{1705.01087} 



\bibitem[KZ20]{KZ20}
L. Kong and H. Zheng,
{\em A mathematical theory of gapless edges of 2d topological orders, Part I}, 
J. High Energ. Phys. 2020, 150 (2020). \arXiv{1905.04924}.

\bibitem[KZ21a]{KZ21}
L. Kong and H. Zheng,
{\em A mathematical theory of gapless edges of 2d topological orders, Part II}, 
Nucl. Phys. B 966 (2021), 115384.
\arXiv{1912.01760}.


\bibitem[KZ21b]{KZ21b}
L. Kong and H. Zheng,
{\it Categories of quantum liquids II}, \arXiv{2107.03858}

\bibitem[KZ22]{KZ22}
L. Kong and H. Zheng,
{\it Categories of quantum liquids III}, \arXiv{2201.05726}

\bibitem[LW05]{LW05}
M. A. Levin, X.-G. Wen,
{\it String-net condensation: A physical mechanism for topological phases}, Phys. Rev. B 71, 045110 (2005)

\bibitem[Lur09]{Lur09}
	J. Lurie,
	{\em On the classification of topological field theories}. 
	In Current developments in mathematics, 2008, pages 129--280. Int. Press, Somerville, MA, 2009.
	\arXiv{0905.0465}.

\bibitem[Lur14]{Lur14}
	J. Lurie, 
	{\it Higher Algebra}, 2014, \url{http://www.math.ias.edu/~lurie/papers/HA.pdf}. 


\bibitem[MP17]{MP17}
S. Morrison, D. Penneys, 
{\em Monoidal categories enriched in braided monoidal categories}, 
International Mathematics Research Notices, Vol. 2017, No. 00, (2017) 1--53.
\arXiv{1701.00567}.

\bibitem[Nak14]{Nak14}
Y. Nakayama, 
{\it Scale invariance vs conformal invariance}, Physics Reports, Vol. 569 (2015) 1-93, 
\arXiv{1302.0884}. 


\bibitem[nlab]{nlab}
A list of proposed definitions of a weak $n$-category and references: \url{https://ncatlab.org/nlab/show/n-category}.

\bibitem[ST11]{ST11}
	S. Stolz and P. Teichner, 
	{\em Supersymmetric field theories and generalized cohomology}. 
	In Mathematical foundations of quantum field theory and perturbative string theory, volume 83 of Proc. Sympos. Pure Math., pages 279--340. Amer. Math. Soc., Providence, RI, 2011. 
	\arXiv{1108.0189}.

\bibitem[SM16]{SM16}
B. Swingle and J. McGreevy, 
{\it Renormalization group constructions of topological quantum liquids and beyond}, 
Phys. Rev. B 93 (2016) 045127 \arXiv{1407.8203}

\bibitem[SPGC11]{SPGC11}
 N. Schuch, D. Pe\'{r}ez-Garc\'{i}a, I. Cirac, 
 {\it Classifying quantum phases using matrix product states and projected entangled pair states}, 
 Phys. Rev. B 84 (2011) 165139

\bibitem[TW19]{TW19}
R. Thorngren and Y. Wang, 
{\it Fusion Category Symmetry I: Anomaly In-Flow and Gapped Phases}, \arXiv{1912.02817}


\bibitem[XZ22]{XZ22}
R. Xu and Z.-H. Zhang, 
{\it Categorical descriptions of 1-dimensional gapped phases with abelian onsite symmetries}, \arXiv{2205.09656} 


\bibitem[Wen90]{Wen90}
X.-G. Wen, 
{\it Topological orders in rigid states}. Int. J. Mod. Phys. B 4 (1990) 239--271. 

\bibitem[Wen02]{Wen02}
X.-G. Wen, 
{\it Quantum orders and symmetric spin liquids}, Phys. Rev., B65, 165113 (2002).
\arXiv{cond-mat/0107071}.

\bibitem[Wen17]{Wen17}
X.-G. Wen, 
{\it Zoo of quantum-topological phases of matter}, Rev. Mod. Phys. 89, 41004 (2017).
\arXiv{1610.03911}.

\bibitem[Wen19]{Wen19}
X.-G. Wen, 
{\em Choreographed entanglement dances: Topological states of quantum matter}, 
Science 363 (2019) eaal3099.
\arXiv{1906.05983}.

\bibitem[ZW15]{ZW15}
B. Zeng, X.-G. Wen, 
{\it Gapped quantum liquids and topological order, stochastic local transfor- mations and emergence of unitarity}, Phys. Rev. B 91 (2015) 125121


\bibitem[Zhe17]{Zh17}
	H. Zheng,
	{\em Extended TQFT arising from enriched multi-fusion categories}.
	\arXiv{1704.05956}.

\end{thebibliography}
\end{document}